\newtheorem{thm}{\protect\theoremname}
\theoremstyle{plain}
\newtheorem{lem}[thm]{\protect\lemmaname}
\theoremstyle{plain}
\theoremstyle{plain}
\newtheorem*{lem*}{\protect\lemmaname}
\theoremstyle{plain}
\theoremstyle{plain}
\newtheorem{cor}[thm]{\protect\corollaryname}
\newtheorem{defn}[thm]{Definition}
\newcommand{\ketbra}[2]{\lvert #1 \rangle \! \langle #2 \rvert}
\newcommand{\norm}[1]{\left\lVert#1\right\rVert}
  \providecommand{\corollaryname}{Corollary}
  \providecommand{\lemmaname}{Lemma}
  \providecommand{\propositionname}{Proposition}
  \providecommand{\remarkname}{Remark}
\providecommand{\theoremname}{Theorem}
\DeclareMathOperator{\Tr}{tr}
\DeclareMathOperator*{\Exp}{{\mathbb{E}}}
\newcommand{\Or}{\mathcal{O}}
\newcommand{\tr}{\mathrm{tr}}
\newcommand{\ad}{\mathrm{ad}}
\newcommand\CoAuthorMark{\footnotemark[\arabic{footnote}]}
\title{Learning many-body Hamiltonians with\\ Heisenberg-limited scaling}
\author[1]{Hsin-Yuan Huang \footnote{These authors contributed equally to this work.}}
\author[1,2]{Yu Tong \protect\CoAuthorMark}
\author[2,3]{Di Fang}
\author[4]{Yuan Su}
\affil[1]{Institute for Quantum Information and Matter, California Institute of Technology}
\affil[2]{Department of Mathematics, University of California, Berkeley}
\affil[3]{Simons Institute for the Theory of Computing, University of California, Berkeley}
\affil[4]{Microsoft Quantum}
\begin{document}

\maketitle

\begin{abstract}
Learning a many-body Hamiltonian from its dynamics is a fundamental problem in physics. In this work, we propose the first algorithm to achieve the Heisenberg limit for learning an interacting $N$-qubit local Hamiltonian. After a total evolution time of~$\mathcal{O}(\epsilon^{-1})$, the proposed algorithm can efficiently estimate any parameter in the $N$-qubit Hamiltonian to $\epsilon$-error with high probability. The proposed algorithm is robust against state preparation and measurement error, does not require eigenstates or thermal states, and only uses $\mathrm{polylog}(\epsilon^{-1})$ experiments. In contrast, the best previous algorithms, such as recent works using gradient-based optimization or polynomial interpolation, require a total evolution time of $\mathcal{O}(\epsilon^{-2})$ and $\mathcal{O}(\epsilon^{-2})$ experiments. Our algorithm uses ideas from quantum simulation to decouple the unknown $N$-qubit Hamiltonian~$H$ into noninteracting patches, and learns $H$ using a quantum-enhanced divide-and-conquer approach. We prove a matching lower bound to establish the asymptotic optimality of our algorithm.
\end{abstract}

\clearpage
\tableofcontents
\clearpage

\section{Introduction}

Learning an unknown Hamiltonian $H$ from its dynamics $U(t) = e^{-iHt}$ is an important problem that arises in quantum sensing/metrology \cite{de2005quantum,valencia2004distant,leibfried2004toward,bollinger1996optimal,lee2002quantum,mckenzie2002experimental,holland1993interferometric,wineland1992spin,caves1981quantum}, quantum device engineering \cite{boulant2003,innocenti2020,ben2020,shulman2014,sheldon2016,sundaresan2020,zhang2022scalable}, and quantum many-body physics \cite{wiebe2014a,wiebe2014b,verdon2019,burgarth2017, wang2017, kwon2020, wang2020, cotler2021emergent, choi2021emergent, huang2020predicting}.
In quantum sensing/metrology, the Hamiltonian $H$ encodes signals that we want to capture.
A more efficient method to learn $H$ implies the ability to extract these signals faster, which could lead to substantial improvement in many applications, such as microscopy, magnetic field sensors, positioning systems, etc.
In quantum computing, learning the unknown Hamiltonian $H$ is crucial for calibrating and engineering the quantum device to design quantum computers with a lower error rate.
In quantum many-body physics, the unknown Hamiltonian $H$ characterizes the physical system of interest.
Obtaining knowledge of $H$ is hence crucial to understanding microscopic physics.
A central goal in these applications is to find the most efficient approach to learning $H$.

In this work, we focus on the task of learning many-body Hamiltonians describing a quantum system with a large number of constituents.
For concreteness, we consider an $N$-qubit system.
Given any unknown $N$-qubit Hamiltonian $H$, we can represent $H$ in the following form,
\begin{equation}
    H = \sum_{E \in \{I, X, Y, Z\}^{\otimes N}} \lambda_E E,
\end{equation}
where $\lambda_E \in \mathbb{R}$ are the unknown parameters.
The goal of learning the unknown Hamiltonian $H$ is hence equivalent to learning $\lambda_E$ for each $N$-qubit Pauli operator $E$.
In previous works on learning many-body Hamiltonians \cite{HaahKothariTang2021optimal,yu2022,hangleiter2021,FrancaMarkovichEtAl2022efficient,ZubidaYitzhakiEtAl2021optimal,BaireyAradEtAl2019learning,GranadeFerrieWiebeCory2012robust,gu2022practical,wilde2022learnH,KrastanovZhouEtAl2019stochastic}, in order to reach an $\epsilon$ precision in estimating the parameters $\lambda_E$, the number of experiments and the total time required to evolve the system have a scaling of at least $\epsilon^{-2}$.
However, the $\epsilon^{-2}$ precision scaling is likely not the best-possible scaling for learning an unknown many-body Hamiltonian $H$ from dynamics.

In quantum sensing/metrology, the scaling of $\epsilon^{-2}$ for learning an unknown parameter to $\epsilon$ error is known as the standard quantum limit.
For simple classes of Hamiltonians, such as when $H$ contains only one unknown parameter or when $H$ describes a single-qubit system, one can surpass the standard quantum limit using quantum-enhanced protocols \cite{giovannetti2011advances,Zhou2017AchievingTH, degen2017quantum,holland1993interferometric,leibfried2004toward,de2005quantum}.
The true limit set by the basic principles of quantum mechanics is known as the Heisenberg limit, which gives a scaling of $\epsilon^{-1}$.
Assuming quantum mechanics is true, the Heisenberg limit states that the scaling of the \textit{total evolution time} must be at least of order $\epsilon^{-1}$.
If a protocol uses $J$ experiments, where the $j$-th experiment uses the unknown Hamiltonian evolution $e^{-iH t_{j, 1}}, \ldots, e^{-iH t_{j, K_j}}$ for some time $t_{j, 1}, \ldots, t_{j, K_j}$, then the total evolution time is defined as
\begin{equation}
    T \triangleq \sum_{j=1}^J \sum_{k=1}^{K_j} t_{j, k}.
\end{equation}
Other measures of complexity, e.g., the number of experiments, could surpass the $\epsilon^{-1}$ precision scaling, but that does not imply that the Heisenberg limit is beaten \cite{giovannetti2011advances, degen2017quantum}.

There are two well-established quantum-enhanced approaches for achieving the Heisenberg limit in learning simple Hamiltonians, such as a single-qubit Hamiltonian $H = \omega Z$ with unknown parameter $\omega$.
The first approach \cite{lee2002quantum,bollinger1996optimal,leibfried2004toward} considers evolving a highly-entangled state over $\ell = \mathcal{O}(\epsilon^{-1})$ copies of the system under $\ell$ copies of the unknown Hamiltonian dynamics $U(t)^{\otimes \ell}$.
The second approach \cite{de2005quantum,higgins2007entanglement,KimmelLowYoder2015robust} considers long-time coherent evolution with time $t = \mathcal{O}(\epsilon^{-1})$ over a single copy of the system.
However, both approaches are challenging to apply in many-body systems with a large system size $N$ and many unknown parameters.
The difficulty stems from the many-body interactions in the Hamiltonian $H$.
As time $t$ becomes larger, the entanglement growth in $e^{-i t H}$ will cause all the unknown parameters in $H$ to tangle with one another.
Furthermore, the many-body entanglement can be seen as a form of decoherence,
which kills the quantum enhancement.
To prevent the system from becoming too entangled, prior work on learning many-body Hamiltonians focuses on a short time $t$, which loses the quantum enhancement and obtains at best the standard quantum limit scaling as $\epsilon^{-2}$.

\begin{figure*}[t]
\centering
\includegraphics[width=0.98\textwidth]{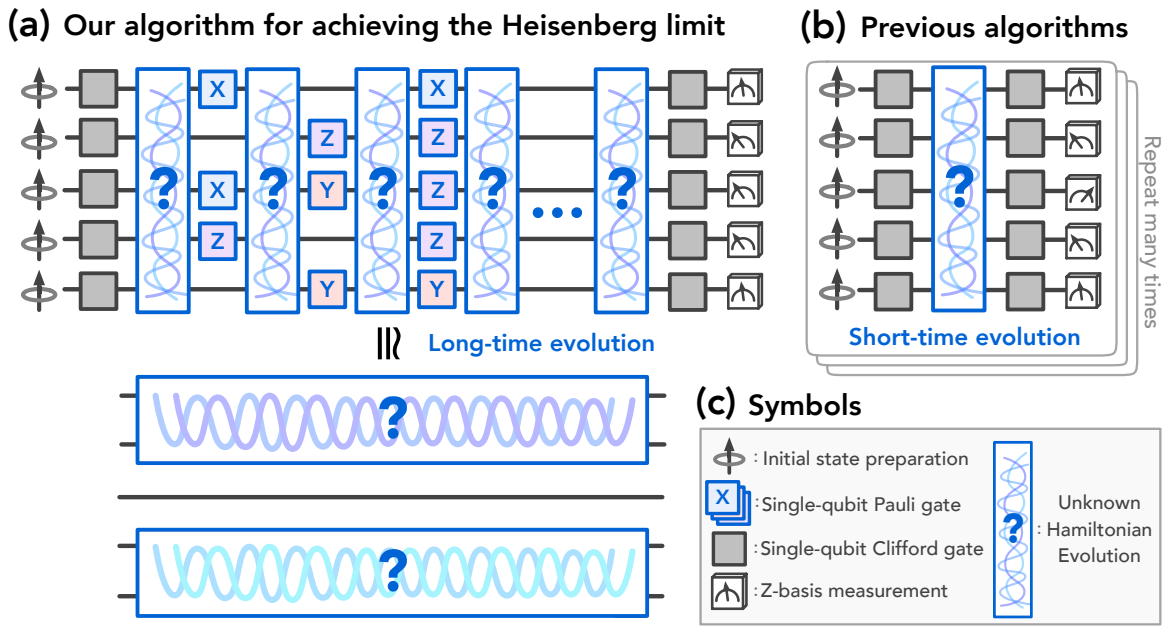}
    \caption{
    Algorithms for learning many-body Hamiltonians.
    \emph{(a)~Our algorithm for achieving the Heisenberg limit~$\epsilon^{-1}$:} We use ideas from quantum simulation to decouple the unknown Hamiltonian $H$ into non-interacting patches, where the unknown local Hamiltonian on each patch has known eigenvectors.
    We then perform long-time coherent evolution for each non-interacting patch to learn the unknown parameters.
    One only needs $\mathcal{O}(\mathrm{polylog}(\epsilon^{-1}))$ experiments and a total evolution time of $\mathcal{O}(\epsilon^{-1})$.
    \emph{(b)~Previous algorithms for achieving the standard quantum limit~$\epsilon^{-2}$:} Previous methods \cite{HaahKothariTang2021optimal,FrancaMarkovichEtAl2022efficient,gu2022practical,wilde2022learnH} repeatedly run a short-time evolution under the unknown Hamiltonian $H$ incoherently for many times. One needs $\mathcal{O}(\epsilon^{-2})$ experiments and a total evolution time of $\mathcal{O}(\epsilon^{-2})$.
    \emph{(c) Symbols:} The symbols used in (a, b). The unknown Hamiltonian evolution is $U(t) = e^{-i t H}$.
    \label{fig:LearnH}}
\end{figure*}

In this paper, we propose the first learning algorithm to achieve the Heisenberg limit for learning interacting many-body Hamiltonian.
We prove that the proposed algorithm can learn a model of an unknown $N$-qubit local Hamiltonian $H$ after a total evolution time of
\begin{equation}
    T = \mathcal{O}\left( \epsilon^{-1} \log(\delta^{-1}) \right)
\end{equation}
which is independent of the system size $N$, such that for any parameter in the unknown $N$-qubit Hamiltonian~$H$, the algorithm can estimate the parameter to at most $\epsilon$ error with probability at least $1 - \delta$.
The proposed algorithm only uses $\mathcal{O}\left(\mathrm{polylog}\left(\epsilon^{-1}\right) \log\left(\delta^{-1}\right) \right)$ experiments.
Furthermore, after running the experiments, the classical computational time of the proposed learning algorithm to estimate all parameters only needs to be of $\mathcal{O}(N \mathrm{polylog}(\epsilon^{-1}) \log(\delta^{-1}))$.
In quantum sensing/metrology, the failure probability $\delta$ is usually considered to be a fixed constant, e.g., $0.01$.
In this setting, our algorithm achieves a scaling of $\mathcal{O}(\epsilon^{-1})$ saturating the Heisenberg limit.

The proposed algorithm is robust against state preparation and measurement (SPAM) errors.
To establish the optimality of the proposed algorithm, we prove a matching lower bound of
\begin{equation}
    T = \Omega\left( \epsilon^{-1} \log(\delta^{-1}) \right)
\end{equation}
for any learning algorithm robust against SPAM error.
The lower bound can be seen as an algorithmic proof of the Heisenberg limit with the failure probability $\delta$ taken into account.

Our learning algorithm has the additional advantage of using only single-qubit Clifford gates, and not requiring eigenstates or thermal states of the Hamiltonian $H$.
The shortcomings are that the total evolution time $T$ would have an explicit dependence on $N$ to achieve $\epsilon^{-1}$ scaling in learning quantum systems with all-to-all long-range interactions,
and that the precision $\epsilon$ it can achieve is limited by how fast we can apply single-qubit Pauli gates.

\section{Main results}

Given a system size $N$.
We focus on learning an unknown $N$-qubit Hamiltonian 
$H = \sum_a \lambda_a E_a$ that can be written as a linear combination of few-body terms $E_a$, where each qubit is acted on by $\mathcal{O}(1)$ of the few-body terms.
Such a Hamiltonian can always be written as
\begin{equation} \label{eq:low-int-H}
    H = \sum_{a=1}^M \lambda_a E_a,
\end{equation}
where $\lambda_1, \ldots, \lambda_M$ are the unknown parameters and $S = \{E_1, \ldots, E_M\} \subseteq \{I, X, Y, Z\}^{\otimes N}$ is a subset of $N$-qubit Pauli operators.
Each Pauli operator $E_a$ acts nontrivially on $k = \mathcal{O}(1)$ qubits and each qubit is acted on by $\mathcal{O}(1)$ of the Pauli operators in $S$.
The number of unknown parameters is equal to $M = |S| = \Theta(N)$.
We refer to this class of Hamiltonians as \textit{low-interaction Hamiltonians} following \cite{HaahKothariTang2021optimal, gu2022practical}. 
This class of Hamiltonians includes geometrically-local Hamiltonians as a special case and is also referred to as bounded-degree local Hamiltonians \cite{anshu2021improved, harrow2017extremal} in the literature. 
Following \cite{HaahKothariTang2021optimal, gu2022practical}, we assume that $S$ is fixed and known.

We consider algorithms that can learn from experiments involving the unknown $N$-qubit Hamiltonian dynamics $U(t) = e^{-i H t}$.
Each experiment prepares an initial state with an arbitrary number of ancillas, evolves under an interleaving sequence of unknown Hamiltonian dynamics and controllable quantum circuit,
\begin{equation}
V_{K+1} \, U(t_K) \, V_K \ldots V_2 \, U(t_1) \, V_1    
\end{equation}
where $K$ is some integer, $t_1, \ldots, t_K$ are the evolution times, and $V_1, \ldots, V_{K+1}$ are the controllable circuits,
and ends with a POVM measurement.
This is similar to definitions considered in \cite{aharonov2022quantum, chen2022exponential, huang2022foundations, huang2022quantum}.
To model SPAM error, we assume that the actual initial state and the actual POVM implemented are only approximately equal to the ideal initial state and ideal POVM.

We consider a simple set of experiments, where the initial state is a noisy all-zero state $\ket{0^N}$, each controllable circuit $V_k$ is a layer of single-qubit Clifford gates, and the POVM is a noisy computational basis measurement.
We refer to these experiments as single-qubit Clifford experiments.
We give a learning algorithm with a rigorous upper bound on the total evolution time, as stated in the theorem below (more detailed statements can be found in Theorems \ref{thm:ham_learn_partial_pauli_twirling} and \ref{thm:ham_learn_Trotter} (in Sections \ref{sec:est_all_bases_clusters} and \ref{sec:diagonal_trotter} respectively) on the number of Clifford gates and experiments needed).

\begin{thm} \label{thm:main-up}
There is a learning algorithm robust to SPAM error and restricted to single-qubit Clifford experiments that achieves the following.
For any unknown $N$-qubit low-interaction Hamiltonian $H = \sum_{a=1}^M \lambda_a E_a$ with $|\lambda_a| \leq 1$, after a total evolution time $T = \mathcal{O}(\epsilon^{-1} \log(\delta^{-1}))$,
the learning algorithm can obtain estimates $\hat{\lambda}_a$ from the experiments, such that
\begin{equation}
    \Pr\left[ \left| \hat{\lambda}_a - \lambda_a \right| \leq \epsilon \right] \geq 1 - \delta,
\end{equation}
for all $a \in \{1, \ldots, M\}$.
The classical computational time to generate all the estimates $\hat{\lambda}_a, \forall a \in \{1, \ldots, M\}$ from experimental data is $\mathcal{O}(N \mathrm{polylog}(\epsilon^{-1}) \log(\delta^{-1}))$.
\end{thm}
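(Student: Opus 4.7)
The plan is to execute the three-stage program suggested by Figure~\ref{fig:LearnH}(a): decouple $H$ into a direct sum of constant-size cluster Hamiltonians, diagonalize each cluster Hamiltonian so the unknown parameters appear as phases on computational-basis states, and then extract those phases with Heisenberg-limited precision using a SPAM-robust phase-estimation subroutine. Because $H$ is a low-interaction Hamiltonian, its interaction hypergraph has bounded degree, and I can choose $O(1)$ color families of disjoint constant-size patches so that every Pauli term $E_a$ in $H$ lies inside a patch of at least one family.

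For a fixed family, I would apply the partial Pauli twirl $(1/|G|)\sum_{g \in G} g^{\dagger}\, e^{-iH\tau}\, g$ with $G$ a suitable subgroup of single-qubit Paulis supported \emph{outside} the patches, implemented by deterministic interleaving of single-qubit Clifford layers with short $H$-evolutions; a first-order BCH-type analysis shows that this kills every term crossing a patch boundary up to a Trotter remainder of order $\tau$, yielding effective evolution under the block-diagonal Hamiltonian $\bigoplus_j H_{C_j}$ (Theorem~\ref{thm:ham_learn_partial_pauli_twirling}). Inside a cluster $C_j$, $H_{C_j}$ acts on $O(1)$ qubits and has a known Pauli support. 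A second layer of Clifford conjugation rotates a chosen target term $\lambda_a E_a$ into a tensor product of $Z$'s, and one more Trotterized decoupling (Theorem~\ref{thm:ham_learn_Trotter}) isolates its diagonal contribution, so that a computational-basis input $\ket{b}$ evolves as $e^{-it\mu(b)}\ket{b}$, with $\mu(b)$ a known linear functional of $\{\lambda_b : E_b \subseteq C_j\}$.

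With coherent access to $e^{-it\mu(b)}$, I would invoke a SPAM-robust phase-estimation subroutine in the style of \cite{KimmelLowYoder2015robust}, using a geometric schedule $t=1,2,4,\ldots,\Theta(\epsilon^{-1})$ with $O(\log(\delta^{-1}))$ repetitions at each scale; this yields an $\epsilon$-accurate estimator of $\mu(b)$ using $O(\mathrm{polylog}(\epsilon^{-1})\log(\delta^{-1}))$ experiments and total Hamiltonian evolution time $O(\epsilon^{-1}\log(\delta^{-1}))$. Choosing enough $b$'s and inverting an $O(1)$-dimensional linear system recovers each $\lambda_a$; because patches in one family are disjoint, all clusters of that family are processed inside a single experiment, so no factor of $N$ enters $T$. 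The main obstacle is to push all residual errors below the Heisenberg scale \emph{simultaneously}: a first-order twirl contributes per-step operator-norm error of order $\tau$, which naively accumulates to $\Theta(1)$ over total time $T = \Theta(\epsilon^{-1})$ and would destroy the scaling. I would resolve this by using a symmetric higher-order product formula, choosing the step size $\tau$ small enough that the accumulated effective-Hamiltonian error is $O(\epsilon)$, and verifying that the induced phase error remains below $\epsilon$ uniformly across the geometric time schedule; the resulting blow-up in the number of Clifford layers is polynomial in $\epsilon^{-1}$, but since Clifford layers are not counted in the total Hamiltonian evolution time, the $O(\epsilon^{-1}\log(\delta^{-1}))$ budget is preserved. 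The final $O(N\,\mathrm{polylog}(\epsilon^{-1})\log(\delta^{-1}))$ classical post-processing cost is then immediate from solving $\Theta(N)$ independent $O(1)$-dimensional linear systems.
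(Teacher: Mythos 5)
Your high-level program matches the paper's: color the cluster interaction graph into $O(1)$ families of disjoint patches, twirl with single-qubit Paulis supported outside the patches (and then inside, to select a diagonal), and extract the coefficients via SPAM-robust phase estimation on pairs of product eigenstates with a geometric time schedule. The structure is the right one, and your final post-processing cost argument is also correct.

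There is, however, a genuine gap in the way you propose to control the reshaping error, and it concerns both feasibility and necessity. You correctly observe that a per-step error of order $\tau$ would be fatal, and you propose to fix it with a ``symmetric higher-order product formula'' plus step size small enough to make the accumulated Hamiltonian error $O(\epsilon)$. First, higher-than-second-order symmetric product formulas (Suzuki recursion) require evolving backward in time under the unknown Hamiltonian, i.e.\ access to $e^{+iHt}$, which you do not have; the paper explicitly points this out and restricts to first-order randomized (qDRIFT-type) and second-order deterministic Trotter. Second, you do not actually need $O(\epsilon)$ accuracy in the effective dynamics. The robust-phase-estimation routine of \cite{KimmelLowYoder2015robust} tolerates an additive error of up to $1/\sqrt{8}$ in the measured probabilities. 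So a constant-level bound on the reshaping error suffices: with the qDRIFT-style twirl the error is $O(t^2/r)$, so $r=O(t^2)=O(\epsilon^{-2})$ Clifford layers suffice per experiment (and $r=O(\epsilon^{-3/2})$ for second-order Trotter), and these layers do not count toward the Hamiltonian evolution time. If you insisted on $O(\epsilon)$ error instead of a constant, you would burn polynomially more Clifford layers than necessary, but, more importantly, you would be trying to build a formula you cannot implement.

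A second, more substantive point you do not address is why the per-step error has no $N$ dependence. A naive worst-case analysis of the Trotter/qDRIFT remainder scales with the total number of Hamiltonian terms, i.e.\ with $N$, and would destroy the Heisenberg-limited budget. The paper's key technical step (Lemmas~\ref{lem:decoupling_err_bound} and~\ref{lem:OCt_and_barOC_diff} and their Trotter analogue) is a Heisenberg-picture argument showing that for a \emph{local} observable $O_C$ and an effective Hamiltonian that has already been decoupled into noninteracting patches, only nested commutators anchored at $C$ contribute, and their number is bounded by the constant degree $\mathfrak{d}$ rather than $N$. Your statement that ``a first-order BCH-type analysis shows that this kills every term crossing a patch boundary up to a Trotter remainder of order $\tau$'' does not establish this $N$-independence, and without it the claim $T=O(\epsilon^{-1}\log(\delta^{-1}))$ cannot be concluded from your argument.
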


We also prove the following matching lower bound for any learning algorithm that can execute arbitrary quantum experiments involving the unknown Hamiltonian dynamics $U(t) = e^{-i H t}$ adaptively based on previous experiments.

\begin{thm} \label{thm:main-down}
Suppose there is a learning algorithm robust to SPAM error that achieves the following.
For any unknown $N$-qubit low-interaction Hamiltonian $H = \sum_{a=1}^M \lambda_a E_a$ with $|\lambda_a| \leq 1$, after a total evolution time~$T$, the learning algorithm can obtain estimates $\hat{\lambda}_a$ from the experiments, such that
\begin{equation}
    \Pr\left[ \left| \hat{\lambda}_a - \lambda_a \right| \leq \epsilon \right] \geq 1 - \delta,
\end{equation}
for all $a \in \{1, \ldots, M\}$.
Then, we have $T = \Omega(\epsilon^{-1} \log(\delta^{-1}))$.
\end{thm}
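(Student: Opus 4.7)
My plan is to reduce Theorem~\ref{thm:main-down} to a binary hypothesis test between two nearby Hamiltonians and then combine a unitary triangle inequality with the Bretagnolle--Huber inequality to extract both the $\epsilon^{-1}$ and $\log(\delta^{-1})$ factors simultaneously. Pick any single term $E_a \in S$ and consider $H_0 = 0$ versus $H_1 = 3\epsilon E_a$ (for $\epsilon \leq 1/3$); both are low-interaction with $|\lambda_a| \leq 1$. Under the assumed estimation guarantee the decision rule ``report $H_0$ iff $\hat\lambda_a < 3\epsilon/2$'' has Type~I and Type~II errors each at most $\delta$, so the joint distributions $P_0, P_1$ of all classical experimental outcomes satisfy $d_{TV}(P_0, P_1) \geq 1 - 2\delta$. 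The Bretagnolle--Huber inequality then gives
\[ D(P_0 \,\|\, P_1) \;\geq\; \log\!\frac{1}{4\delta} \;=\; \Omega(\log \delta^{-1}). \]

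The core of the proof is to upper bound $D(P_0 \,\|\, P_1)$ by $\mathcal{O}(\epsilon T)$. For the $j$-th experiment with evolution times summing to $T_j$, a triangle inequality on products of unitaries gives $\|U_{H_0}^{(j)} - U_{H_1}^{(j)}\| \leq 3\epsilon T_j$, and hence $\|\rho_0^{(j)} - \rho_1^{(j)}\|_1 = \mathcal{O}(\epsilon T_j)$ on the pre-measurement states. To convert trace distance to KL, I invoke SPAM-robustness: the algorithm must succeed even if, say, the measurement is preceded by constant-strength depolarizing noise, so every outcome probability may be assumed bounded below by a constant $q > 0$. I then split experiments by size. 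Each experiment with $\epsilon T_j \leq 1$ contributes $\mathcal{O}((\epsilon T_j)^2 / q)$ by a reverse-Pinsker-type inequality, and these sum to $\mathcal{O}(\epsilon T)$ via $\sum_j (\epsilon T_j)^2 \leq (\max_j \epsilon T_j) \cdot \epsilon T \leq \epsilon T$. Each experiment with $\epsilon T_j > 1$ contributes only $\mathcal{O}(\log(1/q))$ (the uniform bound from bounded outcome probabilities) but already costs $T_j > \epsilon^{-1}$ of the evolution budget, so there are at most $\epsilon T$ of them and they again sum to $\mathcal{O}(\epsilon T)$. Assembling via the chain rule for KL across adaptive experiments yields $D(P_0 \,\|\, P_1) = \mathcal{O}(\epsilon T)$. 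Combined with the previous display this forces $\epsilon T = \Omega(\log \delta^{-1})$, as desired.

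The hard part is precisely this linear-in-$\epsilon T$ upper bound on $D(P_0 \,\|\, P_1)$: a global reverse-Pinsker argument applied to the joint state would give only an $(\epsilon T)^2$ bound and hence the weaker $T = \Omega(\epsilon^{-1}\sqrt{\log \delta^{-1}})$. Obtaining the sharp linear factor requires the per-experiment case split above together with a careful specification of the adversarial SPAM noise so that it demonstrably lies within the error class the algorithm is assumed to tolerate; this last step is what ensures the argument genuinely exploits the \emph{robust to SPAM} hypothesis in the theorem.
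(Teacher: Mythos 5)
Your reduction to a binary hypothesis test ($H_0=0$ versus $H_1=3\epsilon E_a$), the use of Bretagnolle--Huber to extract the $\log(\delta^{-1})$ factor, and the per-experiment unitary triangle inequality giving $\|\rho_0^{(j)}-\rho_1^{(j)}\|_1 = \mathcal{O}(\epsilon T_j)$ all parallel the paper's strategy in spirit. The genuine gap is in the reverse-Pinsker step: you assert that constant-strength depolarizing noise on the measurement bounds every outcome probability below by a universal constant $q>0$, and then bound the per-experiment KL by $\mathcal{O}((\epsilon T_j)^2/q)$ (short experiments) or $\mathcal{O}(\log(1/q))$ (long experiments). That assertion is false. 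Under the depolarizing model, $M_i$ is replaced by $(1-\eta)M_i + \eta \Tr(M_i)\,I/2^{N'}$, so the outcome probability is only bounded below by $\eta \Tr(M_i)/2^{N'}$, where $N'$ is the number of qubits in the ancilla-augmented register, which the algorithm is free to make as large as it likes. For a rank-one projective POVM the floor is $\eta/2^{N'}$, not a constant. Concretely, one can construct a single experiment whose two outcome distributions have total-variation distance $\Theta(\epsilon T_j)$ (consistent with the trace-norm bound) yet KL divergence $\Theta(\epsilon T_j\cdot N')$, by placing a $\Theta(\epsilon T_j)$-size probability bump on an outcome whose depolarized floor is $\eta/2^{N'}$. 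Both of your per-experiment caps, and with them the conclusion $D(P_0\|P_1)=\mathcal{O}(\epsilon T)$, fail for this reason.

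The paper avoids KL entirely and works with total variation directly. It proves the per-experiment bound $\mathrm{TV}(p_+,p_-) \leq (1-\eta)\min\bigl(2\epsilon\, t(E),\,1\bigr)$, where the $(1-\eta)$ prefactor is clean because the $\eta$-mixing part of the noisy POVM cancels in the \emph{difference} of outcome probabilities, so no floor on individual probabilities is ever needed. It then tracks the overlap $1-\mathrm{TV}$ inductively over the adaptive learning tree, combining a submultiplicativity property of $1-\mathrm{TV}$ with the elementary inequality $1-(1-\eta)x\geq\eta^x$ for $x\in[0,1]$, $\eta\in(0,1/2)$, to obtain $1-\mathrm{TV}(p_+^{(\mathcal{T})},p_-^{(\mathcal{T})})\geq\eta^{2\epsilon T}$; combined with $\mathrm{TV}\geq 1-2\delta$ from LeCam this gives the bound with no divergence-conversion step at all. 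If you want to keep an information-divergence formulation, a workable substitute for KL is the Bhattacharyya affinity $\sum_i\sqrt{P(i)Q(i)}$: it lies in $[0,1]$, composes multiplicatively across adaptive rounds by the same tree induction, and is lower bounded by $\eta$ under depolarized measurements regardless of $N'$; that repair essentially reproduces the paper's argument, whereas the KL route as written does not close.
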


While we focus on the complexity to estimate \emph{any individual} parameter to $\epsilon$ error with probability at least $1 - \delta$,
one could also consider the estimation of \emph{all} parameters to $\epsilon$ error with probability at least $1 - \delta'$.
By choosing $\delta = \delta' / N$ and using union bound, the former guarantee implies the latter.
Hence, our proposed algorithm has the guarantee that after a total evolution time of $T = \mathcal{O}(\epsilon^{-1} \log(N / \delta'))$,
it can estimate all parameters in the unknown $N$-qubit Hamiltonian $H$ to $\epsilon$ error with probability at least $1 - \delta'$.

We also show that the proposed algorithm can learn from very few experiments.
The number of experiments only needs to be of
\begin{equation}
    \mathcal{O}\left(\mathrm{polylog}\left(\epsilon^{-1}\right) \log\left(\delta^{-1}\right) \right),
\end{equation}
which is significantly lower than $\Theta(\epsilon^{-1})$.
This does not mean that we have surpassed the Heisenberg limit since the limit is concerned with the total evolution time.
In addition to the total evolution time and the number of experiments, we also show that the proposed algorithm uses $\mathcal{O}(\epsilon^{-1.5} \mathrm{polylog}(\epsilon^{-1}) \log(\delta^{-1}))$ layers of single-qubit Clifford gates, which results in a total of $\mathcal{O}(N \epsilon^{-1.5} \mathrm{polylog}(\epsilon^{-1}) \log(\delta^{-1}))$ single-qubit Clifford gates.

\section{Proof ideas}
\label{sec:overview}

In this section, we provide ideas for designing the proposed learning algorithm and establishing the proof of the main results.
All parts except for the last are devoted to Theorem~\ref{thm:main-up} on an efficient learning algorithm.
The last part is on the lower bound given in Theorem~\ref{thm:main-down}.

\subsection{Reshaping an unknown Hamiltonian}

A key technique used throughout the design of the learning algorithm is the idea of reshaping an unknown Hamiltonian using Hamiltonian simulation techniques.
Recall that given a set of Hamiltonians $H_1, \ldots, H_K$ and the ability to implement the unitaries $e^{-i t H_1}, \ldots, e^{-i t H_K}$,
many Hamiltonian simulation techniques allow one to implement the unitary approximately
\begin{equation}
    e^{-i t \sum_{k=1}^K H_k}.
\end{equation}
Note that these approximation formulas are valid for unitaries, and no knowledge of the underlying Hamiltonian is required.
As such, they apply to the learning problem considered here.

For example, a randomized Hamiltonian simulation algorithm known as qDRIFT \cite{Campbell2019random,BerryChildsEtAl2020time,chen2021concentration} considers an approximation (as a quantum channel) given by
\begin{equation}
    e^{-i t \sum_{k=1}^K H_k} \approx e^{-i (t/r) H_{k_r}} \ldots e^{-i (t/r) H_{k_1}},
\end{equation}
where $r$ is an integer that sets the approximation error, $k_1, \ldots, k_r$ are independent random variables sampled according to some probability distribution over $\{1, \ldots, K\}$. 
On the other hand, the first-order and second-order Trotterization method \cite{Suzuki91,Lloyd96,BACS05} considers approximations given by
\begin{equation}
\begin{aligned}
    e^{-i t \sum_{k=1}^K H_k} &\approx \left(e^{-i (t/r) H_K} \ldots e^{-i (t/r) H_1}\right)^{r},\\
    e^{-i t \sum_{k=1}^K H_k} &\approx \left(e^{-i (t/2r) H_1} \ldots e^{-i (t/2r) H_K}e^{-i (t/2r) H_K} \ldots e^{-i (t/2r) H_1}\right)^{r}.
\end{aligned}
\end{equation}
Although higher-order Trotterization formulas can perform better than the above first- and second-order formulas \cite{childs2019theory}, they nevertheless require the system to evolve backward in time and are thus not suitable in the Hamiltonian learning setting.
As $r \rightarrow \infty$, these approximations become exact.

Now, consider the unknown $N$-qubit Hamiltonian $H$ that we hope to learn.
Given any $N$-qubit unitaries $U_1, \ldots, U_K$ and some associated weights $w_1, \ldots, w_K \geq 0$.
We define a new unknown $N$-qubit Hamiltonian as follows,
\begin{equation} \label{eq:reshapeH}
    \widetilde{H} \triangleq \sum_{k=1}^K w_k U_k H U_k^\dagger.
\end{equation}
Let us consider $H_k \triangleq w_k U_k H U_k^\dagger$ for all $k \in \{1, \ldots, K\}$.
A standard identity for matrix exponential implies
\begin{equation}
    e^{-i t H_K} = U_k e^{-i (w_k t) H} U_k^\dagger.
\end{equation}
Hence, we can implement $e^{-i t H_K}$ using unitary dynamics $U(t) = e^{-i t H}$ under the unknown Hamiltonian $H$.
Then using Hamiltonian simulation techniques, we can evolve under the $N$-qubit unitary $e^{-i t \widetilde{H}}$.
In the special case where $U_k$ are powers of the same unitary, this technique approximately projects $H$ into the quantum Zeno subspace determined by the generator \cite{Tran21,Burgarth2022oneboundtorulethem}.
However, we use the general version of this technique which allows us to reshape any unknown $N$-qubit Hamiltonian $H$ to another Hamiltonian $\widetilde{H}$ given by Eq.~\eqref{eq:reshapeH}, and evolve under the new unknown Hamiltonian $\widetilde{H}$.
The reshaping will lead to a small approximation error, which we discuss in Section~\ref{sec:char-reshape}.

\subsection{Learning a single-qubit Hamiltonian}

We now show how the Hamiltonian reshaping technique can be very useful in learning Hamiltonians.
We begin with a simple question: how to learn a single-qubit Hamiltonian with Heisenberg-limited precision scaling?
If we naively apply quantum process tomography \cite{mohseni2008quantum, Scott08, o2004quantum, levy2021classical, huang2022foundations, merkel2013self, nielsen2020gate, blume2017demonstration} to learn the unknown Hamiltonian,
we would have an $\epsilon^{-2}$ dependence in the number of measurements needed, where $\epsilon$ is the desired precision of the Hamiltonian parameters.
Therefore we need to consider a different method.
Any single-qubit Hamiltonian can be written as 
\begin{equation}
    \label{eq:single_qubit_ham}
    H = \lambda_x X + \lambda_y Y + \lambda_z Z,
\end{equation}
where $|\lambda_x|,|\lambda_y|,|\lambda_z|\leq 1$. We want to learn each parameter with additive error at most $\epsilon$. We also want to have high confidence in the estimate we get, and to this end, we require that for each estimate, the probability of having an error larger than $\epsilon$ is at most $\delta$.

The above problem would become easy if we knew a priori that, for example, $\lambda_y=\lambda_z=0$.
In this case, the unknown Hamiltonian is given by $H = \lambda_x X$, which is a standard setup considered in quantum metrology \cite{de2005quantum,valencia2004distant,leibfried2004toward,bollinger1996optimal,lee2002quantum,mckenzie2002experimental,holland1993interferometric,wineland1992spin,caves1981quantum}.
There are many approaches to achieving the Heisenberg limit for this very simple class of Hamiltonians, including those based on highly entangled states \cite{lee2002quantum,bollinger1996optimal,leibfried2004toward} and those based on long-time evolution \cite{de2005quantum,higgins2007entanglement,KimmelLowYoder2015robust}.
Here, we consider an approach based on long-time evolution, known as robust phase estimation \cite{KimmelLowYoder2015robust}.
This approach can estimate $\lambda_x$ to $\epsilon$ accuracy with probability at least $1 - \delta$ with a total evolution time scaling like $\Or(\epsilon^{-1}\log(\delta^{-1}))$.
This approach relies crucially on the knowledge of the eigenstates of $H = \lambda_x X$,
which is unavailable for a general unknown single-qubit Hamiltonian given in \eqref{eq:single_qubit_ham}.
For a general single-qubit Hamiltonian, we would not know the eigenstates unless we first learn the parameters of the Hamiltonian.

We resolve this dilemma by eliminating the unwanted terms using the technique of reshaping an unknown Hamiltonian. With the unwanted terms removed, we can focus on the term we want to estimate.
Let us first consider the estimation of $\lambda_x$, and we want to keep $Y$ and $Z$ from interfering with our estimation.
To achieve this, we consider reshaping the Hamiltonian $H$ using $U_1 = I, U_2 = X$ and $w_1 = w_2 = \tfrac{1}{2}$. The new unknown Hamiltonian is given by
\begin{equation}
    \widetilde{H} \triangleq \frac{1}{2} H + \frac{1}{2} X H  X = \lambda_x X.
\end{equation}
Here we have used the fact that $XYX=-Y$ and $XZX=-Z$. Note that this effective Hamiltonian is exactly what we want! With this Hamiltonian, we can directly apply the robust phase estimation algorithm in \cite{KimmelLowYoder2015robust} to estimate $\lambda_x$. The same thing can be done for $\lambda_y$ and $\lambda_z$ as well. This enables us to estimate the parameters of a single-qubit Hamiltonian with $\Or(\epsilon^{-1}\log(\delta^{-1}))$ total evolution time, and $\Or(\polylog(\epsilon^{-1})\log(\delta^{-1}))$ number of experiments.

\subsection{Learning a few-qubit Hamiltonian}

We can generalize the above idea for learning a single-qubit Hamiltonian to a few-qubit Hamiltonian.
For a Hamiltonian acting on $\Or(1)$ qubits, we can learn all the parameters involved using $\Or(\epsilon^{-1}\log(\delta^{-1}))$ total evolution time, and $\Or(\polylog(\epsilon^{-1})\log(\delta^{-1}))$ number of experiments.
As an example, let us consider an arbitrary two-qubit Hamiltonian 
\begin{equation}
    H = \sum_{P, P' \in \{I,X,Y,Z\}}\lambda_{P P'} P_1 P'_2
\end{equation}
with $|\lambda_{P P'}|\leq 1$. Here $P_1$ and $P'_2$ denote the Pauli gates $P$ and $P'$ acting on qubits $1$ and $2$ respectively. Suppose we want to estimate the parameter $\lambda_{X Z}$. Then we can consider reshaping the unknown Hamiltonian $H$ using $U_1 = I, U_2 = X_1, U_3 = Z_2, U_4 = X_1Z_2$ and $w_1 = w_2 = w_3 = w_4 = \tfrac{1}{4}$.
The new unknown Hamiltonian after reshaping is given by
\begin{equation}
    \widetilde{H} \triangleq \frac{1}{4}(H + X_1 H X_1 + Z_2HZ_2 + X_1Z_2 HX_1Z_2) = \lambda_{XZ} X_1 Z_2 + \lambda_{XI} X_1 + \lambda_{IZ} Z_2.
\end{equation}
This is because the averaging eliminates all Pauli terms in $H$ that do not have $I$ or $X$ on the first qubit and $I$ or $Z$ on the second qubit.

This new unknown Hamiltonian $\widetilde{H}$ after the reshaping is not as simple as the one we get in the single-qubit case.
However, we still have access to its eigenstates.
This is because, in this new Hamiltonian $\widetilde{H}$, only one (non-identity) Pauli operator is associated with each qubit.
The eigenbasis for the new unknown Hamiltonian $\widetilde{H}$ is always given by
$\{\ket{+}\ket{0},\ket{+}\ket{1},\ket{-}\ket{0},\ket{-}\ket{1}\}$.
We can use this information, together with the robust phase estimation algorithm in \cite{KimmelLowYoder2015robust}, to estimate the differences between pairs of eigenvalues, which in turn yield the parameters $\lambda_{XZ},\lambda_{XI},\lambda_{IZ}$ through a Hadamard transform.
The procedure for applying random Pauli operators and obtaining parameters from eigenvalue estimation are described in detail in Sections \ref{sec:isolate_diagonal} and \ref{sec:the_diagonal} respectively.
By using different choices of $U_1, \ldots, U_4$ to reshape $H$,
we can get all the parameters $\lambda_{P P'}$ in the two-qubit Hamiltonian $H$.
The same idea generalizes to arbitrary Hamiltonians on $\mathcal{O}(1)$ qubits.

\subsection{Learning a many-qubit Hamiltonian through divide and conquer}

If we want to learn a Hamiltonian on many qubits by directly applying the above method, the total evolution time will scale exponentially with the number of qubits.
Here, we present a divide-and-conquer approach to solving this problem.
To illustrate the proposed approach, let us consider a simple example of an inhomogeneous Heisenberg model on $N$ qubit with a Hamiltonian given by,
\begin{equation} \label{eq:Heisenberg-H}
    H = \sum_{\alpha=1}^{N-1}\left(\lambda^{\alpha,\alpha+1}_{x} X_{\alpha}X_{\alpha+1} + \lambda^{\alpha,\alpha+1}_{y} Y_{\alpha}Y_{\alpha+1} + \lambda^{\alpha,\alpha+1}_{z} Z_{\alpha}Z_{\alpha+1}\right) + \sum_{\alpha=1}^N \lambda^{\alpha}_z Z_{\alpha}
\end{equation}
where $\lambda^{\alpha,\alpha+1}_{x}, \lambda^{\alpha,\alpha+1}_{y}, \lambda^{\alpha,\alpha+1}_{z}, \lambda^{\alpha}_{z}$ are the unknown parameters.
Suppose we want to learn the parameter $\lambda^{12}_{x}$ on the first two qubits.
In order to achieve this, we reshape the unknown Hamiltonian $H$ with $U_1 = I, U_2 = X_3, U_3 = Y_3, U_4 = Z_3$ and $w_1 = w_2 = w_3 = w_4 = \tfrac{1}{4}$.
The new unknown Hamiltonian after the reshaping is given by
\begin{align}
    \widetilde{H} &= \frac{1}{4}(H+X_3 HX_3 + Y_3 H Y_3 + Z_3 H Z_3) \\
    &= \left(\lambda^{1,2}_{x} X_{1}X_{2} + \lambda^{1,2}_{x} Y_{1}Y_{2} + \lambda^{1,2}_{x} Z_{1}Z_{2}\right) + \lambda^{1}_z Z_{1} + \lambda^{2}_z Z_{2} \\
    &+ \sum_{\alpha=4}^{N-1}\left(\lambda^{\alpha,\alpha+1}_{x} X_{\alpha}X_{\alpha+1} + \lambda^{\alpha,\alpha+1}_{y} Y_{\alpha}Y_{\alpha+1} + \lambda^{\alpha,\alpha+1}_{z} Z_{\alpha}Z_{\alpha+1}\right) + \sum_{\alpha=4}^N \lambda^{\alpha}_z Z_{\alpha}.
\end{align}
The second equality above can be seen as follows:
For each Pauli operator $P \in \{I, X, Y, Z\}^{\otimes N}$, if it acts non-trivially on the third qubit, then we can show that
\begin{equation}
    \frac{1}{4}(P+X_3 PX_3 + Y_3 P Y_3 + Z_3 P Z_3) = 0.
\end{equation}
On the other hand, for Pauli operator $P$ that acts as identity on the third qubit, we can show that
\begin{equation}
    \frac{1}{4}(P+X_3 PX_3 + Y_3 P Y_3 + Z_3 P Z_3) = P.
\end{equation}
Now, we can see that the new unknown Hamiltonian $\widetilde{H}$ can be written as
\begin{equation}
    \widetilde{H} = \widetilde{H}_{12} + \widetilde{H}_{\geq 4},
\end{equation}
where $\widetilde{H}_{12}$ is an $N$-qubit Hamiltonian acting only on qubit $1$ and $2$, and $\widetilde{H}_{\geq 4}$ is an $N$-qubit Hamiltonian acting only on qubit $4, 5, \ldots$.
Therefore in the new Hamiltonian $\widetilde{H}$ after the reshaping, there is no entanglement between qubit $1$ and $2$ with the rest of the system.
This enables us to apply the learning algorithm for few-qubit Hamiltonians to estimate $\lambda^{1,2}_{x}$.

We can apply the above idea to learn every parameter in the Hamiltonian with a number of experiments that scales linearly in the system size $N$ rather than exponential in $N$.
We show that one could do better than linear scaling with a parallelization technique.
In particular, we discuss how one could learn all the parameters $\lambda^{1,2}_x,\lambda^{4,5}_x,\lambda^{7,8}_x,\cdots$ in parallel.
Consider reshaping the unknown $N$-qubit Hamiltonian $H$ given in Eq.~\eqref{eq:Heisenberg-H} using
\begin{equation}
    U_1 = I, U_2 = X_3 X_6 X_9 \ldots, U_3 = Y_3 Y_6 Y_9 \ldots, U_4 = Z_3 Z_6 Z_9 \ldots,
\end{equation}
and $w_1 = w_2 = w_3 = w_4 = 1/4$.
Then the new Hamiltonian under reshaping is given by
\begin{equation}
    \widetilde{H} = \widetilde{H}_{12} + \widetilde{H}_{45} + \widetilde{H}_{78} + \ldots,
\end{equation}
where $\widetilde{H}_{\alpha, \alpha+1}$ is an $N$-qubit Hamiltonian of the form
\begin{equation} \label{eq:widetildeH-Heisenberg}
    \widetilde{H}_{\alpha, \alpha+1} = \left(\lambda^{\alpha,\alpha+1}_{x} X_{\alpha}X_{\alpha+1} + \lambda^{\alpha,\alpha+1}_{x} Y_{\alpha}Y_{\alpha+1} + \lambda^{\alpha,\alpha+1}_{x} Z_{\alpha}Z_{\alpha+1}\right) + \lambda^{\alpha}_z Z_{\alpha} + \lambda^{\alpha+1}_z Z_{\alpha+1}
\end{equation}
for all $\alpha = 1, 4, 7, \ldots$.
Using a reshaping based on four unitaries $U_1, \ldots, U_4$, we have turn the unknown $N$-qubit interacting Hamiltonian $H$ into a new Hamiltonian $\widetilde{H}$ with many noninteracting patches of two qubits.
Each two-qubit patch is now evolving independently from each other. This decoupling enables us to estimate the parameters in parallel.

This divide-and-conquer method works for any Hamiltonian that can be written as a sum of few-body observables, where each qubit is acted by at most $\mathcal{O}(1)$ of the few-body observables.
For this more general class of Hamiltonians, we determine how the reshaping is done by performing a coloring over its cluster interaction graph (Lemma \ref{lem:coloring_cluster_int_graph}).
For details, see Sections \ref{sec:low_intersection_ham} and \ref{sec:decoupling_dynamics_through_twirling}.
A complete description of our algorithm for the general low-intersection Hamiltonians can be found in Algorithm \ref{alg:ham_learn} in Section \ref{sec:est_all_bases_clusters}.
The cost of the algorithm is summarized in Theorem \ref{thm:ham_learn_partial_pauli_twirling}.

\subsection{Characterizing approximation error in reshaping Hamiltonians}
\label{sec:char-reshape}

The estimation error of the proposed learning algorithm depends on the quantum measurement error as well as the approximation error when we reshape the unknown Hamiltonian into other forms.
One way to analyze the approximation error is through the error analysis considered in \cite{Campbell2019random} if we use qDRIFT to reshape or in \cite{childs2019theory} when using the second-order Trotter formula.
However, these analyses are concerned with the error in the worst-case scenario over all possible input states and all observables.
For the learning task given here, it leads to an overestimation of the approximation error as some key properties of the problem are not incorporated.

Consider the example of learning an inhomogeneous Heisenberg model on $N$ qubit given in the previous section.
To evolve under the $N$-qubit Hamiltonian $\widetilde{H}$ in Eq.~\eqref{eq:widetildeH-Heisenberg} for time $t$,
the analysis in \cite{childs2019theory} shows that the approximation error of qDRIFT with $r$ steps is given by $\mathcal{O}(N^2 t^2 / r)$.
Here, $\widetilde{H}$ is decoupled into many two-qubit patches that do not interact with each other. And we are interested only in the accuracy in evolving each patch.
This prevents error from propagating across the entire $N$-qubit system.
A tighter analysis, using these facts, shows that the approximation error is given by $\mathcal{O}(t^2 / r)$ without an $N$ dependence.
We give the improved analysis for reshaping Hamiltonians using the randomization approach in Section~\ref{sec:partial_pauli_twirling} and Section~\ref{sec:deviate_limit_dynamics}.
The improved analysis for using the second-order Trotter formula is given in Section~\ref{sec:trotter} and Section~\ref{sec:deviate_limit_dynamics_trotter}.

\subsection{Establishing a matching lower bound}

We prove a matching lower bound of $T = \Omega(\epsilon^{-1} \log(\delta^{-1}) )$ on the total evolution time $T$.
The optimality with respect to the $\epsilon$ dependence is obtained by the Heisenberg limit.
However, the optimality concerning the failure probability $\delta$ has not been proven in the literature.
We consider any learning algorithm that can run new experiments based adaptively on the outcomes of previous experiments.
To handle adaptivity, we consider the rooted tree representation of the learning algorithm \cite{huang2022quantum, chen2022exponential}, and consider the task of distinguishing between two distinct Hamiltonians $H_\pm = \pm \epsilon Z$.

We begin by considering how well one could use a single experiment to distinguish $H_\pm$, which is characterized by the total variation distance between the probability distribution over experimental outcomes under $H_\pm$.
The single-experiment analysis establishes a relation between $\epsilon$ and the evolution time in one experiment.
We then consider an induction over every subtree of the learning algorithm to study the performance over multiple experiments.
A central technique is to control how each additional experiment improves one's ability to distinguish $H_\pm$.
The proof of the lower bound is given in Section~\ref{sec:lower_bound}.

\section{Outlook}

Our work shows that one can achieve the Heisenberg limit in learning many-body local Hamiltonian with many unknown parameters.
On the theoretical side, the central open question is whether and how one could achieve the Heisenberg limit for learning other classes of many-body Hamiltonians.
In an $N$-qubit Hamiltonian with all-to-all two-body interactions,
the proposed techniques allow one to achieve the Heisenberg limit at the expense of a quadratic dependence on system size $N$.
By using the reshaping approach to decouple every pair of qubits,
we can learn the two-body interactions with a total evolution time of $T = \mathcal{O}(N^2 \epsilon^{-1} \log(\delta^{-1}))$.
However, the following question remains open: Can we achieve a scaling of $T = \mathcal{O}(\epsilon^{-1} \log(\delta^{-1}))$ for learning $N$-qubit Hamiltonians with all-to-all interactions?
In addition to $N$-qubit Hamiltonians with all-to-all connections, can we learn fermionic or bosonic many-body Hamiltonians with Heisenberg-limited precision scaling?
Answering this question is essential for applications such as reconstructing the structure of large molecules or learning the interactions in an exotic quantum material.
Even more ambitiously, can one achieve a scaling of $T = \mathcal{O}(\epsilon^{-1} \log(\delta^{-1}))$ for learning the unknown parameters in an arbitrary $N$-qubit Hamiltonian without any structure?

There are several important future directions related to practical considerations.
Here, we assume experiments that can interleave unknown Hamiltonian evolution with controllable quantum circuits.
However, it is practically easier to implement the procedure if we only control the initial state and the final measurement basis.
This raises the question of whether one could achieve the scaling $T = \mathcal{O}(\epsilon^{-1} \log(\delta^{-1}))$ for learning $N$-qubit local Hamiltonian $H$ in a restricted model, where we choose the initial state, evolve under $U(t) = e^{-i t H}$ for a chosen $t$, and measure in a chosen basis.
While our learning algorithm interleaves unknown Hamiltonian evolution with single-qubit Clifford gates, implementing Clifford gates can be challenging in many analog quantum simulators, such as Rydberg atom systems \cite{Fendley2004,browaeys_many-body_2020,Schauss1455,Endres2016,bernien2017probing,Labuhn,Misha256,2020arXiv201212268S}.
Could we replace the single-qubit Clifford gates with other controllable unitary evolutions?
Understanding these questions will be crucial for physically achieving the Heisenberg limit in learning many-body Hamiltonians.

\subsection*{Acknowledgments:}
The authors thank Matthias Caro, Richard Kueng, Lin Lin, Jarrod McClean, Praneeth Netrapalli, and John Preskill for valuable input and inspiring discussions.
HH is supported by a Google Ph.D. fellowship.
YT is supported in part by the U.S. Department of Energy Office of Science (DE-SC0019374), Office of Advanced Scientific Computing Research (DE-SC0020290), Office of High Energy Physics (DE-ACO2-07CH11359), and under the Quantum System Accelerator project. DF is supported by NSF Quantum Leap Challenge Institute (QLCI) program under Grant No. OMA-2016245, NSF DMS-2208416, and a grant from the Simons Foundation under Award No. 825053.

\bibliographystyle{unsrt}
\bibliography{ref}

\appendix

\section{Preliminaries}

We begin with definitions used throughout the work as well as a basic lemma that follows immediately from the chromatic number of a graph.

\subsection{Notations}
\label{sec:notations}

Throughout this work, we will write $Q^P$ to denote the set of all mappings from $P$ to $Q$ for finite sets $P$ and $Q$. We also denote $[N]={1,2,\cdots, N}$. For the product of a sequence of operators $O_1,O_2,\cdots,O_L$, we write
\begin{equation}
    \prod^{\leftarrow}_{1\leq l\leq L} O_l = O_L\cdots O_2 O_1,\quad \prod^{\rightarrow}_{1\leq l\leq L} O_l = O_1 O_2\cdots O_L.
\end{equation}
We generally omit the arrows when taking a product of commuting operators.
We use $[A,B]=AB-BA$ to denote the commutator between $A$ and $B$, and we also write $\mathrm{ad}_A(B)=[A,B]$.
Throughout this work when we say that an operator is diagonal relative to a basis, what we mean is:
\begin{defn}[Diagonal operator]
\label{defn:diagonal_operator}
Let $B=\{\ket{v_l}\}$ be a basis of a Hilbert space. We say an operator $O$ is \textit{diagonal} relative to $B$ if $B$ is an eigenbasis of $O$.
\end{defn}

For a subsystem $A$ of the $N$-qubit system we consider, we use $\Tr_{A}$ to denote the partial trace after tracing out $A$. By extension, we use $\Tr_{[N]\setminus A}$ to denote the partial trace after tracing out all qubits not contained in $A$.

We consider $I$ to be the identity matrix, $X$ to be the Pauli-X matrix, $Y$ to be the Pauli-Y matrix, and $Z$ to be the Pauli-Z matrix.
We consider an $N$-qubit Pauli operator $P$ to be an element in the set of $N$-qubit observables $\{I, X, Y, Z\}^{\otimes N}$. We also use subscript to denote which qubit the Pauli operator acts on. For example, we use $X_{\alpha}$ to denote the Pauli-X operator acting on qubit $\alpha$, and $\gamma_{\alpha}$, $\gamma\in\{I,X,Y,Z\}$, to denote all Pauli operators acting on this qubit.

\subsection{Low-intersection Hamiltonians}
\label{sec:low_intersection_ham}

We adopt the problem setup from Ref.~\cite{HaahKothariTang2021optimal}. We consider a \textit{low-intersection Hamiltonian} following the definition below.
\begin{defn}[Low-intersection Hamiltonian]
\label{defn:low_intersection_ham}
A low-intersection Hamiltonian acting on $N$ qubits is a Hamiltonian $H$ that takes the following form:
\begin{equation}
\label{eq:Ham_gen_low_intersect}
    H=\sum_{a=1}^M\lambda_a E_a
\end{equation} 
where each $E_a$ is an $N$-qubit Pauli operator acting non-trivially on at most $k=\Or(1)$ qubits, and for each $a$, $E_a$ overlaps with $\mathfrak{d}=\Or(1)$ of $E_b$'s.
\end{defn}
Following Ref.~\cite{HaahKothariTang2021optimal}, we assume that $E_a$'s are known a priori and the goal is to estimate $\lambda_a$ for each $a$.
Also, as a consequence of $k,\mathfrak{d}=\Or(1)$, we have $M=\Or(N)$.
Below we introduce a set $\mathcal{V}$ to describe how the qubits interact with each other.
\begin{defn}[Interacting cluster]
\label{defn:interacting_cluster}
For each $a$, let $\operatorname{Supp}(E_a)$ be the support of $E_a$, i.e., the collection of qubits on which $E_a$ acts nontrivially. From the set $\{\operatorname{Supp}(E_a)\}$, we remove all $\operatorname{Supp}(E_a)$ such that $\operatorname{Supp}(E_a)\subset \operatorname{Supp}(E_b)$ for some $b\in[M]$. The remaining $\operatorname{Supp}(E_a)$'s form the set $\mathcal{V}$.  Each element of $\mathcal{V}$ we call an \textit{interacting cluster}.
\end{defn}
From the above construction it is clear that $|\mathcal{V}|\leq M$.
We then define the \textit{cluster interaction graph} as follows.

\begin{defn}[Cluster interaction graph]
\label{defn:cluster_int_graph}
The cluster interaction $\mathcal{G}=(\mathcal{V},\mathcal{E})$ has interacting clusters (from $\mathcal{V}$ in Definition \ref{defn:interacting_cluster}) as its vertices. The set of edges $\mathcal{E}$ is defined as follows: for each pair of interacting clusters $C$ and $C'$ ($C\neq C'$) in $\mathcal{V}$, $(C,C')\in \mathcal{E}$ if $C\cap C'\neq \emptyset$ or if there exists $C''\in\mathcal{V}$ such that $C\cap C''\neq \emptyset$ and $C'\cap C''\neq \emptyset$.
\end{defn}
From the definition of the low-intersection Hamiltonian, the degree of $\mathcal{G}$, $\operatorname{deg}(\mathcal{G})$, is upper bounded by a constant that is independent of the system size $N$. More precisely, $\operatorname{deg}(\mathcal{G})\leq \mathfrak{d}^2$ where $\mathfrak{d}$ is defined in Definition \ref{defn:low_intersection_ham}. 

For parallel estimation of different interacting clusters, we need to color the graph $\mathcal{G}$ so that adjacent vertices have different colors. The number of colors $\chi$ needed, which is the chromatic number of the graph, satisfies $\chi\leq \operatorname{deg}(\mathcal{G})+1=\Or(1)$. Therefore we have the following lemma
\begin{lem}[Coloring of the cluster interaction graph]
\label{lem:coloring_cluster_int_graph}
$\mathcal{V}$ can be divided into disjoint union 
\begin{equation}
\label{eq:decomposition_color}
    \mathcal{V}=\bigsqcup_{c=1}^{\chi}\mathcal{V}_c,
\end{equation} 
where no two adjacent vertices are in the same $\mathcal{V}_c$. In other words, for any $C$ and $C'$ in $\mathcal{V}_c$, $C\cap C'=\emptyset$, and for any $C''\in\mathcal{V}$, either $C\cap C''=\emptyset$ or $C'\cap C''=\emptyset$. Moreover $\chi=\Or(1)$.
\end{lem}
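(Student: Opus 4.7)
The plan is to prove the lemma in two stages: first bound the maximum degree of the cluster interaction graph $\mathcal{G}$ by a constant, and then invoke the standard greedy coloring argument to produce a proper coloring with at most $\operatorname{deg}(\mathcal{G}) + 1$ colors, taking the color classes as the desired partition $\{\mathcal{V}_c\}$.

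For the degree bound, I would fix an arbitrary $C \in \mathcal{V}$ and observe that the direct-intersection case in Definition~\ref{defn:cluster_int_graph} is a special case of the mediated case (take $C'' = C$), so it suffices to count the $C' \in \mathcal{V}$ with $C' \neq C$ for which some $C'' \in \mathcal{V}$ satisfies both $C \cap C'' \neq \emptyset$ and $C' \cap C'' \neq \emptyset$. Writing $C = \operatorname{Supp}(E_a)$, any $C'' = \operatorname{Supp}(E_{a''})$ intersecting $C$ corresponds to an $E_{a''}$ overlapping with $E_a$; by Definition~\ref{defn:low_intersection_ham} there are at most $\mathfrak{d}$ such $E_{a''}$, and hence at most $\mathfrak{d}$ such clusters $C''$. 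The same argument applied to $C'$ relative to $C''$ gives at most $\mathfrak{d}$ choices of $C'$ for each fixed $C''$. Multiplying yields $\operatorname{deg}(\mathcal{G}) \leq \mathfrak{d}^2$.

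With this bound in hand, a standard greedy coloring argument finishes the proof: enumerate the vertices of $\mathcal{V}$ in any order and assign each vertex the smallest color in $\{1, 2, \ldots, \mathfrak{d}^2 + 1\}$ not yet used by one of its previously colored neighbors. Since every vertex has at most $\mathfrak{d}^2$ neighbors, an admissible color always exists, so $\chi \leq \mathfrak{d}^2 + 1 = \mathcal{O}(1)$. Setting $\mathcal{V}_c$ to be the class of vertices of color $c$ produces the disjoint partition in Eq.~\eqref{eq:decomposition_color}. The claimed separation property is immediate from properness: for any distinct $C, C' \in \mathcal{V}_c$ we have $(C, C') \notin \mathcal{E}$, which by Definition~\ref{defn:cluster_int_graph} means $C \cap C' = \emptyset$ and that no $C'' \in \mathcal{V}$ meets both $C$ and $C'$, i.e.\ for every $C'' \in \mathcal{V}$ at least one of $C \cap C''$ and $C' \cap C''$ is empty.

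The only non-routine part I anticipate is carefully translating between the ``$E_b$-level'' overlap bound supplied by Definition~\ref{defn:low_intersection_ham} and the ``cluster-level'' intersection count used in defining $\mathcal{G}$. Two features of Definition~\ref{defn:interacting_cluster} make this translation one-sided in our favor: multiple $E_b$'s may yield the same cluster (compressing the cluster count) and supports strictly contained in others are deleted entirely (removing vertices from $\mathcal{V}$). Both effects only tighten the bound, so the loose count $\mathfrak{d}^2$ suffices and no further refinement is needed.
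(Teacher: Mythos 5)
Your proposal is correct and follows the same route the paper does: bound $\operatorname{deg}(\mathcal{G})$ by a constant depending only on $\mathfrak{d}$ (the paper states $\operatorname{deg}(\mathcal{G})\leq \mathfrak{d}^2$ without elaboration, and you supply essentially the justification it has in mind, using the neat observation that the direct-intersection clause of Definition~\ref{defn:cluster_int_graph} is subsumed by the mediated clause with $C''=C$), then invoke the elementary bound $\chi\leq\operatorname{deg}(\mathcal{G})+1$ via greedy coloring and read off the separation property from the definition of the edge set. The only caveat worth flagging is bookkeeping of whether $E_a$ is counted among its own $\mathfrak{d}$ overlapping terms; depending on that convention your two multiplicative factors should each be $\mathfrak{d}$ or $\mathfrak{d}+1$, so the clean bound $\mathfrak{d}^2$ may need to be $(\mathfrak{d}+1)^2$, but this affects neither the $\chi=\mathcal{O}(1)$ conclusion nor agreement with the paper (which quotes the same loose constant).
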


\section{Reshaping Hamiltonians using randomization}
\label{sec:partial_pauli_twirling}

Below we describe how to reshape the unknown $N$-qubit Hamiltonian $H$ into a new Hamiltonian with a simpler form based on a randomized Hamiltonian simulation algorithm known as qDRIFT \cite{Campbell2019random}.
Given a probability distribution $\mathcal{D}$ over $N$-qubit Pauli operators $\{I, X, Y, Z\}^{\otimes N}$, we consider the new Hamiltonian after reshaping to be
\begin{equation}
    \widetilde{H}(\mathcal{D}) \triangleq \Exp_{P \sim \mathcal{D}} [P H P].
\end{equation}
The qDRIFT algorithm can approximate (as a quantum channel) dynamics under $\widetilde{H}(\mathcal{D})$ by dynamics under $H$ as follows,
\begin{equation}
    e^{-i t \widetilde{H}(\mathcal{D})} \approx e^{-i \tau P_{k_r} H P_{k_r} } \ldots e^{-i \tau P_{k_1} H P_{k_1} } = P_{k_r} e^{-i \tau H} P_{k_r} \ldots P_{k_1} e^{-i \tau H} P_{k_1},
\end{equation}
where $r$ is an integer that determines the approximation error (larger $r$ implies smaller error), $\tau \triangleq t / r$, and $P_{k_1}, \ldots, P_{k_r}$ are independent random Pauli operators sampled from $\mathcal{D}$.
In the original paper \cite{Campbell2019random} on qDRIFT, it was shown that the approximation holds when one considers the expectation of the unitary (treated as a quantum channel) over the random Pauli operators $P_{k_{1}}, \ldots, P_{k_{r}}$.
In a subsequent work \cite{chen2021concentration}, it was shown that the approximation holds even with a single realization of $P_{k_{1}}, \ldots, P_{k_{r}}$ with high probability.

By choosing different distribution $\mathcal{D}$, we can reshape the unknown Hamiltonian $H$ into new Hamiltonians with a much simplified form.
In particular, the reshaping technique is useful for:
(1) decoupling the $N$-qubit system into many few-qubit noninteracting patches, and (2) isolating the diagonal Hamiltonian in each of the few-qubit patches.

\subsection{Decoupling into noninteracting patches}
\label{sec:decoupling_dynamics_through_twirling}

Recall that for each color $c \in [\chi]$, $\mathcal{V}_c$ is a set of interacting clusters (i.e., few-qubit patches).
For each color $c \in [\chi]$, we define a distribution $\mathcal{D}_c$ over $P \in \{I, X, Y, Z\}^{\otimes N}$ as follows. For each qubit $\alpha \in [N]$,
\begin{itemize}
    \item If qubit $\alpha$ is in one of the interacting clusters in $\mathcal{V}_c$, we consider $P_\alpha = I$.
    \item If qubit $\alpha$ is not in any of the interacting clusters in $\mathcal{V}_c$, we sample $P_\alpha \in \{I, X, Y, Z\}$ uniformly.
\end{itemize}
Then we let $P=\prod_{\alpha}P_{\alpha}$.
We establish the following lemma.

\begin{lem}[Decoupling into noninteracting patches]
\label{lem:effective_ham_decouple}
Defining $\mathcal{D}_c$ as above,
we have
\begin{equation}
    \widetilde{H}(\mathcal{D}_c) = \Exp_{P \sim \mathcal{D}_c} [P H P] = \sum_{C\in\mathcal{V}_c} H_C,
\end{equation}
where $H_C \triangleq \sum_{a:\operatorname{Supp}(E_a)\subset C} \lambda_a E_a$ is the sum of all terms in $H$ that are supported on $C$.
\end{lem}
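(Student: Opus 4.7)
The plan is to expand $\Exp_{P \sim \mathcal{D}_c}[PHP]$ by linearity as a sum over the Pauli terms $E_a$, and to determine term-by-term which survive the twirl by $\mathcal{D}_c$. Writing $H = \sum_{a=1}^M \lambda_a E_a$, one obtains
\begin{equation}
    \Exp_{P \sim \mathcal{D}_c}[P H P] = \sum_{a=1}^M \lambda_a \, \Exp_{P \sim \mathcal{D}_c}[P E_a P].
\end{equation}
Since $P = \bigotimes_\alpha P_\alpha$ has independent single-qubit marginals under $\mathcal{D}_c$, and $E_a$ is itself a tensor product of single-qubit Paulis, each expectation factorizes qubit-by-qubit.

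First I would rule out every $E_a$ whose support is not entirely contained in $\bigcup_{C \in \mathcal{V}_c} C$. If on some qubit $\alpha \notin \bigcup_{C \in \mathcal{V}_c} C$ the Pauli $E_a$ acts nontrivially, then by the definition of $\mathcal{D}_c$ the marginal $P_\alpha$ is uniform on $\{I,X,Y,Z\}$, and the single-qubit Pauli twirl identity $\tfrac14 \sum_{Q \in \{I,X,Y,Z\}} Q R Q = 0$ for any nonidentity Pauli $R$ annihilates the whole term. Conversely, for $E_a$ with $\operatorname{Supp}(E_a) \subseteq \bigcup_{C \in \mathcal{V}_c} C$, every $P_\alpha$ on $\operatorname{Supp}(E_a)$ is deterministically $I$, so $\Exp_{P}[P E_a P] = E_a$.

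Next I would show that each surviving $E_a$ has its support inside a unique cluster $C \in \mathcal{V}_c$, so the sum collapses into $\sum_{C \in \mathcal{V}_c} H_C$. By Definition~\ref{defn:interacting_cluster}, $\operatorname{Supp}(E_a)$ is contained in some $C'' \in \mathcal{V}$. Suppose for contradiction that $\operatorname{Supp}(E_a)$ met two distinct clusters $C_1, C_2 \in \mathcal{V}_c$. Then $C''$ intersects both $C_1$ and $C_2$. If $C'' \in \{C_1, C_2\}$, we get $C_1 \cap C_2 \neq \emptyset$, contradicting the disjointness of clusters in $\mathcal{V}_c$ from Lemma~\ref{lem:coloring_cluster_int_graph}. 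Otherwise Definition~\ref{defn:cluster_int_graph} forces $(C_1, C_2) \in \mathcal{E}$, again contradicting the property that no two vertices in $\mathcal{V}_c$ are adjacent. Hence $\operatorname{Supp}(E_a)$ lies in a unique $C \in \mathcal{V}_c$, and grouping the surviving $E_a$'s by that $C$ recovers $\sum_{C\in\mathcal{V}_c} H_C$ with $H_C = \sum_{a:\operatorname{Supp}(E_a)\subset C}\lambda_a E_a$.

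The main obstacle I anticipate is the uniqueness step. The single-qubit Pauli twirl and the factorization across qubits are routine, but eliminating the ``straddling'' case requires carefully invoking the second clause of Definition~\ref{defn:cluster_int_graph} — adjacency through a common third cluster $C''$ — in tandem with the disjointness inside each color class. Once this combinatorial point is settled, the identification with $\sum_{C\in\mathcal{V}_c} H_C$ is immediate.
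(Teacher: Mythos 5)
Your proof is correct and follows essentially the same route as the paper's: use linearity to reduce to single terms $E_a$, argue that terms touching any qubit outside $\bigcup_{C\in\mathcal{V}_c} C$ vanish under the twirl while the rest are preserved, and then use the coloring to show each surviving $E_a$ sits in a unique cluster. Your treatment is in fact slightly more explicit than the paper's in two minor places --- you invoke the single-qubit Pauli twirl identity with qubit-wise factorization where the paper phrases the cancellation as a commute/anti-commute parity argument at the level of the full $N$-qubit Pauli, and in the uniqueness step you spell out the intermediary cluster $C''\in\mathcal{V}$ containing $\operatorname{Supp}(E_a)$ (together with the $C''\in\{C_1,C_2\}$ corner case), which the paper leaves implicit --- but these are presentational differences, not a different proof.
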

\begin{proof}
Recall that $H = \sum_{a} \lambda_a E_a$. For each $a$, we consider the following.
\begin{itemize}
    \item If $E_a$ acts non-trivially on a qubit that is not in any of the interacting clusters in $\mathcal{V}_c$, then there is $1/2$ probability that $P \sim \mathcal{D}_c$ commutes with $E_a$, so that $P E_a P=E_a$, and $1/2$ probability that $P \sim \mathcal{D}_c$ anti-commutes with $E_a$, so that $PE_a P=-E_a$.
    Consequently,
    \begin{equation}
        \Exp_{P \sim \mathcal{D}_c}[ P E_a P ] = \frac{1}{2}(E_a-E_a) = 0.
    \end{equation}
    \item  
    If $E_a$ acts trivially on all qubits that are not in any of the interacting clusters in $\mathcal{V}_c$, 
    then $P \sim \mathcal{D}_c$ always commutes with $E_a$ because the supports of these two operators do not overlap.
    As a result, we have
    \begin{equation}
        \Exp_{P \sim \mathcal{D}_c}[P E_a P] = E_a.
    \end{equation}
\end{itemize}
Therefore $\widetilde{H}(\mathcal{D}_c) = \Exp_{P \sim \mathcal{D}}[ PH P]$ contains only those terms that are supported on $\bigcup_{C\in\mathcal{V}_c}C$. 

Next we show that those terms are supported on only a single $C\in\mathcal{V}_c$. If $E_a$ is supported on both $C\in\mathcal{V}_c$ and $C'\in\mathcal{V}_c$, then the support of $E_a$ overlaps with both $C$ and $C'$, making them adjacent by Definition \ref{defn:cluster_int_graph}, which precludes them from being including in the same $\mathcal{V}_c$, thus resulting in contradiction. Therefore each $E_a$ is supported on only a single $C\in\mathcal{V}_c$.
\end{proof}

Recall that an interacting cluster $C \in \mathcal{V}_c$ is a set of at most $k$ qubits.
Hence $H_C$ is an $N$-qubit Hamiltonian that acts non-trivially on at most $k = \mathcal{O}(1)$ of qubits.
For each $c \in [\chi]$,
the evolution under the new Hamiltonian $\widetilde{H}(\mathcal{D}_c)$ after reshaping is given by
\begin{equation}
    e^{-i t \widetilde{H}(\mathcal{D}_c)} = \bigotimes_{C \in \mathcal{V}_c} e^{-i t H_C},
\end{equation}
which is decoupled into many few-qubit patches that do not interact with each other.
In our algorithm we will learn all $H_C$'s in parallel for a given $c\in[\chi]$.
Because we prepare product states in all experiments, and measure observables that are local to each $C\in\mathcal{V}_c$, we can perform all the experiments in parallel as long as we evolve for the same length of time $t$. 
To be more precise, in each experiment, we perform the evolution (in terms of the density operator)
\begin{equation}
    \rho(0)=\bigotimes_{C\in\mathcal{V}_c}\rho_C\mapsto e^{-i\widetilde{H} (\mathcal{D}_c)t}\rho(0)e^{i\widetilde{H}(\mathcal{D}_c) t}=\bigotimes_{C\in\mathcal{V}_c} e^{-iH_C t}\rho_Ce^{iH_C t},
\end{equation}
where $\rho(0)$ is the initial state, and $\rho_C$ is the initial state for each $C\in\mathcal{V}_c$. The qubits not contained in $\bigcup_{C\in\mathcal{V}_c}C$ are neglected because they are decoupled from the dynamics. We then measure observables $O_C$ (supported on $C$) for each $C$ individually. The quantities we extract from the experiments are
\begin{equation}
    \tr[O_C e^{-iH_C t}\rho_Ce^{iH_C t}] = \tr[(O_C\otimes I) e^{-i\widetilde{H}(\mathcal{D}_c) t}\rho(0)e^{-i\widetilde{H}(\mathcal{D}_c) t}],
\end{equation}
where the identity operator $I$ acts on $[N]\setminus C$.
We do not need to rerun the experiment for each $C$ because $O_C$'s commute with each other.
Therefore, from now on we focus on a single $C$ and discuss how to learn $H_C$.

\subsection{Isolating the diagonal Hamiltonian}
\label{sec:isolate_diagonal}

Recall from Section~\ref{sec:notations} that any $\gamma \in \{I, X, Y, Z\}^{C}$ is a function mapping from a subset of qubits $C \subseteq [N]$ to a Pauli operator $\{I, X, Y, Z\}$.
Using this notation, we can write down the Hamiltonian $H_C$ in the Pauli basis as follows,
\begin{equation}
    \label{eq:pauli_expansion_HC}
    H_C = \sum_{\gamma\in\{I, X, Y, Z\}^C} \lambda_{\gamma} \prod_{\alpha \in C} \gamma(\alpha)_{\alpha},
\end{equation}
where $\gamma(\alpha)_{\alpha}$ is the Pauli operator $\gamma(\alpha)$ acting on qubit $\alpha$.
Hence, learning $H_C$ is equivalent to learning $\lambda_{\gamma}$'s.
Each $\lambda_{\gamma}$ corresponds to $\lambda_a$ in \eqref{eq:Ham_gen_low_intersect} for some $a\in[M]$. More specifically, $\lambda_{\gamma}=\lambda_a$ for $a \in [M]$ with $E_a = \prod_{\alpha\in C} \gamma(\alpha)_{\alpha}$ (if there does not exist such an $a$ then $\lambda_{\gamma}=0$).

In order to learn $H_C$, we again utilize the reshaping technique.
We reshape the Hamiltonian $H_C$ into a easier-to-learn form using the following distributions.
Given $\gamma \in \{X, Y, Z\}^{C}$.
We define the distribution $\mathcal{D}_{C, \gamma}$ over $N$-qubit Pauli operator $Q$ as follows. For each qubit $i \in [N]$,
\begin{itemize}
    \item If qubit $\alpha$ is in $C$, we consider $Q_\alpha = I$ or $\gamma(\alpha)$ with equal probability.
    \item If qubit $\alpha$ is not in $C$, we consider $Q_\alpha = I$.
\end{itemize}
Then we let $Q=\prod_{\alpha}Q_{\alpha}$.
We can establish the following lemma showing the new Hamiltonian $\widetilde{H_C}(\mathcal{D}_{C, \gamma})$ after reshaping.
\begin{lem}[Isolating the diagonal Hamiltonian]
\label{lem:effective_ham_isolate_diagonal}
    Using the definition of $\mathcal{D}_{C, \gamma}$, we have
    \begin{equation}
        \label{eq:new_effective_ham_diag}
        \widetilde{H_C}(\mathcal{D}_{C, \gamma}) =
        \Exp_{Q \sim \mathcal{D}_{C, \gamma}}[Q H_C Q]
        = \sum_{b\in\{0,1\}^C} \lambda_{b} \prod_{\alpha\in C} (\gamma(\alpha)_{\alpha})^{b(\alpha)} \triangleq H^C_{\mathrm{diag}}(\gamma),
    \end{equation}
    where $\lambda_b = \lambda_{\gamma'}$ for some $\gamma'\in\{I, X, Y, Z\}^C$ given by
    \begin{equation}
    \gamma'(\alpha)=
    \begin{cases}
    I,& \text{if}\ b(\alpha)=0, \\
    \gamma(\alpha),& \text{if}\ b(\alpha)=1.
    \end{cases}
    \end{equation}
\end{lem}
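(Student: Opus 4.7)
The plan is to expand $H_C$ in the Pauli basis as in \eqref{eq:pauli_expansion_HC} and compute the reshaped Hamiltonian term by term using linearity of expectation. Since $Q$ is a tensor product of independent single-qubit Pauli operators and each Pauli summand of $H_C$ also factors across qubits, the expectation $\mathbb{E}_{Q\sim\mathcal{D}_{C,\gamma}}[Q\, \gamma'(\alpha_1)_{\alpha_1}\cdots\gamma'(\alpha_{|C|})_{\alpha_{|C|}}\,Q]$ for a given $\gamma'\in\{I,X,Y,Z\}^C$ splits into a product of single-qubit averages, one for each $\alpha\in C$ (qubits outside $C$ contribute trivially since $Q_\alpha=I$ there).

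Next I would carry out the per-qubit calculation. For a fixed $\alpha\in C$, $Q_\alpha$ is equal to $I$ or $\gamma(\alpha)$ with probability $1/2$ each, so
\begin{equation}
\mathbb{E}_{Q_\alpha}[Q_\alpha\, \gamma'(\alpha)_\alpha\, Q_\alpha] = \tfrac{1}{2}\gamma'(\alpha)_\alpha + \tfrac{1}{2}\gamma(\alpha)_\alpha\gamma'(\alpha)_\alpha\gamma(\alpha)_\alpha.
\end{equation}
Using the standard fact that two nonidentity single-qubit Paulis either coincide (and hence commute) or are distinct (and hence anti-commute), this average equals $\gamma'(\alpha)_\alpha$ when $\gamma'(\alpha)\in\{I,\gamma(\alpha)\}$ and vanishes otherwise. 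Multiplying these single-qubit averages over $\alpha\in C$, the full expectation for the term $\lambda_{\gamma'}\prod_\alpha \gamma'(\alpha)_\alpha$ survives if and only if $\gamma'(\alpha)\in\{I,\gamma(\alpha)\}$ for every $\alpha\in C$.

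Finally, I would reindex the surviving terms. Introducing $b\in\{0,1\}^C$ via $b(\alpha)=0\iff \gamma'(\alpha)=I$ and $b(\alpha)=1\iff \gamma'(\alpha)=\gamma(\alpha)$ gives a bijection between surviving $\gamma'$ and $b$, with the operator becoming $\prod_{\alpha\in C}(\gamma(\alpha)_\alpha)^{b(\alpha)}$ and the coefficient relabeled as $\lambda_b$. Summing over $b$ then yields \eqref{eq:new_effective_ham_diag}.

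There is no substantive obstacle here; the lemma is essentially a partial Pauli twirl and the proof reduces to the elementary commutation/anti-commutation dichotomy for single-qubit Paulis. The only point requiring mild care is verifying that the qubit-wise factorization of the expectation is justified, which follows from the independence of the $Q_\alpha$'s under $\mathcal{D}_{C,\gamma}$ together with the tensor-product structure of both $Q$ and each Pauli summand of $H_C$.
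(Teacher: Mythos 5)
Your proof is correct and follows essentially the same approach as the paper: expand $H_C$ in the Pauli basis and show term by term that the twirl kills any Pauli string $\prod_\alpha \gamma'(\alpha)_\alpha$ with $\gamma'(\alpha)\notin\{I,\gamma(\alpha)\}$ for some $\alpha$. The only presentational difference is that you factor the expectation qubit-by-qubit (using independence of the $Q_\alpha$'s), whereas the paper argues globally that such a $P$ anti-commutes with exactly half the $Q$'s by a parity count over the offending sites — both are the same elementary partial-twirl argument.
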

\begin{proof}
 
Each Pauli operator in $H_C$ can be written as $P=\prod_{\alpha\in C}\gamma'(\alpha)_{\alpha}$ for some $\gamma' \in \{I, X, Y, Z\}^C$. If $\gamma(\alpha)\neq\gamma'(\alpha)$ and $\gamma'(\alpha)\neq I$ for any $\alpha\in C$, $P$ will commute with half of the $Q$'s and anti-commute with the other half (we can simply count for how many $\alpha$'s we have $\gamma(\alpha)\neq\gamma'(\alpha)$ and $\gamma'(\alpha)\neq I$; if the number is even, $P$ and $Q$ commute, and if the number is odd, they anti-commute). 

Therefore in $\Exp_{Q \sim \mathcal{D}_{C, \gamma}}[ Q H_C Q]$ all these terms cancel out, and only terms that are products of $\gamma(\alpha)_{\alpha}$ for $\alpha\in C$, i.e., the diagonal terms, remain. 
\end{proof}

Let us define the Pauli eigenbases of an interacting cluster $C$.
Using this definition, $B_C(\gamma)$ is the orthonormal eigenbasis for the diagonal Hamiltonian $H^C_{\mathrm{diag}}(\gamma)$.
This means we have acquired a very important knowledge about the new unknown Hamiltonian $H^C_{\mathrm{diag}}(\gamma)$ after reshaping $H_C$: we know the eigenbasis of $H^C_{\mathrm{diag}}(\gamma)$.
In constrast, we do not know what the eigenbases for $H_C$ are.

\begin{defn}[Pauli eigenbases of an interacting cluster]
\label{defn:Pauli_eigenbasis}
For any interacting cluster $C$, and $\gamma \in \{X,Y,Z\}^C$, we define $B_C(\gamma)$ to be the orthonormal basis that simultaneously diagonalizes $\gamma(\alpha)_\alpha, \forall \alpha\in C$. We denote the set of all such bases for $C$ by $\mathcal{B}_C=\{B_C(\gamma):\gamma\in\{X, Y, Z\}^C\}$.
\end{defn}

\subsection{Combining the two reshaping procedures}
\label{sec:summary_partial_pauli_twirl}

We can combine the two reshaping procedures into a single one.
Given a color $c \in [\chi]$, which corresponds to a set $\mathcal{V}_c$ of many interacting clusters $C$, and $\gamma_C \in \{X, Y, Z\}^C$ for every interacting cluster $C \in \mathcal{V}_c$.
We consider a distribution $\mathcal{D}_{c, \{\gamma_C\}_{C \in \mathcal{V}_c}}$ over $N$-qubit Pauli operators $\bar{P}$ defined by Algorithm~\ref{alg:generate_random_Pauli}.
In Algorithm \ref{alg:generate_random_Pauli}, Lines 2 to 3 are for generating the $P$ operator used in Section \ref{sec:decoupling_dynamics_through_twirling} to decouple each $C$ from the rest of the system, and Lines 4 to 6 are for generating the $Q$ operator used in Section \ref{sec:isolate_diagonal} to isolate the diagonal Hamiltonian.
The Pauli operator $\bar{P}$ generated from this algorithm is therefore a product of $P$ and $Q$.
Consequently, from Lemmas \ref{lem:effective_ham_decouple} and \ref{lem:effective_ham_isolate_diagonal}, we can establish the following lemma.

\begin{algorithm}[t]
\caption{Generating the random Pauli operators}
\label{alg:generate_random_Pauli}
\begin{algorithmic}[1]
\REQUIRE $c\in[\chi]$, $\gamma_C\in\{X,Y,Z\}^C$ for each $C\in\mathcal{V}_c$
\FOR{$\alpha\in[N]$}
\IF{$\alpha\notin \bigsqcup_{C\in\mathcal{V}_c}$}
\STATE Let $s_{\alpha}$ be uniformly randomly drawn from $\{I,X,Y,Z\}$;
\ELSE
\STATE Let $C$ be the interacting cluster containing $\alpha$;
\STATE Let $s_{\alpha}$ be uniformly randomly drawn from $\{I, \gamma_C(\alpha) \}$;
\ENDIF
\ENDFOR
\ENSURE $\bar{P}=\bigotimes_{\alpha\in[N]}s_{\alpha}$.
\end{algorithmic}
\end{algorithm}

\begin{lem}
    \label{lem:effective_ham_combined}
    Given the definition of $\mathcal{D}_{c, \{\gamma_C\}_{C \in \mathcal{V}_c}}$, we have
    \begin{equation}
        \widetilde{H}(\mathcal{D}_{c, \{\gamma_C\}_{C \in \mathcal{V}_c}}) = \Exp_{\bar{P} \sim \mathcal{D}_{C, \{\gamma_C\}_{C \in \mathcal{V}_c}}}\left[ \bar{P} H \bar{P}\right] = \sum_{C\in\mathcal{V}_c} H_{\mathrm{diag}}^C(\gamma_C),
    \end{equation}
    where each $H_{\mathrm{diag}}^C(\gamma_C)$ is defined in Eq.~\eqref{eq:new_effective_ham_diag}. 
\end{lem}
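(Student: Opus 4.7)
The plan is to factor the random Pauli $\bar P$ produced by Algorithm~\ref{alg:generate_random_Pauli} into two independent pieces with disjoint supports, apply Lemmas~\ref{lem:effective_ham_decouple} and~\ref{lem:effective_ham_isolate_diagonal} one after the other, and then assemble the result. Concretely, I would write $\bar P = P \cdot Q$, where $P$ is supported on the qubits outside $\bigsqcup_{C \in \mathcal{V}_c} C$ (uniform over $\{I,X,Y,Z\}$ per qubit, identity elsewhere) and $Q = \prod_{C \in \mathcal{V}_c} Q_C$ with each $Q_C$ supported on $C$ (uniform over $\{I,\gamma_C(\alpha)\}$ per qubit). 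Since the two pieces are sampled independently over disjoint qubit sets, $P$ and $Q$ commute, and the distribution of $P$ is exactly $\mathcal{D}_c$ of Section~\ref{sec:decoupling_dynamics_through_twirling} while the distribution of $Q_C$ is exactly $\mathcal{D}_{C,\gamma_C}$ of Section~\ref{sec:isolate_diagonal}.

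Next I would compute the expectation in stages using independence:
\begin{equation}
\Exp_{\bar P}[\bar P H \bar P] \;=\; \Exp_Q\!\Bigl[Q\,\Exp_P[P H P]\,Q\Bigr].
\end{equation}
By Lemma~\ref{lem:effective_ham_decouple}, the inner expectation equals $\sum_{C \in \mathcal{V}_c} H_C$. Then I would use the support structure: for each $C \in \mathcal{V}_c$ the factor $Q_{C'}$ with $C' \neq C$ is supported on $C'$, which is disjoint from $C$ by Lemma~\ref{lem:coloring_cluster_int_graph}, so $Q_{C'}$ commutes with $H_C$ and cancels with itself. Hence $Q H_C Q = Q_C H_C Q_C$, and $\Exp_{Q_C}[Q_C H_C Q_C] = H^C_{\mathrm{diag}}(\gamma_C)$ by Lemma~\ref{lem:effective_ham_isolate_diagonal}. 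Summing over $C \in \mathcal{V}_c$ gives the claimed identity.

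There is no serious obstacle here; the lemma is essentially a bookkeeping statement that the two reshaping procedures, being applied on complementary subsets of qubits, can be composed independently. The only small point to be careful about is verifying the commutation of $Q_{C'}$ with $H_C$ for $C' \neq C$, which is why the coloring in Lemma~\ref{lem:coloring_cluster_int_graph} matters: it guarantees that distinct clusters in the same color class have disjoint supports, so the cancellation used in the proof of Lemma~\ref{lem:effective_ham_isolate_diagonal} does not spill between clusters. Everything else follows from linearity of expectation and the independence built into Algorithm~\ref{alg:generate_random_Pauli}.
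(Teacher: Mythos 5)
Your proof is correct and takes essentially the same route as the paper's: factor $\bar P = P\cdot Q$ into the two independently sampled pieces with disjoint supports, first apply Lemma~\ref{lem:effective_ham_decouple} to reduce the inner expectation to $\sum_C H_C$, then use the disjoint-support/commutation observation to replace $Q H_C Q$ by $Q_C H_C Q_C$ and apply Lemma~\ref{lem:effective_ham_isolate_diagonal}. The paper's argument is the same composition of the two reshaping lemmas, just stated slightly more tersely.
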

\begin{proof}
To show this lemma, first note that
\begin{equation}
\label{eq:expectation_composition}
    \Exp_{\bar{P} \sim \mathcal{D}_{C, \{\gamma_C\}_{C \in \mathcal{V}_c}}}\left[ \bar{P} H \bar{P}\right] = \Exp_{\substack{Q_C \sim \mathcal{D}_{C, \gamma_C},\\ \forall C \in \mathcal{V}_c}}\left[ \left(\prod_{C \in \mathcal{V}_c} Q_C\right) \Exp_{P \sim \mathcal{D}_{c}}[ P H P] \left(\prod_{C \in \mathcal{V}_c} Q_C\right) \right].
\end{equation}
All the Pauli operators $P$ and $Q_C$ for all $C \in \mathcal{V}_c$ have disjoint supports.
By Lemma \ref{lem:effective_ham_decouple} we have
\begin{equation}
\Exp_{P \sim \mathcal{D}_{c}}[ P H P] = \sum_{C\in\mathcal{V}_c}H_C.
\end{equation}
Then by Lemma~\ref{lem:effective_ham_isolate_diagonal} and Eq.~\eqref{eq:expectation_composition} we have
\begin{equation}
    \Exp_{\bar{P} \sim \mathcal{D}_{C, \{\gamma_C\}_{C \in \mathcal{V}_c}}}\left[ \bar{P} H \bar{P}\right] = \sum_{C \in \mathcal{V}_c} \Exp_{\substack{Q_C \sim \mathcal{D}_{C, \gamma_C}}}\left[ Q_C H_C Q_C \right]
    = \sum_{C\in\mathcal{V}_c} H_{\mathrm{diag}}^C(\gamma_C),
\end{equation}
which establishes the lemma.
\end{proof}

\section{Learning the unknown Hamiltonian after reshaping}
\label{sec:Learning_from_twirled_dynamics}

In this section we will discuss how to learn parameters (coefficients) from the new unknown Hamiltonians after reshaping.
We first present how the experiments are executed in Section \ref{sec:experiments}.  Then we give the procedure to estimate the coefficients of terms that are diagonal in a given Pauli eigenbasis for a single cluster in Section \ref{sec:the_diagonal}. Finally, we talk about how to estimate parameters for all clusters in parallel in Section \ref{sec:est_all_bases_clusters}.

\subsection{Executing quantum experiments}
\label{sec:experiments}

We begin by describing how the experiments are executed.
Given a color $c\in[\chi]$, and a Pauli assignment $\gamma_C\in\{X,Y,Z\}^C$ for each interacting cluster $C\in\mathcal{V}_c$, we initialize the quantum system in a product state which can be prepared using single-qubit Clifford gates.
We then evolve under the new unknown Hamiltonian $\widetilde{H}(\mathcal{D}_{c, \{\gamma_C\}_{C \in \mathcal{V}_c}})$ after reshaping for time $t$.
Based on the qDRIFT algorithm, we can approximate the unitary dynamics $e^{-i t \widetilde{H}(\mathcal{D}_{c, \{\gamma_C\}_{C \in \mathcal{V}_c}})}$ by
\begin{equation}
\bar{P_r}e^{-iH\tau}\bar{P_r} \ldots \bar{P_1}e^{-iH\tau}\bar{P_1} = e^{-i \tau \bar{P_r} H \bar{P_r}} \ldots e^{-i \tau \bar{P_1} H \bar{P_1}},
\end{equation}
where $r$ is a large integer, $\tau = t / r$, and $\bar{P_j}$ is a random $N$-qubit Pauli operator sampled from distribution $\mathcal{D}_{c, \{\gamma_{C}\}_{C \in \mathcal{V}_c}}$ using Algorithm \ref{alg:generate_random_Pauli}.
After evolving the system, we then measure an observable $O_C$, supported on $C$, for every $C\in\mathcal{V}_c$.
Because $O_C$'s do not overlap with each other, these measurements can be performed simulataneously.
In this way we are able to estimate parameters for all clusters in $\mathcal{V}_c$ in parallel.

Using a density matrix formulation, the experiment begins by preparing an initial state $\rho(0)=\bigotimes_{C\in\mathcal{V}_c}\rho_C\otimes\rho_{\mathrm{res}}$, where $\rho_{\mathrm{res}}$ is the state of the qubits not contained in $\bigsqcup_{C\in\mathcal{V}_c}C$.
After the randomized evolution, the final state before the measurements is given by
\begin{equation}
\label{eq:final_state_averaged}
    \rho(t) = \Exp_{\bar{P_j} \sim \mathcal{D}_{c, \{\gamma_{C}\}_{C \in \mathcal{V}_c}}} \left[\prod_{1\leq j\leq r}^{\leftarrow}e^{-i\bar{P}_j H\bar{P}_j\tau}\rho(0)\prod_{1\leq j\leq r}^{\rightarrow}e^{i\bar{P}_j H\bar{P}_j\tau}\right].
\end{equation}
In the limit of $\tau\to 0$ (equivalently $r\to\infty$), the system will evolve under $\widetilde{H}(\mathcal{D}_{c, \{\gamma_C\}_{C \in \mathcal{V}_c}}) = \sum_{C\in\mathcal{V}_c}H_{\mathrm{diag}}^C(\gamma_C)$ as shown in Lemma \ref{lem:effective_ham_combined}.
This means if we look at an observable supported on the interacting cluster $C\in\mathcal{V}_c$, its expectation value will be approximately $\Tr[O_C e^{-iH_{\mathrm{diag}}^C(\gamma_C) t}\rho_C e^{iH_{\mathrm{diag}}^C(\gamma_C) t}]$. Note that everything in this expression depends only on the cluster $C$.

In an actual experiment, $r$ cannot be infinite. The theorem below tells us how small $\tau$ (or equivalently how large $r=t/\tau$) needs to be for the above procedure to achieve a given accuracy $\varepsilon$.
The proof of this theorem is given in Section \ref{sec:deviate_limit_dynamics}.

\begin{thm}[Number of required random Pauli operators]
\label{thm:num_rand_pauli_needed}
There exists $r_0=\Or(t^2/\varepsilon)$ such that for any $r>r_0$, any initial state $\rho(0)$, any $C \in \mathcal{V}_c$, and any $O_C$ supported on $C$ with $\|O_C\|\leq 1$, we have
\begin{equation}
    \Big|\Tr[(O_C\otimes I)\rho(t)]-\Tr[O_C e^{-iH_{\mathrm{diag}}^C(\gamma_C) t}\rho_C e^{iH_{\mathrm{diag}}^C(\gamma_C) t}]\Big|\leq \varepsilon,
\end{equation}
where $\rho_C=\Tr_{[N]\setminus C}\rho(0)$.
\end{thm}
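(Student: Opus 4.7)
The plan is to work entirely in the Heisenberg picture. Writing $\mathcal{E}_\tau^\ast(Q) = \Exp_{\bar P}[e^{i\tau\ad_{\bar P H\bar P}}(Q)]$ for the adjoint of one qDRIFT step and $\mathcal{U}_t^\ast(Q) = e^{it\widetilde{H}}Q e^{-it\widetilde{H}}$ for the adjoint of the ideal evolution under $\widetilde{H} = \sum_{C'\in\mathcal{V}_c}H^{C'}_{\mathrm{diag}}(\gamma_{C'})$, the claim reduces to controlling the operator norm $\|(\mathcal{E}_\tau^{\ast r} - \mathcal{U}_t^\ast)(O_C\otimes I)\|$, since the LHS of the theorem is $|\tr[(\mathcal{E}_\tau^{\ast r} - \mathcal{U}_t^\ast)(O_C\otimes I)\,\rho(0)]|$. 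The starting observation, which follows from Lemma \ref{lem:effective_ham_combined}, is that the $H^{C'}_{\mathrm{diag}}$'s pairwise commute and act on disjoint sets of qubits, so $\mathcal{U}_t^\ast(O_C\otimes I) = (e^{itH^C_{\mathrm{diag}}(\gamma_C)}O_C e^{-itH^C_{\mathrm{diag}}(\gamma_C)})\otimes I$, which is still supported on $C$; pairing against $\rho(0)$ and using cyclicity of the trace reproduces exactly the second term in the theorem.

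Next I would telescope
\begin{equation*}
\mathcal{E}_\tau^{\ast r} - \mathcal{U}_t^\ast \;=\; \sum_{k=0}^{r-1}\mathcal{E}_\tau^{\ast k}\circ(\mathcal{E}_\tau^\ast - \mathcal{U}_\tau^\ast)\circ \mathcal{U}_{(r-k-1)\tau}^\ast,
\end{equation*}
apply to $O_C\otimes I$, and note that each "inner" operator $Q_k := \mathcal{U}_{(r-k-1)\tau}^\ast(O_C\otimes I)$ is supported on $C$ with $\|Q_k\|\le 1$ by the observation of the previous paragraph. Since every $\mathcal{E}_\tau^\ast$ is unital and completely positive, hence contractive in operator norm, the telescoped bound becomes $\sum_{k=0}^{r-1}\|(\mathcal{E}_\tau^\ast - \mathcal{U}_\tau^\ast)(Q_k)\|$. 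The remaining task is the key one-step, locally-supported estimate: $\|(\mathcal{E}_\tau^\ast - \mathcal{U}_\tau^\ast)(Q)\| = \Or(\tau^2)$ with constant independent of $N$, for any $Q$ supported on $C$ with $\|Q\|\le 1$. Using the integral form of Taylor's theorem and the matching of first-order terms ($\Exp_{\bar P}[\bar P H\bar P] = \widetilde{H}$), this difference equals
\begin{equation*}
-\int_0^\tau(\tau-s)\Bigl\{\Exp_{\bar P}\bigl[e^{is\ad_{\bar P H\bar P}}\ad_{\bar P H\bar P}^2(Q)\bigr] - e^{is\ad_{\widetilde{H}}}\ad_{\widetilde{H}}^2(Q)\Bigr\}\,ds,
\end{equation*}
where I have used $\ad_A^2 e^{is\ad_A} = e^{is\ad_A}\ad_A^2$ to pull the unitary conjugation outside the double commutator. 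Since conjugation preserves operator norm, it suffices to bound $\|\ad_A^2(Q)\|$ for $A\in\{\bar P H\bar P,\widetilde{H}\}$. Each such double commutator only picks up terms $E_a$ of $H$ whose supports intersect a bounded neighborhood of $C$ in the cluster interaction graph; by the low-intersection structure ($k=\Or(1)$, $\mathfrak d=\Or(1)$, $|\lambda_a|\le 1$) the number of such terms is $\Or(1)$ and the double commutator has norm $\Or(\|Q\|)$. Summing $r$ copies of this $\Or(\tau^2)$ bound yields $\Or(r\tau^2) = \Or(t^2/r)$, so setting the total at most $\varepsilon$ gives $r_0 = \Or(t^2/\varepsilon)$.

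The main obstacle is precisely this $N$-independence of the one-step error on local observables: a naive Taylor remainder bound would produce $\|H\|^2\tau^2 = \Or(N^2\tau^2)$ per step, reproducing the suboptimal $\Or(N^2 t^2/r)$ scaling flagged in Section \ref{sec:char-reshape}. The twin tricks that defeat it are (i) working in the Heisenberg picture so that locality of $O_C$ persists under the ideal evolution $\mathcal{U}_{(r-k-1)\tau}^\ast$ by virtue of the commuting disjoint-support structure of $\widetilde{H}$, and (ii) exploiting $\ad_A^2 e^{is\ad_A} = e^{is\ad_A}\ad_A^2$ so that the potentially nonlocal factor $e^{is\ad_A}$ becomes a norm-preserving conjugation applied \emph{after} the manifestly local operator $\ad_A^2(Q)$, which is where the low-intersection combinatorics pay off.
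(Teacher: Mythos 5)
Your proof is correct and takes a genuinely different route from the paper's. The paper (Section \ref{sec:deviate_limit_dynamics}) proves Theorem \ref{thm:num_rand_pauli_needed} by reducing to Theorem \ref{thm:err_bound_heisenberg_picture} and then splitting the error into two pieces via an intermediate first-order Euler iterate $\bar{O}^{(u)}_C=(I+i\tau\,\mathrm{ad}_{H^C_{\mathrm{eff}}})^u O_C$: Lemma \ref{lem:decoupling_err_bound} (decoupling error, $\|O^{(r)}_C - \bar{O}^{(r)}_C\otimes I\|=\Or(t^2/r)$, via a full Taylor series and a recursion $M^{(u)}, R^{(u)}$) and Lemma \ref{lem:OCt_and_barOC_diff} (local dynamics error, $\|\bar{O}^{(r)}_C - O_C(t)\|=\Or(t^2/r)$). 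You instead telescope the Heisenberg-picture channel difference directly, $\mathcal{E}_\tau^{\ast r}-\mathcal{U}_\tau^{\ast r}=\sum_k \mathcal{E}_\tau^{\ast k}\circ(\mathcal{E}_\tau^\ast-\mathcal{U}_\tau^\ast)\circ\mathcal{U}_\tau^{\ast(r-k-1)}$, bypassing the Euler intermediary. The essential physics is the same in both: (i) the exact decoupled dynamics $\mathcal{U}_\tau^\ast$ keeps the observable supported on $C$, so the per-step remainder is taken on a \emph{local} operator; (ii) the double commutator $\ad^2_{\bar P H\bar P}(Q)$ only touches $\Or(1)$ terms of $H$ by the low-intersection combinatorics, giving an $N$-independent $\Or(\tau^2)$ one-step error. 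Your version is more economical: you place $\mathcal{U}$ on the inner (locality-preserving) leg and $\mathcal{E}$ on the outer (contractive) leg of the telescope, which is exactly the ordering needed and avoids both the separate Euler-to-exact comparison and the paper's bookkeeping of $\|\bar{O}^{(u)}_C\|$ growth in Eq.~\eqref{eq:barOC_norm}. You also use the integral-form Taylor remainder with the commutation $\ad_A^2 e^{is\ad_A}=e^{is\ad_A}\ad_A^2$, replacing the paper's full-series estimate \eqref{eq:taylor_exp_single_step}--\eqref{eq:upper_bound_remainder_taylor}; this trades the paper's geometric-series bound (which requires $\mathfrak d\tau$ small) for a cleaner two-commutator count. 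The paper's explicit decomposition does make Lemma \ref{lem:decoupling_err_bound} reusable on its own, and its combinatorial bound $|\mathcal{A}_j|\leq j!(\mathfrak{d}+1)^j$ gives a quantitative handle on all orders, but for the purpose of Theorem \ref{thm:num_rand_pauli_needed} your approach reaches the same $r_0=\Or(t^2/\varepsilon)$ with less machinery.
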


\subsection{Estimating the diagonal}
\label{sec:the_diagonal}

Let us focus on how to estimate the parameters for terms that are diagonal in a given Pauli eigenbasis $B_C(\gamma_C)$ as defined in Definition \ref{defn:Pauli_eigenbasis} for a cluster $C$. 
One advantage of having the system evolve under $ H^C_{\mathrm{diag}}$ is that we have access to its eigenstates, which we denote by $\ket{\xi}$
\begin{equation}
    \ket{\xi} = \bigotimes_{\alpha\in C}\ket{\psi_{\alpha}},
\end{equation}
for $\xi\in\{0,1\}^C$, where $\ket{\psi_{\alpha}}$ is the $(-1)^{\xi(\alpha)}$-eigenstate of $\gamma_C(\alpha)$. For example, if $\gamma_C(\alpha)=X$, then $\ket{\psi_{\alpha}}=\ket{+}$ if $\xi(\alpha)=0$, and $\ket{\psi_{\alpha}}=\ket{-}$ if $\xi(\alpha)=1$. Importantly, $\ket{\xi}$ can be prepared using a tensor product of single-qubit Clifford gates.
For each $\ket{\xi}$, the corresponding eigenvalue can be calculated through
\begin{equation}
     H^C_{\mathrm{diag}}(\gamma_C) \ket{\xi} = \sum_{b \in\{0,1\}^C} \lambda_{b} (-1)^{\xi\cdot b}\ket{\xi}, 
\end{equation}
where $\xi\cdot b=\sum_{\alpha\in C}\xi(\alpha)b(\alpha)$ is the inner product between $\xi$ and $b$. The eigenvalues are therefore
\begin{equation}
\label{eq:coef_to_eigval}
    \varepsilon_{\xi} =  \sum_{b\in\{0,1\}^C}  (-1)^{\xi\cdot b} \lambda_{b}.
\end{equation}
The eigenvalues and the parameters are therefore related via the Hadamard transform. We can recover the parameters from the eigenvalues through
\begin{equation}
\label{eq:eigval_to_coef}
    \lambda_{b} =  \frac{1}{2^{|C|}}\sum_{\xi\in\{0,1\}^C}  (-1)^{\xi\cdot b} \varepsilon_{\xi}.
\end{equation}

From the above discussion we can see that the parameters $\lambda_{b}$ can be estimated from the eigenvalues $\varepsilon_{\xi}$. Rather than estimating $\varepsilon_{\xi}$ directly, which is impossible because of the presence of a global gauge, we will estimate $\varepsilon_{\xi}-\varepsilon_{\xi'}$ for pairs of $\xi$ and $\xi'$. Moreover from \eqref{eq:eigval_to_coef} we can see that, with the exception of the global phase $\lambda_{0^C}$ (we denote by $0^C$ that maps all elements of $C$ to $0$), all other $\lambda_b$'s depend only on the differences between $\varepsilon_{\xi}$'s. To this end we need to prepare a superposition of $\ket{\xi}$ and $\ket{\xi'}$. We note that when the Hamming distance between $\xi$ and $\xi'$ is $1$, then this is easy to do, because $(\ket{\xi}+\ket{\xi'})/\sqrt{2}$ is still a product state, and each of its tensor product component can be prepared using a single Clifford gate.
We denote the unitary preparing this state by
\begin{equation}
\label{eq:defn_U_xixi'}
    U_{\xi\xi'}\ket{0^{|C|}} = \frac{1}{\sqrt{2}}(\ket{\xi}+\ket{\xi'}).
\end{equation}
This unitary is a tensor product of single-qubit Clifford gates.
Similarly we can construct a unitary in the form of single-qubit Clifford gates that satisfy
\begin{equation}
\label{eq:defn_V_xixi'}
    V_{\xi\xi'}\ket{0^{|C|}} = \frac{1}{\sqrt{2}}(\ket{\xi}+i\ket{\xi'}).
\end{equation}
This can be done by replacing the Hadamard gate in $U_{\xi\xi'}$ with $S\mathrm{H}$ where $S$ is the phase gate.

Now we run experiments as follows: starting from $\ket{0^{|C|}}$, we apply $U_{\xi\xi'}$, and then evolve with $e^{-i H^C_{\mathrm{diag}}(\gamma_C)t}$ (which is approximately obtained by randomly applying $P_j$ and $Q_j$ as discussed above). Then we apply $U_{\xi\xi'}^{\dagger}$, and measure all the $k$ qubits. The probability of all qubits being returned to the $0$ state is
\begin{equation}
\label{eq:cos_prob}
    |\braket{0^{|C|}|U_{\xi\xi'}^{\dagger}e^{-i H^C_{\mathrm{diag}}(\gamma_C)t}U_{\xi\xi'}|0^{|C|}}|^2=\frac{1}{2}(1+\cos((\varepsilon_{\xi}-\varepsilon_{\xi'})t)).
\end{equation}
Similarly we can design an experiment in which the probability of returning to $\ket{0^{|C|}}$ is
\begin{equation}
\label{eq:sin_prob}
    |\braket{0^{|C|}|V_{\xi\xi'}^{\dagger}e^{-i H^C_{\mathrm{diag}}(\gamma_C)t}U_{\xi\xi'}|0^{|C|}}|^2=\frac{1}{2}(1+\sin((\varepsilon_{\xi}-\varepsilon_{\xi'})t)).
\end{equation}
We let $\tau'=\pi/2^{|C|+2}$ so that $\tau'|\varepsilon_{\xi}-\varepsilon_{\xi'}|\leq \pi/2$ (we know from \eqref{eq:coef_to_eigval} that $|\varepsilon|\leq 2^{|C|}$). Then let $t=\ell\tau'$ for positive integer $\ell$, the two probabilities in \eqref{eq:cos_prob} and \eqref{eq:sin_prob} become
\begin{equation}
\label{eq:probabilities_cos_sin}
\begin{aligned}
    p_0(\ell) &= \frac{1}{2}(1+\cos(\ell\tau'(\varepsilon_{\xi}-\varepsilon_{\xi'}))), \\
    p_+(\ell) &= \frac{1}{2}(1+\sin(\ell\tau'(\varepsilon_{\xi}-\varepsilon_{\xi'}))),
\end{aligned}
\end{equation}
corresponding to the probabilities in \cite[Theorem I.1]{KimmelLowYoder2015robust}. Using the robust phase estimation technique in \cite[Theorem I.1]{KimmelLowYoder2015robust}, we can then estimate $\tau'(\varepsilon_{\xi}-\varepsilon_{\xi'})$ with standard deviation $\epsilon'\tau'/3$, by running $e^{-i H^C_{\mathrm{diag}}(\gamma_C)\tau'}$ $\Or(\epsilon'^{-1}\tau'^{-1})$ times. Therefore the total evolution time with $H$ is $\Or(\epsilon'^{-1}\tau'^{-1})\times \tau'=\Or(\epsilon'^{-1})$. The number of experiments scale like $\Or(\polylog(\epsilon'^{-1}\tau'^{-1}))=\Or(\poly(|C|+\log(\epsilon'^{-1})))\leq \Or(\poly(k+\log(\epsilon'^{-1})))$. Here we use the fact that $|C|\leq k$.

With this we can estimate $\varepsilon_{\xi}-\varepsilon_{\xi'}$ with standard deviation $\epsilon'/3$. Our ultimate goal is to ensure that the estimate has low error with high probability. Therefore we can repeat the experiment $\Or(\log(\vartheta^{-1}))$ times and take the median to ensure that the error is below $\epsilon'$ with probability at least $1-\vartheta$. 
In the procedure above, in order to estimate $\varepsilon_{\xi}-\varepsilon_{\xi'}$ to precision $\epsilon'$ with probability at least $1-\vartheta$, we need a total evolution time of
\begin{equation}
    \label{eq:total_evo_time_pair_diff}
    \Or(\epsilon'^{-1}\log(\vartheta^{-1})),
\end{equation}
and the number of experiments required is
\begin{equation}
    \label{eq:num_expr_pair_diff}
    \Or(\poly(k+\log(\epsilon'^{-1}))\log(\vartheta^{-1})).
\end{equation}

The above procedure only gets us the differences $\varepsilon_{\xi}-\varepsilon_{\xi'}$ for $\xi$ and $\xi'$ that differ by one bit. Next we will discuss how to estimate each $\varepsilon_{\xi}$.
Because the global phase is undetectable we can assume $\varepsilon_{0^{C}}=0$ (here $\varepsilon_{0^C}$ is the eigenvalue corresponding to the mapping that maps all elements of $C$ to $0$). We can then estimate each $\varepsilon_{\xi}$ by the Hamming weight of $\xi$. Starting with $w=1$, once we have $\varepsilon_{\xi'}$ for all $\xi'$ with Hamming weight $w-1$, we can estimate all $\varepsilon_{\xi}$ with Hamming weight $w$, by estimating $\varepsilon_{\xi}-\varepsilon_{\xi'}$ for some $\xi'$ that differs from $\xi$ by one bit and has Hamming weight $w-1$. This allows us to estimate all $\varepsilon_{\xi}$, each of which through
\begin{equation}
\varepsilon_{\xi} = \sum_{l=0}^{w-1}(\varepsilon_{\xi_{l+1}}-\varepsilon_{\xi_{l}}),
\end{equation}
where $\xi_w=\xi$, $\xi_0=0^{C}$ (which means $\xi_0$ maps all elements of $C$ to $0$), $\xi_l$ has Hamming weight $l$, and $\xi_{l+1}$ and $\xi_{l}$ differ by only one bit. Because the summand on the right-hand side has at most $|C|\leq k$ terms, we only need to estimate each $\varepsilon_{\xi_{l+1}}-\varepsilon_{\xi_{l}}$ to precision $\epsilon'=\epsilon/k$ to ensure that the final error is at most $\epsilon$. 

This procedure can be seen as traversing a shortest path tree: if we define a graph with all $\xi\in\{0,1\}^C$ as vertices, and link $\xi$ and $\xi'$ if their Hamming distance is $1$, we will have a $|C|$-hypercube. Then we can define the shortest path tree as follows.

\begin{defn}[Shortest path tree]
\label{defn:shortest_path_tree}
The shortest path tree $\mathcal{T}_{\mathrm{SPT}}^C=(\{0,1\}^C,\mathcal{E}_{SPT}^C)$ is a subgraph of the $|C|$-hypercube, with root $0^C$, and $\mathcal{E}_{SPT}^C$ is the set of edges. $\mathcal{T}_{\mathrm{SPT}}^C$ satisfies that the path from the root to each vertex in the tree has the shortest distance in the $|C|$-hypercube.
\end{defn}
For each $(\xi,\xi')\in \mathcal{E}_{SPT}^C$, we estimate $\varepsilon_{\xi}-\varepsilon_{\xi'}$, and with this we can obtain the value of any $\varepsilon_{\xi}$ by traversing the path leading from $0^C$ to $\xi$ in $\mathcal{T}_{\mathrm{SPT}}^C$.

There are in total $2^{|C|}-1$ pairs of $\xi$ and $\xi'$ such that we need to estimate $\varepsilon_{\xi}-\varepsilon_{\xi'}$, because a tree with $2^{|C|}$ nodes has $2^{|C|}-1$ edges. 
In order to ensure that each estimate of $\varepsilon_{\xi}$ has confidence level $1-\delta$, each $\varepsilon_{\xi}-\varepsilon_{\xi'}$ needs a confidence level of $1-\delta/k$ by union bound. 
Therefore, substituting $\epsilon'=\epsilon/k$ and $\vartheta=\delta/k$ into \eqref{eq:total_evo_time_pair_diff}, for each $\varepsilon_{\xi}-\varepsilon_{\xi'}$ the total evolution time we need is
\begin{equation}
\label{eq:evolution_time_single_cluster_basis}
    \Or(k\epsilon^{-1}\log(k\delta^{-1})),
\end{equation}
and the number of experiments needed is, by substituting into \eqref{eq:num_expr_pair_diff},
\begin{equation}
\label{eq:num_expr_single_cluster_basis}
    \Or(\poly(k+\log(k\epsilon^{-1}))\log(k\delta^{-1})).
\end{equation}
Once all $\varepsilon_{\xi}$ are estimated with precision $\epsilon$, we can get all $\lambda_{b}$ in \eqref{eq:new_effective_ham_diag} with precision $\epsilon$ through \eqref{eq:eigval_to_coef}. 

\subsection{Estimating for all bases and clusters}
\label{sec:est_all_bases_clusters}

In Section \ref{sec:the_diagonal} we have focused on a single interacting cluster and a fixed Pauli eigenbasis.
This procedure needs to be repeated for all interacting clusters $C$, the number of which is upper bounded by $M$, and for all $3^{|C|}$ possible choices of basis (there is a lot of double counting involved, for which further optimization may be possible), in order to cover the parameters of all terms involved in \eqref{eq:Ham_gen_low_intersect}. 
Note that in Sections \ref{sec:decoupling_dynamics_through_twirling} and \ref{sec:summary_partial_pauli_twirl} we have showed that interacting clusters within the same $\mathcal{V}_c$ (having the same color in the coloring of the cluster interaction graph $\mathcal{G}$) can be estimated in parallel. Therefore we only need an overhead of $\chi=\Or(\mathfrak{d}^2)$ (the chromatic number in Lemma \ref{lem:coloring_cluster_int_graph}) rather than $M$ to get all interacting clusters. 

\begin{algorithm}[ht]
\caption{Learning the Hamiltonian}
\label{alg:ham_learn}
\begin{algorithmic}[1]
\REQUIRE Low-intersection Hamiltonian $H$ (Definition \ref{defn:low_intersection_ham}).
\STATE Generate the cluster interaction graph $\mathcal{G}=(\mathcal{V},\mathcal{E})$ (Definition \ref{defn:cluster_int_graph});
\STATE Color the cluster interaction graph: $\mathcal{V}=\bigsqcup_{c\in[\chi]}\mathcal{V}_c$ (Lemma \ref{lem:coloring_cluster_int_graph});
\FOR{$C\in\mathcal{V}$}
    \STATE Generate $\mathcal{T}_{\mathrm{SPT}}^C=(\mathcal{V}_{\mathrm{SPT}}^C,\mathcal{E}_{\mathrm{SPT}}^C)$, the shortest path tree of the $|C|$-hypercube (Definition \ref{defn:shortest_path_tree});
\ENDFOR
\FOR{$c\in[\chi]$}
    \STATE Let $\mathcal{S}_C=\{(\gamma,\xi,\xi'):\gamma\in\{X,Y,Z\}^C,(\xi,\xi')\in\mathcal{E}_{\mathrm{SPT}}^C\}$ for each $C\in\mathcal{V}_c$;
    \WHILE{$\sum_{C\in\mathcal{V}_c}|\mathcal{S}_C|>0$}
        \FOR{$C\in\mathcal{V}_c$}
            \IF{$\mathcal{S}_C\neq\emptyset$}
                \STATE Choose $(\gamma_C,\xi_C,\xi_C')\in\mathcal{S}_C$;
                \STATE Discard $(\gamma_C,\xi_C,\xi_C')$ from $\mathcal{S}_C$;
            \ELSE
                \STATE Randomly draw $\gamma_C$ from $\{X,Y,Z\}^C$; \COMMENT{This step is merely for notation consistency; we can let $C$ remain idle when $\mathcal{S}_C=\emptyset$.}
            \ENDIF
        \ENDFOR
        \STATE Generate random Pauli operators $\{\bar{P}_j\}$ using Algorithm \ref{alg:generate_random_Pauli} (with $c$ and $\{\gamma_C\}$ as input);
        \STATE Use robust phase estimation \cite{KimmelLowYoder2015robust} to estimate $\varepsilon_{\xi_C}-\varepsilon_{\xi_C'}$ for all $C\in\mathcal{V}_c$ simultaneously, by letting the system evolve under $H$ and inserting the Pauli operators $\{\bar{P}_j\}$ (Section \ref{sec:the_diagonal}, insertion of random Pauli operators described in Section \ref{sec:summary_partial_pauli_twirl}); 
    \ENDWHILE
    \FOR{$C\in\mathcal{V}_c$, $\gamma_C\in\{X,Y,Z\}^C$}
        \STATE Use $\{\varepsilon_{\xi_C}-\varepsilon_{\xi_C'}: (\xi_C,\xi_C')\in\mathcal{E}_{\mathrm{SPT}}^C\}$ generated above to generate estimate $\hat{\lambda}_a$ for parameter $\lambda_a$ of each term supported on $C$ and diagonal in the Pauli eigenbasis $B_{C}(\gamma_C)$ (Sections \ref{sec:the_diagonal} and \ref{sec:est_all_bases_clusters}, for the Pauli eigenbasis see Definition \ref{defn:Pauli_eigenbasis});
    \ENDFOR
\ENDFOR
\ENSURE Estimate $\hat{\lambda}_a$ of $\lambda_a$ for each $a\in[M]$.
\end{algorithmic}
\end{algorithm}

We summarize our procedure in Algorithm~\ref{alg:ham_learn}. 
From \eqref{eq:evolution_time_single_cluster_basis} and \eqref{eq:num_expr_single_cluster_basis}, we can get the total evolution time and number of experiments needed to learn all the parameters to within error $\epsilon$, with a confidence level of $1-\delta$ for each estimate: they are respectively
\begin{equation}
    \label{eq:evolution_time_total}
    3^k\chi\times (2^k-1)\times \Or( k\epsilon^{-1}\log(k\delta^{-1})) = \Or(k 6^k\mathfrak{d}^2 \epsilon^{-1}\log(k\delta^{-1})),
\end{equation}
and
\begin{equation}
    \label{eq:num_expr_total}
    3^k\chi\times(2^k-1)\times \Or( \poly(k+\log(k\epsilon^{-1}))\log(k\delta^{-1}))=\Or(6^k\mathfrak{d}^2\poly(k+\log(k\epsilon^{-1}))\log(k\delta^{-1})).
\end{equation}
When $k=\Or(1)$ and $\mathfrak{d}=\Or(1)$, they become $\Or( \epsilon^{-1}(\log(\delta^{-1})))$ and $\Or( \polylog(\epsilon^{-1})\log(\delta^{-1}))$ respectively.

In the above analysis we only considered the $\tau\to 0$ limit, i.e., we apply random Pauli operators infinitely frequently. This is impossible in reality. Fortunately, the robust phase estimation algorithm we use is robust to error below a constant threshold. More precisely, in \eqref{eq:cos_prob} and \eqref{eq:sin_prob}, we can tolerate an error up to $1/\sqrt{8}$ \cite{KimmelLowYoder2015robust}. Therefore we only need to apply random Pauli operators with a finite frequency. Theorem \ref{thm:num_rand_pauli_needed} tells us what the necessary frequency is. Below we summarize the cost of our algorithm.

\begin{thm}
\label{thm:total_evolution_time_assuming_const_err}
    Assume the following:
    for any $t>0$, $c\in[\chi]$, $\{\gamma_C\}_{C\in\mathcal{V}_c}$ (with $\chi$ and $\mathcal{V}_c$ defined in Lemma \ref{lem:coloring_cluster_int_graph}), we can start from initial state $\rho(0)=\bigotimes_{C\in\mathcal{V}_c}\rho_C\otimes\rho_{\mathrm{res}}$ (each $\rho_C$ is a density matrix for $C$, and $\rho_{\mathrm{res}}$ is the density matrix for the qubits not contained in $\bigsqcup_{C\in\mathcal{V}_c}C$) and apply random Pauli operators so that at time $t$ the quantum system, evolving under Hamiltonian \eqref{eq:Ham_gen_low_intersect}, is in the state $\rho(t)$ satisfying
    \begin{equation}
    \label{eq:err_tolerance_in_thm}
        \Big|\Tr[O_C e^{-iH^C_{\mathrm{eff}}(\gamma_C)t}\rho_C e^{iH^C_{\mathrm{eff}}(\gamma_C)t}]-\Tr[(O_C\otimes I)\rho(t)]\Big|\leq \frac{1}{\sqrt{8}},
    \end{equation}
    where $H^C_{\mathrm{eff}}(\gamma_C)$ is 
    as defined in \eqref{eq:new_effective_ham_diag} (for $\gamma=\gamma_C$), and $O_C$ is any Hermitian operator supported on $C$ with $\|O_C\|\leq 1$. 
    Under this assumption,
    we can generate estimates $\{\hat{\lambda}_a\}$ for parameters $\{\lambda_a\}$ in \eqref{eq:Ham_gen_low_intersect}, such that
    \begin{equation}
    \Pr[|\hat{\lambda}_a-\lambda_a|>\epsilon]<\delta
    \end{equation}
    for all $a\in[M]$ with the following cost:
    \begin{enumerate}
        \item $\Or(k 6^k\mathfrak{d}^2 \epsilon^{-1}\log(k\delta^{-1}))$ total evolution time;
        \item $\Or(6^k\mathfrak{d}^2\poly(k+\log(k\epsilon^{-1}))\log(k\delta^{-1}))$ number of experiments.
    \end{enumerate}
\end{thm}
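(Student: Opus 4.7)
The plan is to assemble the ingredients developed in Sections \ref{sec:decoupling_dynamics_through_twirling}--\ref{sec:est_all_bases_clusters} and carry out the error and confidence budgeting demanded by Algorithm \ref{alg:ham_learn}. First, the assumption \eqref{eq:err_tolerance_in_thm} certifies that, for every choice of color $c$ and basis labels $\{\gamma_C\}_{C \in \mathcal{V}_c}$, the expectation of any cluster-local observable produced by the inserted random Pauli protocol of Section \ref{sec:summary_partial_pauli_twirl} is within $1/\sqrt{8}$ of the expectation under the ideal decoupled dynamics $\bigotimes_{C \in \mathcal{V}_c} e^{-i H^C_{\mathrm{eff}}(\gamma_C) t}$. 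This is precisely the robustness threshold of the phase estimation routine of \cite{KimmelLowYoder2015robust}, so the experiments described around \eqref{eq:probabilities_cos_sin} still produce an estimate of each pair difference $\varepsilon_{\xi_C} - \varepsilon_{\xi_C'}$ at the noiseless cost recorded in \eqref{eq:total_evo_time_pair_diff} and \eqref{eq:num_expr_pair_diff}.

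The main body of the proof is then the execution loop of Algorithm \ref{alg:ham_learn}. I would process the colors $c \in [\chi]$ sequentially; within a fixed color, I sweep each cluster $C \in \mathcal{V}_c$ over all $3^{|C|} \leq 3^k$ basis labels $\gamma_C \in \{X,Y,Z\}^C$, and for each label over the $2^{|C|} - 1 \leq 2^k$ edges of the shortest-path tree $\mathcal{T}_{\mathrm{SPT}}^C$. The clusters of one color are processed in parallel, which is legal because Lemma \ref{lem:effective_ham_combined} makes the reshaped dynamics a tensor product across $\mathcal{V}_c$, the initial states prepared via the $U_{\xi_C\xi_C'}$ of Section \ref{sec:the_diagonal} are product across clusters, and the measurement projectors on distinct $C$ act on disjoint supports and thus commute. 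For each triple $(C, \gamma_C, (\xi_C, \xi_C'))$ I run robust phase estimation once to estimate $\varepsilon_{\xi_C} - \varepsilon_{\xi_C'}$ to precision $\epsilon' = \epsilon/k$ with confidence $1 - \delta/k$.

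Error propagation completes the argument. Each $\varepsilon_\xi$ is a telescoping sum of at most $k$ pair differences along the root-to-$\xi$ path in $\mathcal{T}_{\mathrm{SPT}}^C$, so a union bound over this path upgrades the per-pair guarantee to $|\hat\varepsilon_\xi - \varepsilon_\xi| \leq \epsilon$ with confidence $1 - \delta$. The Hadamard inversion \eqref{eq:eigval_to_coef} is non-expanding in $\ell^\infty$ once the prefactor $2^{-|C|}$ is absorbed against the sum of $2^{|C|}$ signed terms, so the resulting estimates of the diagonal parameters $\lambda_b$ inherit the same $\epsilon$ accuracy, and since every parameter $\lambda_a$ of $H$ coincides with one such $\lambda_b$ for an appropriate pair $(C, \gamma_C)$ (with $\gamma_C$ extending the support pattern of $E_a$ by Pauli $X$/$Y$/$Z$ choices), this yields $\Pr[|\hat\lambda_a - \lambda_a| > \epsilon] < \delta$ as required. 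Plugging $\epsilon' = \epsilon/k$ and $\vartheta = \delta/k$ into \eqref{eq:total_evo_time_pair_diff}--\eqref{eq:num_expr_pair_diff} and multiplying by $\chi \cdot 3^k \cdot (2^k - 1) = \Or(6^k \mathfrak{d}^2)$ yields the stated bounds $\Or(k 6^k \mathfrak{d}^2 \epsilon^{-1} \log(k\delta^{-1}))$ on total evolution time and $\Or(6^k \mathfrak{d}^2 \mathrm{poly}(k + \log(k\epsilon^{-1})) \log(k\delta^{-1}))$ on experiment count.

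The expected obstacle is nothing quantum — it is the layered bookkeeping: translating per-pair RPE precision to per-eigenvalue precision through the SPT path, and then to per-parameter precision through the Hadamard transform, while threading the matching union bounds through both layers and through the outer sweep over bases, clusters, and colors. Everything else is supplied by earlier results: the robustness margin of \cite{KimmelLowYoder2015robust}, the decoupling Lemma \ref{lem:effective_ham_combined}, and the chromatic bound $\chi = \Or(\mathfrak{d}^2)$ from Lemma \ref{lem:coloring_cluster_int_graph}. The separate problem of actually discharging assumption \eqref{eq:err_tolerance_in_thm} by bounding the qDRIFT reshaping error is deferred to the later analysis in Section \ref{sec:deviate_limit_dynamics}, which is where the only genuinely new technical estimate for this algorithm is required.
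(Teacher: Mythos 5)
Your proposal is correct and follows essentially the same route as the paper: the paper's own argument for Theorem~\ref{thm:total_evolution_time_assuming_const_err} is exactly the material in Sections~\ref{sec:the_diagonal} and~\ref{sec:est_all_bases_clusters} — the $1/\sqrt{8}$ robustness threshold of robust phase estimation under assumption~\eqref{eq:err_tolerance_in_thm}, the per-pair guarantees~\eqref{eq:total_evo_time_pair_diff}--\eqref{eq:num_expr_pair_diff}, the substitution $\epsilon'=\epsilon/k$, $\vartheta=\delta/k$, telescoping along the shortest-path tree, non-expansion of the Hadamard inversion, parallelism within a color class via the decoupling of Lemma~\ref{lem:effective_ham_combined}, and the final multiplication by $3^k\chi(2^k-1)=\Or(6^k\mathfrak{d}^2)$. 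One cosmetic note: the union bound you (and the paper) carry out runs over the $\leq k$ edges of the root-to-$\xi$ path, giving a per-$\varepsilon_\xi$ guarantee; to conclude that a fixed $\lambda_b$ (which depends on all $2^{|C|}$ eigenvalues) is accurate with probability $1-\delta$, the union bound should really run over the $2^{|C|}-1$ pair differences, costing a factor $\log(2^k/\delta)$ rather than $\log(k/\delta)$ — but since $k=\Or(1)$ this is absorbed into the constants and does not affect the stated scaling.
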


By Theorem \ref{thm:num_rand_pauli_needed}, the condition \eqref{eq:err_tolerance_in_thm} in the above theorem can be satisfied by choosing $r=\Or(t^2)$. Therefore we arrive at our main result:
\begin{thm}[Learning many-body Hamiltonian by reshaping with randomization]
    \label{thm:ham_learn_partial_pauli_twirling}
    Assume that $H$ is a low-intersection Hamiltonian defined in Definition \ref{defn:low_intersection_ham}. Then using Algorithm \ref{alg:ham_learn}, we can generate estimates $\{\hat{\lambda}_a\}$ for parameters $\{\lambda_a\}$ in \eqref{eq:Ham_gen_low_intersect}, such that
    \begin{equation}
    \Pr[|\hat{\lambda}_a-\lambda_a|>\epsilon]<\delta
    \end{equation}
    for all $a\in[M]$ with the following cost:
    \begin{enumerate}
        \item $\Or(\epsilon^{-1}\log(\delta^{-1}))$ total evolution time;
        \item $\Or(\polylog(\epsilon^{-1})\log(\delta^{-1}))$ number of experiments;
        \item $\Or(N\epsilon^{-2}\polylog(\epsilon^{-1})\log(\delta^{-1}))$ single-qubit Clifford gates;
    \end{enumerate}
    Moreover, this algorithm is robust against SPAM error.
\end{thm}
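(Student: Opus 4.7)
The plan is to combine Theorem \ref{thm:total_evolution_time_assuming_const_err}, which gives the evolution-time and experiment-count bounds under the assumption that each experiment implements the reshaped dynamics to within constant total-variation error, with Theorem \ref{thm:num_rand_pauli_needed}, which tells us how densely we must insert random Pauli operators to meet that assumption. Since $k=\Or(1)$ and $\mathfrak{d}=\Or(1)$ for a low-intersection Hamiltonian, the prefactors $k\,6^k\mathfrak{d}^2$ and $6^k\mathfrak{d}^2\,\poly(k+\log(k\epsilon^{-1}))$ collapse to constants and $\polylog(\epsilon^{-1})$ respectively, which already yields claims~(1) and~(2) modulo the verification of the reshaping assumption.

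First, I would verify the hypothesis \eqref{eq:err_tolerance_in_thm} of Theorem \ref{thm:total_evolution_time_assuming_const_err}. By Theorem \ref{thm:num_rand_pauli_needed}, for an experiment of evolution time $t$ it suffices to take $r=\Or(t^2)$ random-Pauli insertions to drive the approximation error below any prescribed constant, in particular below $1/\sqrt{8}$. Each longest experiment invoked by Algorithm~\ref{alg:ham_learn} uses $t=\Or(\epsilon^{-1})$ (this is the longest segment used by robust phase estimation \cite{KimmelLowYoder2015robust} when estimating $\varepsilon_{\xi}-\varepsilon_{\xi'}$ to accuracy $\Or(\epsilon)$), so the reshaping hypothesis is met experiment by experiment with a finite insertion rate that is independent of $\delta$.

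Next I would bound the single-qubit Clifford gate count. A single random Pauli $\bar P_j$ drawn by Algorithm \ref{alg:generate_random_Pauli} is a product of at most $N$ single-qubit Paulis and is implemented by $\Or(N)$ single-qubit Clifford gates; the state-preparation unitaries $U_{\xi\xi'}$, $V_{\xi\xi'}$ and their inverses each contribute only $\Or(k)=\Or(1)$ more gates per experiment. Consequently, an experiment of duration $t$ costs $\Or(N t^2)$ single-qubit Clifford gates. In the Kimmel--Low--Yoder schedule, evolution times are dyadic, $t_\ell \asymp 2^\ell \tau'$, each used $\Or(\log(\delta^{-1}))$ times, and the largest is $t_{\max}=\Or(\epsilon^{-1})$; summing $\sum_\ell t_\ell^2$ is dominated by the last level and gives $\Or(\epsilon^{-2})$ Pauli insertions per (pair, cluster, basis) estimation round. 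Multiplying by the $\Or(N)$ gates per insertion, by $\Or(\log(\delta^{-1}))$ repetitions used to boost confidence, by the $\chi\cdot 3^k\cdot(2^k-1)=\Or(1)$ rounds aggregated across colors, eigenbases and shortest-path-tree edges, and by the $\polylog(\epsilon^{-1})$ factor from median boosting yields the claimed $\Or(N\epsilon^{-2}\polylog(\epsilon^{-1})\log(\delta^{-1}))$ bound.

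Finally I would argue SPAM robustness. The only nontrivial ingredient that consumes measurement statistics is robust phase estimation, and the variant of \cite{KimmelLowYoder2015robust} we invoke tolerates any symmetric deviation of the observed probabilities from \eqref{eq:probabilities_cos_sin} up to a constant (namely $1/\sqrt{8}$). Both the qDRIFT approximation error (controlled via Theorem \ref{thm:num_rand_pauli_needed}) and any deviation of the actual initial state and POVM from the ideal ones affect these probabilities only in the form of a bounded perturbation; choosing the constants $r=\Or(t^2)$ and the SPAM error threshold such that their sum stays below $1/\sqrt{8}$ preserves the correctness guarantees and adaptivity of the phase-estimation routine. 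The main delicate point — and the step I would expect to require the most care — is ensuring that the parallelism across $C\in\mathcal{V}_c$ in Algorithm \ref{alg:ham_learn} does not inflate any of these budgets: because qubits in distinct clusters of the same color are decoupled by Lemma \ref{lem:effective_ham_combined}, the Pauli insertions for different $C$ are drawn independently and measured commutingly, so the same random-Pauli layer serves all clusters simultaneously and the per-color cost bounds above remain additive rather than multiplicative in $|\mathcal{V}_c|$.
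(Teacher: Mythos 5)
Your proposal is essentially correct and follows the same route as the paper: invoke Theorem~\ref{thm:total_evolution_time_assuming_const_err} for the evolution-time and experiment-count bounds, invoke Theorem~\ref{thm:num_rand_pauli_needed} to certify the constant-error assumption with $r=\Or(t^2)$, and then count Clifford gates by noting that each random Pauli layer is $\Or(N)$ gates. Your dyadic-sum accounting $\sum_\ell t_\ell^2 \lesssim t_{\max}^2$ is a slightly sharper way of saying what the paper states more crudely (``$t=\Or(\epsilon^{-1})$ for all experiments, so $\Or(N\epsilon^{-2})$ gates per experiment times the number of experiments''), and both arrive at the same bound; the final paragraph on why parallelism across $C\in\mathcal{V}_c$ does not inflate the budget is exactly the argument the paper relies on.

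One small inaccuracy: you state that the state-preparation unitaries $U_{\xi\xi'},V_{\xi\xi'}$ contribute only $\Or(k)=\Or(1)$ gates per experiment, but since each experiment prepares product states (and measures) across \emph{all} clusters in $\mathcal{V}_c$ in parallel, the state preparation and measurement layers cost $\Or(N)$ single-qubit Clifford gates, as the paper's proof notes. This does not change the asymptotics because the dominant term is the $\Or(Nt^2)$ random-Pauli layers, but it is the correct per-experiment tally.
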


Note that in this theorem we assume $\mathfrak{d}=\Or(1)$ and $k=\Or(1)$, and therefore do not consider the dependence on these two parameters. 

\begin{proof}
The total evolution time and the number of experiments are direct consequences of Theorem \ref{thm:total_evolution_time_assuming_const_err}. Therefore we only need to focus on how many single-qubit Clifford gates are needed. For each experiment, we need $\Or(N)$ such gates in $U_{\xi\xi'}$ (defined in \eqref{eq:defn_U_xixi'}) to prepare the initial state and in $V_{\xi\xi'}$ (defined in \eqref{eq:defn_V_xixi'}) to perform measurements. These two tasks require $\Or(N\polylog(\epsilon^{-1})\log(\delta^{-1}))$ single-qubit Clifford gates as a result. For each experiment, if the time evolution goes from $0$ to $t$, then $r=\Or(t^2)$, meaning that we need $\Or(Nt^2)$ single-qubit Clifford gates to implement the random Pauli operators. $t= \Or(\epsilon^{-1})$ for all experiments due to \cite{KimmelLowYoder2015robust}, and therefore the total number of single-qubit Clifford gates is $\Or(N\epsilon^{-2})$ multiplied by the number of experiments $\Or(\polylog(\epsilon^{-1})\log(\delta^{-1}))$, yielding the scaling stated in the theorem.

To see why the algorithm is robust against SPAM error, note that the probabilities of the output distribution can differ from those in \eqref{eq:cos_prob} and \eqref{eq:sin_prob} by as much as $1/\sqrt{8}$, and the robust phase estimation algorithm in \cite{KimmelLowYoder2015robust} will still work. As a result our algorithm can tolerate SPAM error below the threshold $1/\sqrt{8}$.
\end{proof}

\section{Deviation from the limiting dynamics in the randomization approach}
\label{sec:deviate_limit_dynamics}

In this section we will prove Theorem \ref{thm:num_rand_pauli_needed}. In fact, we will prove a stronger result, as stated in the following theorem:
\begin{thm}
    \label{thm:err_bound_heisenberg_picture}
    We assume that $H$ is a low-intersection Hamiltonian as defined in Definition \ref{defn:low_intersection_ham}. We assume random Pauli operators $\bar{P}_l$, $1\leq l\leq r$, are generated independently and are identically distributed as $\bar{P}$, which satisfies
    \begin{equation}
    \label{eq:condition_decoupled_ham}
        \mathbb{E}[\bar{P}H\bar{P}] = H_{\mathrm{eff}}^C + H_{\mathrm{env}},
    \end{equation}
    where $H_{\mathrm{eff}}^C$ is supported on a subsystem $C$ ($|C|=\Or(1)$) and $H_{\mathrm{env}}$ is supported on the rest of the system. Then
    \begin{equation}
    \label{eq:heisenberg_picture_err_bound_final_time}
        \Bigg\|\mathbb{E}\left[\prod_{1\leq l\leq r}^{\rightarrow}(\bar{P}_l e^{iH\tau}\bar{P}_l)(O_C\otimes I)\prod_{1\leq l\leq r}^{\leftarrow}(\bar{P}_l e^{-iH\tau}\bar{P}_l)\right] - e^{iH^C_{\mathrm{eff}}t}O_C e^{-iH^C_{\mathrm{eff}}t}\otimes I \Bigg\| = \Or(t^2/r),
    \end{equation}
    for any $O_C$ supported on $C$ satisfying $\|O_C\|\leq 1$. In particular, the constant in $\Or(t^2/r)$ does not depend on the system size $N$ or the number of terms $M$.
\end{thm}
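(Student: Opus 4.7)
My plan is to cast the whole estimate in the Heisenberg picture and then telescope. Writing $A_{\bar{P}} \triangleq \bar{P}H\bar{P}$, so that $\bar{P}e^{\pm iH\tau}\bar{P} = e^{\pm iA_{\bar{P}}\tau}$ (using $\bar{P}^2 = I$), define the two adjoint channels
$$\mathcal{E}_\tau^*(X) \triangleq \Exp_{\bar{P}}\!\left[e^{iA_{\bar{P}}\tau}X e^{-iA_{\bar{P}}\tau}\right], \qquad \mathcal{U}_\tau^*(X) \triangleq e^{i(H_{\mathrm{eff}}^C + H_{\mathrm{env}})\tau}\,X\, e^{-i(H_{\mathrm{eff}}^C + H_{\mathrm{env}})\tau}.$$
Because the $\bar{P}_l$ are i.i.d., integrating the expectation from the innermost copy outward shows that the first operator inside the norm in \eqref{eq:heisenberg_picture_err_bound_final_time} equals $(\mathcal{E}_\tau^*)^r(O_C\otimes I)$. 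Since $H_{\mathrm{env}}$ is supported on the complement of $C$, it commutes with both $O_C\otimes I$ and $H_{\mathrm{eff}}^C$, so $(\mathcal{U}_\tau^*)^r(O_C\otimes I) = e^{iH_{\mathrm{eff}}^C t}O_C e^{-iH_{\mathrm{eff}}^C t}\otimes I$, i.e., the second operator. The task reduces to bounding $\|(\mathcal{E}_\tau^*)^r(O_C\otimes I) - (\mathcal{U}_\tau^*)^r(O_C\otimes I)\|$.

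\textbf{Telescoping to a per-step bound.} I would use the identity
$$(\mathcal{E}_\tau^*)^r - (\mathcal{U}_\tau^*)^r = \sum_{k=0}^{r-1}(\mathcal{E}_\tau^*)^{k}\circ (\mathcal{E}_\tau^* - \mathcal{U}_\tau^*)\circ (\mathcal{U}_\tau^*)^{r-1-k}.$$
The channel $\mathcal{E}_\tau^*$ is an average of unitary conjugations, hence contractive in the operator norm; $\mathcal{U}_\tau^*$ preserves operators of the form $\tilde{O}_C\otimes I$ with $\|\tilde{O}_C\|\le\|O_C\|\le 1$, again because $H_{\mathrm{env}}$ acts trivially on $C$. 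Everything then reduces to a uniform per-step bound
$$\|(\mathcal{E}_\tau^* - \mathcal{U}_\tau^*)(\tilde{O}_C\otimes I)\| = \Or(\tau^2),$$
with constant independent of $N$ and $M$; summing $r$ such terms gives the required $\Or(t^2/r)$ since $r\tau^2 = t^2/r$.

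\textbf{Per-step bound via second-order remainder.} I would plug both $A_{\bar{P}}$ and $H_{\mathrm{eff}}^C+H_{\mathrm{env}}$ into the integral remainder identity
$$e^{iA\tau}X e^{-iA\tau} = X + i\tau[A,X] - \int_0^\tau\!\!\int_0^s e^{iAu}[A,[A,X]]e^{-iAu}\,du\,ds,$$
whose remainder is bounded in operator norm by $\tfrac{1}{2}\tau^2\|[A,[A,X]]\|$. The first-order terms cancel exactly because the hypothesis $\Exp[A_{\bar{P}}] = H_{\mathrm{eff}}^C + H_{\mathrm{env}}$ gives $\Exp[[A_{\bar{P}},\tilde{O}_C\otimes I]] = [H_{\mathrm{eff}}^C+H_{\mathrm{env}},\tilde{O}_C\otimes I]$. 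The $\mathcal{U}_\tau^*$ remainder is immediate: $[H_{\mathrm{env}},\tilde{O}_C\otimes I]=0$ collapses it to $\|[H_{\mathrm{eff}}^C,[H_{\mathrm{eff}}^C,\tilde{O}_C]]\|\le 4\|H_{\mathrm{eff}}^C\|^2 = \Or(1)$, where $\|H_{\mathrm{eff}}^C\|=\Or(1)$ because $H_{\mathrm{eff}}^C$ is supported on the $\Or(1)$-qubit region $C$.

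\textbf{Main obstacle: the nested commutator $[A_{\bar{P}},[A_{\bar{P}},\tilde{O}_C\otimes I]]$.} This is where the argument is genuinely delicate, because $\|A_{\bar{P}}\|$ grows with $N$, so the naive bound $4\|A_{\bar{P}}\|^2\|\tilde{O}_C\|$ is useless for $N$-independence. The fix is to exploit locality. Writing $A_{\bar{P}} = \sum_a \pm\lambda_a E_a$ (with signs from $\bar{P}E_a\bar{P}=\pm E_a$), only terms with $\operatorname{Supp}(E_a)\cap C\neq\emptyset$ contribute to $[A_{\bar{P}},\tilde{O}_C\otimes I]$; by the low-intersection hypothesis (Definition \ref{defn:low_intersection_ham}) there are at most $|C|\mathfrak{d}=\Or(1)$ such terms, and their joint support is a region $R$ of $\Or(k\mathfrak{d})$ qubits with $\|[A_{\bar{P}},\tilde{O}_C\otimes I]\|=\Or(1)$. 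Iterating once, only $E_b$ with $\operatorname{Supp}(E_b)\cap R\neq\emptyset$ contribute to the outer commutator, still $\Or(1)$ in number. This yields $\Exp\|[A_{\bar{P}},[A_{\bar{P}},\tilde{O}_C\otimes I]]\| = \Or(1)$ uniformly in $N$ and $M$, closing the per-step bound and, via the telescoping sum, the theorem. The philosophical point is that every error estimate must be carried out locally around $C$ rather than globally: after just two layers of commutator, the effective ``light cone'' of $\tilde{O}_C$ under the low-intersection Hamiltonian is still a bounded region, which is exactly what the proof needs.
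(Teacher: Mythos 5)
Your proof is correct, and it takes a genuinely different organizational route than the paper's. The paper introduces an intermediate Euler-discretized observable $\bar{O}^{(u)}_C=(I+i\tau\,\mathrm{ad}_{H^C_{\mathrm{eff}}})^u O_C$ and splits the estimate into two separate lemmas: one bounding the gap between $\bar{O}^{(r)}_C\otimes I$ and the twirled evolution $O_C^{(r)}$ (the ``decoupling error''), and another bounding the gap between $\bar{O}^{(r)}_C$ and the exact local flow $O_C(t)$ (the ``local dynamics error''). Because the Euler iterate is not an exact unitary conjugation, the paper also needs a bootstrap bound $\|\bar{O}^{(u)}_C\|\le 2$ (Eq.~\eqref{eq:barOC_norm}) to control the growth of the intermediate object. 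Your telescoping
\[
(\mathcal{E}_\tau^*)^r - (\mathcal{U}_\tau^*)^r = \sum_{k=0}^{r-1}(\mathcal{E}_\tau^*)^{k}\circ (\mathcal{E}_\tau^* - \mathcal{U}_\tau^*)\circ (\mathcal{U}_\tau^*)^{r-1-k}
\]
bypasses the intermediate object entirely: since $\mathcal{E}_\tau^*$ is a convex mixture of unitary conjugations it is contractive, and $\mathcal{U}_\tau^*$ is exact conjugation that preserves the form $\tilde{O}_C\otimes I$ and its norm, so all you need is the single per-step bound $\|(\mathcal{E}_\tau^*-\mathcal{U}_\tau^*)(\tilde{O}_C\otimes I)\|=\Or(\tau^2)$. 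Where the paper expands the full Taylor tail $\sum_{j\ge 2}\tfrac{\tau^j}{j!}\mathrm{ad}_H^j(\cdot)$ and counts the combinatorial set $\mathcal{A}_j$ at every order, you exploit the exact first-order cancellation from $\Exp[A_{\bar P}]=H_{\mathrm{eff}}^C+H_{\mathrm{env}}$ and only need the second-order integral remainder, which is bounded by $\tfrac{\tau^2}{2}\|[A_{\bar P},[A_{\bar P},\tilde{O}_C\otimes I]]\|$. Both proofs ultimately rest on the same locality fact --- that in a low-intersection Hamiltonian the nested commutator with a $C$-local operator involves only $\Or(1)$ terms after any fixed number of nestings, making the constant $N$- and $M$-independent --- but yours deploys it once (for the double commutator) rather than at each order of the Taylor expansion. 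Two minor points worth tightening in a write-up: your count ``at most $|C|\mathfrak{d}$ terms intersecting $C$'' is not exactly the bound the paper's Definition~\ref{defn:low_intersection_ham} gives (which is $\mathfrak{d}+1$ since $C$ is the support of one term), though both are $\Or(1)$; and the claim $\|H_{\mathrm{eff}}^C\|=\Or(1)$ follows not merely from $C$ having $\Or(1)$ qubits but because $H_{\mathrm{eff}}^C$ is a sum of at most $4^{|C|}$ Pauli terms with coefficients bounded by $1$, as the paper notes at the end of its Lemma~\ref{lem:OCt_and_barOC_diff}.
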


We will postpone proving this theorem to later.
As can be seen from \eqref{eq:heisenberg_picture_err_bound_final_time}, this theorem concerns the evolution of a local observable $O_C$ in the Heisenberg picture. 
At time $t$, with the system evolving under $H$ and random Pauli operators inserted, $O_C$ becomes
\begin{equation}
\mathbb{E}\left[\prod_{1\leq l\leq r}^{\rightarrow}(\bar{P}_l e^{iH\tau}\bar{P}_l)(O_C\otimes I)\prod_{1\leq l\leq r}^{\leftarrow}(\bar{P}_l e^{-iH\tau}\bar{P}_l)\right]
\end{equation}
in the Heisenberg picture
and in the $\tau\to 0$ limit it should converge to
\begin{equation}
e^{iH^C_{\mathrm{eff}}t}O_C e^{-iH^C_{\mathrm{eff}}t}\otimes I.
\end{equation}
What the above theorem says is the following: 
when then random Pauli operators $\bar{P}_l$'s are applied sufficiently frequently, the evolution of $O_C$ is entirely determined by the local effective Hamiltonian $H^C_{\mathrm{eff}}$ up to a small error. The local effective Hamiltonian $H^C_{\mathrm{eff}}$, in the context of our algorithm, is $H^C_{\mathrm{diag}}(\gamma)$ defined in \eqref{eq:new_effective_ham_diag}. If we turn our attention to the observable expectation value, then the above theorem directly enables us to bound the error in observable expectation value, through the following corollary:

\begin{cor}
    \label{cor:err_bound_expectation_val}
    Under the same assumption as Theorem \ref{thm:err_bound_heisenberg_picture}, if the system is initially in a state $\rho(0)$, and at time $t$ evolves to
    \begin{equation}
    \rho(t) = \mathbb{E}\left[\prod_{1\leq l\leq r}^{\leftarrow}e^{-i\bar{P}_j H\bar{P}_j\tau}\rho(0)\prod_{1\leq l\leq r}^{\rightarrow}e^{i\bar{P}_j H\bar{P}_j\tau}\right],
    \end{equation}
    then 
    \begin{equation}
    \label{eq:expectation_value_err_bound}
        \left|\Tr[(O_C\otimes I)\rho(t)]-\Tr[O_C e^{-iH_{\mathrm{eff}}^Ct}\rho_Ce^{iH_{\mathrm{eff}}^Ct}]\right|=\Or(t^2/r),
    \end{equation}
    where $\rho_C=\Tr_{\mathrm{env}}\rho(0)$ ($\Tr_{\mathrm{env}}$ denotes the partial trace after tracing out the system outside $C$), and $O_C$ is supported on $C$ with $\|O_C\|\leq 1$.
\end{cor}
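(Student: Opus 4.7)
The plan is to reduce the Schrödinger-picture statement in the corollary to the Heisenberg-picture bound already established in Theorem~\ref{thm:err_bound_heisenberg_picture}, and then dispose of the partial trace by a short computation. Since each $\bar P_l$ is a Pauli operator (so $\bar P_l^2=I$), we have $e^{-i\bar P_l H\bar P_l\tau}=\bar P_l e^{-iH\tau}\bar P_l$, and the expression for $\rho(t)$ takes the form $\mathbb{E}[U_r\cdots U_1\rho(0)U_1^\dagger\cdots U_r^\dagger]$ with $U_l=\bar P_l e^{-iH\tau}\bar P_l$. By cyclicity of the trace and linearity of expectation,
\begin{equation}
    \Tr[(O_C\otimes I)\rho(t)]=\Tr\!\left[\mathbb{E}\!\left[\prod_{1\leq l\leq r}^{\rightarrow}(\bar P_l e^{iH\tau}\bar P_l)(O_C\otimes I)\prod_{1\leq l\leq r}^{\leftarrow}(\bar P_l e^{-iH\tau}\bar P_l)\right]\rho(0)\right].
\end{equation}
The operator in brackets is exactly the quantity whose deviation from $e^{iH^C_{\mathrm{eff}}t}O_C e^{-iH^C_{\mathrm{eff}}t}\otimes I$ is controlled by Theorem~\ref{thm:err_bound_heisenberg_picture}.

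Next I would apply the trace--operator-norm inequality $|\Tr[A\rho(0)]|\leq \|A\|\,\|\rho(0)\|_1=\|A\|$ (since $\rho(0)$ is a density matrix) to the difference between the actual Heisenberg-evolved observable and the idealized one. This immediately yields
\begin{equation}
    \Big|\Tr[(O_C\otimes I)\rho(t)]-\Tr\!\left[(e^{iH^C_{\mathrm{eff}}t}O_C e^{-iH^C_{\mathrm{eff}}t}\otimes I)\,\rho(0)\right]\Big|=\Or(t^2/r),
\end{equation}
using the bound~\eqref{eq:heisenberg_picture_err_bound_final_time}. It then remains to rewrite the second term inside the absolute value. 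By the defining property of the partial trace, $\Tr[(A\otimes I)\rho(0)]=\Tr[A\,\Tr_{\mathrm{env}}\rho(0)]=\Tr[A\rho_C]$ for any operator $A$ supported on $C$, and one more application of cyclicity converts $\Tr[e^{iH^C_{\mathrm{eff}}t}O_C e^{-iH^C_{\mathrm{eff}}t}\rho_C]$ into $\Tr[O_C e^{-iH^C_{\mathrm{eff}}t}\rho_C e^{iH^C_{\mathrm{eff}}t}]$, which is the desired comparison term.

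There is essentially no obstacle here: the corollary is a clean dualization of Theorem~\ref{thm:err_bound_heisenberg_picture}. The only points requiring care are (i)~making sure the Heisenberg-picture product in the trace appears in the correct order (which is why the Pauli involution identity $\bar P_l^2=I$ is invoked), and (ii)~noting that the operator-norm bound from Theorem~\ref{thm:err_bound_heisenberg_picture} has a constant independent of $N$ and $M$, so the final estimate~\eqref{eq:expectation_value_err_bound} inherits the same system-size independence. Everything else is bookkeeping.
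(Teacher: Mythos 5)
Your proof is correct and takes essentially the same approach as the paper: dualize the Schr\"odinger-picture trace to the Heisenberg picture, apply the bound $|\Tr[A\rho(0)]|\le\|A\|$ together with Theorem~\ref{thm:err_bound_heisenberg_picture}, and then convert $\Tr[(e^{iH^C_{\mathrm{eff}}t}O_C e^{-iH^C_{\mathrm{eff}}t}\otimes I)\rho(0)]$ into the $\rho_C$ form via the partial-trace identity. You simply make explicit two small steps the paper treats as immediate (the Pauli involution rewriting of $\rho(t)$ and the partial-trace reduction).
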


Before we prove this corollary let us first introduce some notations. 
The actual dynamics of the operator $O_C$ supported on $C$ at time $t_u=u\tau$ for $1\leq u\leq r$, when the system is evolving under $H$ with random Pauli operators inserted as described in Section \ref{sec:summary_partial_pauli_twirl}, is described by
\begin{equation}
\label{eq:OC_dynamics_actual}
    O_C^{(u)}=\mathbb{E}\left[\prod_{1\leq l\leq u}^{\rightarrow}(\bar{P}_l e^{iH\tau}\bar{P}_l)(O_C\otimes I)\prod_{1\leq l\leq u}^{\leftarrow}(\bar{P}_l e^{-iH\tau}\bar{P}_l)\right],
\end{equation}
where $O_C^{(r)}$ is the operator we get at time $t$, i.e., the end of the experiment.
The limiting dynamics is, for $\tau\to 0$,
\begin{equation}
\label{eq:OC_dynamics_limit}
    O_C(t)=e^{iH^C_{\mathrm{eff}}t}O_C e^{-iH^C_{\mathrm{eff}}t}.
\end{equation}

\begin{proof}[Proof of Corollary \ref{cor:err_bound_expectation_val}]
By Theorem \ref{thm:err_bound_heisenberg_picture} we have $\|O_C^{(r)}-O_C(t)\otimes I\|=\Or(t^2/r)$. The left-hand side of \eqref{eq:expectation_value_err_bound} can be written as 
\begin{equation}
\begin{aligned}
&\Big|\Tr[\rho(0)O_C^{(r)}]-\Tr[\rho(0)(O_C(t)\otimes I)]\Big| \\
&= \Big|\Tr[\rho(0)(O_C^{(r)}-O_C(t)\otimes I)]\Big| \\
&\leq \|O_C^{(r)}-O_C(t)\otimes I\|.
\end{aligned}
\end{equation}
Therefore by Theorem \ref{thm:err_bound_heisenberg_picture} we arrive at \eqref{eq:expectation_value_err_bound}.
\end{proof}

 This corollary, in turn, directly implies Theorem \ref{thm:num_rand_pauli_needed}. 
 \begin{proof}[Proof of Theorem \ref{thm:num_rand_pauli_needed}]
 By Lemma \ref{lem:effective_ham_combined}, for a fixed cluster $C$, we can write
 \begin{equation}
 \mathbb{E}[\bar{P}H\bar{P}] = H_{\mathrm{diag}}^C(\gamma_C) + \sum_{C'\in\mathcal{V}_c,C'\neq C} H_{\mathrm{diag}}^{C'}(\gamma_{C'}).
 \end{equation}
 Here the first term on the right-hand side is supported only on $C$ and the support of the second term on the right-hand side does not overlap with $C$, by virtue of the coloring in Lemma \ref{lem:coloring_cluster_int_graph}.
 Therefore the effective Hamiltonian has the form as required in \eqref{eq:condition_decoupled_ham}. Thus by Corollary \ref{cor:err_bound_expectation_val} we have
 \begin{equation}
 \Big|\Tr[(O_C\otimes I)\rho(t)]-\Tr[O_C e^{-iH_{\mathrm{diag}}^C(\gamma_C) t}\rho_C e^{iH_{\mathrm{diag}}^C(\gamma_C) t}]\Big|=\Or(\frac{t^2}{r}).
 \end{equation}
 In order to ensure that $\Or(\frac{t^2}{r})\leq \varepsilon$, it suffices to choose $r\geq r_0$ for some $r_0=\Or(t^2/\varepsilon)$.
 \end{proof}
 
We will then set about to prove Theorem \ref{thm:err_bound_heisenberg_picture}. 
\begin{proof}[Proof of Theorem \ref{thm:err_bound_heisenberg_picture}]
Using the notation introduced in \eqref{eq:OC_dynamics_actual} and \eqref{eq:OC_dynamics_limit}, \eqref{eq:heisenberg_picture_err_bound_final_time} can be written as
\begin{equation}
\label{eq:actual_limit_err}
    \|O_C^{(r)}-O_C(t)\otimes I\|=\Or\left(\frac{t^2}{r}\right).
\end{equation}

We will prove this inequality in two steps. We define
\begin{equation}
\label{eq:bar_OC_defn}
    \bar{O}^{(u)}_C=\left(I+i\tau\mathrm{ad}_{H^C_{\mathrm{eff}}}\right)^u O_C,
\end{equation}
This operator can be seen as a result of simulating the dynamics of $O_C(t)$ up to first order using Euler's method. It satisfy the following recursion relation: 
\begin{equation}
\label{eq:bar_OC_recursion}
    \bar{O}^{(u)}_C = \bar{O}^{(u-1)}_C + i\tau [H^C_{\mathrm{eff}},\bar{O}^{(u-1)}_C]
\end{equation}
with $\bar{O}^{(0)}_C=O_C$.

In the first step we will show that
\begin{equation}
\label{eq:decoupling_err_bound}
    \|\bar{O}^{(r)}_C\otimes I - O^{(r)}_C\|=\Or\left(\frac{t^2}{r}\right).
\end{equation}
Note that the right-hand side does not depend on the system size. Because $\bar{O}^{(r)}_C\otimes I$ acts non-trivially only on the cluster $C$, what the above bound means is that $O^{(r)}_C$ approximately only acts non-trivially on $C$, despite the fact that the dynamics due to $H$ will spread $O_C$ to the rest of the system. This inequality will be proved as Lemma \ref{lem:decoupling_err_bound} in Section \ref{sec:decoupling_err}.

In the second step, we will show that
\begin{equation}
\label{eq:OCt_and_barOC_diff}
    \|O_C(t)-\bar{O}_C^{(r)}\|= \Or\left(\frac{t^2}{r}\right)
\end{equation}
Again the right-hand side does not depend on the system size. This inequality will be proved as Lemma \ref{lem:OCt_and_barOC_diff} in Section \ref{sec:local_dynamics_err}. For the above inequality, both $O_C(t)$ and $\bar{O}_C^{(r)}$ are local operators supported on $C$, and therefore it characterizes the deviation of the local dynamics from the limiting dynamics.
Combining \eqref{eq:decoupling_err_bound} and \eqref{eq:OCt_and_barOC_diff}, we have \eqref{eq:actual_limit_err} by the triangle inequality.
\end{proof}

\subsection{The decoupling error}
\label{sec:decoupling_err}

In this section we will prove \eqref{eq:decoupling_err_bound}. We restate it in the following lemma
\begin{lem}
\label{lem:decoupling_err_bound}
Under the same assumptions as in Theorem \ref{thm:err_bound_heisenberg_picture}, we have
\begin{equation*}
    \|\bar{O}^{(r)}_C\otimes I - O^{(r)}_C\|=\Or\left(\frac{t^2}{r}\right),
\end{equation*}
where $O^{(r)}_C$ and $\bar{O}^{(r)}_C$ are defined in \eqref{eq:OC_dynamics_actual} and \eqref{eq:bar_OC_defn} respectively.
\end{lem}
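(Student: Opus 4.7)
The plan is to combine a telescoping identity with a per-step Taylor expansion, where the key point is to show that the remainder of each step is $O(\tau^2)$ with a constant independent of $N$. Introduce the one-step Heisenberg channel $\Phi(X) \triangleq \mathbb{E}_{\bar P}[e^{i\tau \bar P H \bar P} X e^{-i\tau \bar P H \bar P}]$. Because the Paulis $\bar P_1, \ldots, \bar P_r$ are i.i.d., independence gives $O_C^{(l)} = \Phi^l(O_C\otimes I)$. Writing $B_l \triangleq \bar O_C^{(l)}\otimes I$ and using the recursion \eqref{eq:bar_OC_recursion} together with standard telescoping, I would write
\begin{equation*}
O_C^{(r)} - B_r \;=\; \sum_{l=1}^{r}\Phi^{r-l}\bigl(\Phi(B_{l-1}) - B_l\bigr).
\end{equation*}
Since $\Phi$ is a convex combination of unitary conjugations, it is a contraction in operator norm, giving $\|O_C^{(r)} - B_r\| \leq \sum_{l=1}^{r}\|\Phi(B_{l-1}) - B_l\|$, so the task reduces to controlling the per-step error.

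For the per-step bound I would use the second-order Taylor identity with integral remainder: for any realization $\bar P$ with $V = \bar P H \bar P$,
\begin{equation*}
\bigl\|e^{i\tau V} A e^{-i\tau V} - A - i\tau[V,A]\bigr\| \;\leq\; \tfrac{\tau^2}{2}\,\|[V,[V,A]]\|.
\end{equation*}
Averaging over $\bar P$ and invoking assumption \eqref{eq:condition_decoupled_ham} turns the linear term into $i\tau[H_{\mathrm{eff}}^C + H_{\mathrm{env}}, A]$. Taking $A = B_{l-1}$, which is supported on $C$, the $H_{\mathrm{env}}$ commutator vanishes because $H_{\mathrm{env}}$ is supported off $C$; the remaining linear term $i\tau\,\mathrm{ad}_{H_{\mathrm{eff}}^C}(B_{l-1})$ is exactly the one that defines $B_l$. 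Therefore
\begin{equation*}
\|\Phi(B_{l-1}) - B_l\| \;\leq\; \tfrac{\tau^2}{2}\,\sup_{\bar P}\|[V,[V,B_{l-1}]]\|.
\end{equation*}

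The crucial step is to upper bound $\|[V,[V,B_{l-1}]]\|$ by a constant independent of $N$. Writing $V = \sum_a \pm\lambda_a E_a$, only terms $E_a$ whose supports meet $C$ contribute to $[V, B_{l-1}]$, and by Definition~\ref{defn:low_intersection_ham} there are at most $k\mathfrak{d}$ such terms, each with $|\lambda_a|\leq 1$. The resulting operator is supported on a neighborhood of $C$ of size at most $k + k^2\mathfrak{d}$, and one further iteration of the same counting yields $\|[V,[V,B_{l-1}]]\| \leq c_{k,\mathfrak{d}}\|B_{l-1}\|$ for some $c_{k,\mathfrak{d}} = O(k^4\mathfrak{d}^3)$ that does not depend on $N$ or $M$. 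To close the argument I would then bound $\|B_{l-1}\| = \|\bar O_C^{(l-1)}\|$: on the finite-dimensional operator space on $C$ (dimension $4^{|C|}\leq 4^k$), the map $i\,\mathrm{ad}_{H_{\mathrm{eff}}^C}$ is self-adjoint in the Hilbert--Schmidt inner product, so $I + i\tau\,\mathrm{ad}_{H_{\mathrm{eff}}^C}$ has Hilbert--Schmidt operator norm $\sqrt{1+\tau^2\|\mathrm{ad}_{H_{\mathrm{eff}}^C}\|^2}$; iterating and converting between Hilbert--Schmidt and operator norm via the constant factor $2^{|C|/2}$ gives $\|\bar O_C^{(l-1)}\| \leq C_0\|O_C\|$ uniformly for $l\leq r$ whenever $r \gtrsim t^2$ (the only regime in which the claim is nontrivial). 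Summing over $l = 1,\ldots, r$ then produces $\|O_C^{(r)} - B_r\| \leq r\cdot O(\tau^2) = O(t^2/r)$.

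The main obstacle is the dimension-independent control of the double commutator: the naive estimate $\|[V,[V,A]]\|\leq 4\|V\|^2\|A\|$ picks up $\|H\|^2 = O(N^2)$ and would destroy the whole argument. Replacing these global operator norms by the ``light-cone'' count of Pauli terms whose supports touch $C$ is precisely what the low-intersection hypothesis buys us, and the telescoping structure ensures that throughout the argument we only ever need to commute $V$ against operators that remain supported on the fixed constant-size cluster $C$.
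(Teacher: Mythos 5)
Your proposal is correct and the overall skeleton matches the paper's: you telescope $O_C^{(r)} - B_r = \sum_l \Phi^{r-l}\bigl(\Phi(B_{l-1}) - B_l\bigr)$, use contractivity of the mixed-unitary channel $\Phi$ to reduce to per-step errors, and control each per-step error via a light-cone counting argument that keeps everything independent of $N$; this is exactly the paper's recursion $O^{(u)}_C = \bar{O}^{(u)}_C\otimes I + R^{(u)}$, $R^{(u)} = M^{(u-1)} + \Phi(R^{(u-1)})$, written in telescoped form. You diverge in two places. First, for the per-step error you invoke the second-order Taylor theorem with integral remainder and bound only the double commutator $\|[V,[V,B_{l-1}]]\|$; the paper instead expands the full Taylor series, counts the nested commutators $\mathrm{ad}_H^j$ for all $j\geq 2$ via $|\mathcal A_j|\leq j!(\mathfrak d+1)^j$, and sums the resulting geometric tail. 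Both work; yours is slightly shorter but requires a uniform-over-$\bar P$ bound on the double commutator, which you correctly supply since $V=\bar P H\bar P$ only flips the signs of the $\lambda_a$. Second, to bound $\|\bar O_C^{(l)}\|$ you pass to the Hilbert--Schmidt inner product on the $4^{|C|}$-dimensional operator space, use normality of $I + i\tau\,\mathrm{ad}_{H^C_{\mathrm{eff}}}$ to get an exact norm $\sqrt{1+\tau^2\|\mathrm{ad}\|^2}$, and convert back with a $2^{|C|/2}$ factor; the paper instead rewrites $\bar O_C^{(u)}\otimes I = \Phi(\bar O_C^{(u-1)}\otimes I) - M^{(u-1)}$ and uses the contractivity of $\Phi$ together with the already-established $\|M^{(u-1)}\| = O(\tau^2)\|\bar O_C^{(u-1)}\|$ to get $\|\bar O_C^{(u)}\|\leq (1+O(\tau^2))\|\bar O_C^{(u-1)}\|$ directly, avoiding the dimension factor. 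Your Hilbert--Schmidt route works since $|C|\leq k=O(1)$, but the paper's argument is more self-contained.

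Two small inaccuracies worth fixing. The map $\mathrm{ad}_{H^C_{\mathrm{eff}}}$ (not $i\,\mathrm{ad}_{H^C_{\mathrm{eff}}}$) is self-adjoint in the Hilbert--Schmidt inner product; $i\,\mathrm{ad}_{H^C_{\mathrm{eff}}}$ is skew-adjoint, but this is actually what you want: since $I + i\tau\,\mathrm{ad}_{H^C_{\mathrm{eff}}}$ is then a normal operator with spectrum $\{1+i\tau\mu_j\}$ for real $\mu_j$, your stated norm $\sqrt{1+\tau^2\|\mathrm{ad}_{H^C_{\mathrm{eff}}}\|^2}$ is correct. Also, your count of ``at most $k\mathfrak d$'' terms meeting $C$ is an overestimate relative to the paper's count of $\mathfrak d+1$ (recall each interacting cluster $C$ is itself $\operatorname{Supp}(E_{a_0})$ for some $a_0$, and by Definition~\ref{defn:low_intersection_ham} that term overlaps at most $\mathfrak d$ others); either constant is fine for the big-$O$ conclusion, but you should be aware the paper's definition of $\mathfrak d$ is per-term overlap, not per-qubit degree.
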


\begin{proof}
We define
\begin{equation}
    \label{eq:defn_Mu}
    M^{(u-1)} = \mathbb{E}[(\bar{P}_u e^{iH\tau}\bar{P}_u)(\bar{O}^{(u-1)}_C\otimes I)(\bar{P}_u e^{-iH\tau}\bar{P}_u)] - \bar{O}^{(u)}_C\otimes I,
\end{equation}
and
\begin{equation}
    \label{eq:defn_Ru}
    R^{(u)} = M^{(u-1)} + \mathbb{E}[(\bar{P}_u e^{iH\tau}\bar{P}_u)R^{(u-1)}(\bar{P}_u e^{-iH\tau}\bar{P}_u)],
\end{equation}
with $R^{(0)}=0$.
Then we can inductively verify that
\begin{equation}
\label{eq:OC_decomp}
    O^{(u)}_C = \bar{O}^{(u)}_C\otimes I + R^{(u)}.
\end{equation}
Therefore we only need to prove that $\|R^{(r)}\|=\Or\left(\frac{t^2}{r}\right)$.

We first bound $\|M^{(u-1)}\|$. Using Taylor expansion, we have
\begin{equation}
\label{eq:taylor_exp_single_step}
    \mathbb{E}[(\bar{P}_u e^{iH\tau}\bar{P}_u)(\bar{O}^{(u-1)}_C\otimes I)(\bar{P}_u e^{-iH\tau}\bar{P}_u)] 
    =\sum_{j=0}^{\infty}\frac{(i\tau)^j}{j!}\mathbb{E}[\bar{P}_u \mathrm{ad}_H^j(\bar{P}_u(\bar{O}^{(u-1)}_C\otimes I)\bar{P}_u)\bar{P}_u].
\end{equation}
From this we want to upper bound $\|\mathrm{ad}_H^j(\bar{P}_u(\bar{O}^{(u-1)}_C\otimes I)\bar{P}_u\|$ for $j\geq 2$. We have
\begin{equation}
    \label{eq:j_nested_comm}
    \mathrm{ad}_H^j(\bar{P}_u(\bar{O}^{(u-1)}_C\otimes I)\bar{P}_u) = \sum_{a_1,a_2,\cdots,a_j}\lambda_{a_j}\cdots \lambda_{a_1}[E_{a_j},\cdots[E_{a_1},\bar{P}_u(\bar{O}^{(u-1)}_C\otimes I)\bar{P}_u]\cdots].
\end{equation}
Note that for the right-hand side, most of the terms are zero. We need to figure out how many terms are non-zero. For $a_1$, we note that $\bar{P}_u(\bar{O}^{(u-1)}_C\otimes I)\bar{P}_u$ is supported on $C$, and therefore only terms that acts non-trivially with $C$ has non-zero contribution. Therefore we only need to consider $a_1$ such that $\operatorname{Supp} E_{a_1}\cap C\neq \emptyset$. For $a_2$, because $\cdots[E_{a_1},\bar{P}_u(\bar{O}^{(u-1)}_C\otimes I)\bar{P}_u]$ has support on $\operatorname{Supp} E_{a_1}\cup C$, we only need to consider $a_2$ such that $\operatorname{Supp} E_{a_2}\cap(\operatorname{Supp} E_{a_1}\cup C)\neq \emptyset$.
From this we can conclude that the only non-zero terms are for $\vec{a}=(a_1,a_2,\cdots,a_j)$, where $\operatorname{Supp} E_{a_v}\cap(\bigcup_{\nu<v}\operatorname{Supp} E_{a_{\nu}}\cup C)\neq \emptyset$. We denote by $\mathcal{A}_j$ the set of $\vec{a}$ satisfying the above condition, and from \eqref{eq:j_nested_comm} we have
\begin{equation*}
    \mathrm{ad}_H^j(\bar{P}_u(\bar{O}^{(u-1)}_C\otimes I)\bar{P}_u) = \sum_{\vec{a}\in\mathcal{A}_j}\lambda_{a_j}\cdots \lambda_{a_1}[E_{a_j},\cdots[E_{a_1},\bar{P}_u(\bar{O}^{(u-1)}_C\otimes I)\bar{P}_u]\cdots].
\end{equation*}
Note that $|\lambda_{a_j}\cdots \lambda_{a_1}|\leq 1$, and $[E_{a_j},\cdots[E_{a_1},\bar{P}_u(\bar{O}^{(u-1)}_C\otimes I)\bar{P}_u]\leq 2^j\|\bar{O}^{(u-1)}_C\|$. Therefore
\begin{equation}
\label{eq:upper_bound_j_nested_comm}
    \|\mathrm{ad}_H^j(\bar{P}_u(\bar{O}^{(u-1)}_C\otimes I)\bar{P}_u) \|\leq 2^j|\mathcal{A}_j|\|\bar{O}^{(u-1)}_C\|.
\end{equation}
We then count $|\mathcal{A}_j|$: for $a_1$, by Definition \ref{defn:low_intersection_ham}, there are at most $\mathfrak{d}+1$ choices because this is the number of terms that overlap with $C$ (which is the support of a certain term in $H$), and for $a_2$, there are at most $2(\mathfrak{d}+1)$ choices, because the second operator can either overlap with $C$ or the first operator. Going until $a_j$, we can see that we have at most $j!\mathfrak{d}^j$ choices. Consequently $|\mathcal{A}_j|\leq j!(\mathfrak{d}+1)^j$. Substituting this into \eqref{eq:upper_bound_j_nested_comm} and further into the remainders terms in \eqref{eq:taylor_exp_single_step}, we have
\begin{equation}
\label{eq:upper_bound_remainder_taylor}
    \sum_{j=2}^{\infty}\frac{\tau^j}{j!}\mathbb{E}[\|\bar{P}_u \mathrm{ad}_H^j(\bar{P}_u(\bar{O}^{(u-1)}_C\otimes I)\bar{P}_u)\bar{P}_u\|]\leq \sum_{j=2}^{\infty}(2(\mathfrak{d}+1)\tau)^j\|\bar{O}^{(u-1)}_C\| = \frac{(2(\mathfrak{d}+1)\tau)^2}{1-2(\mathfrak{d}+1)\tau }\|\bar{O}^{(u-1)}_C\|,
\end{equation}
For the first two terms in \eqref{eq:taylor_exp_single_step} corresponding to $j=0,1$, we compute what they are:
\begin{equation}
\begin{aligned}
&\sum_{j=0}^{1}\frac{(i\tau)^j}{j!}\mathbb{E}[\bar{P}_u \mathrm{ad}_H^j(\bar{P}_u(\bar{O}^{(u-1)}_C\otimes I)\bar{P}_u)\bar{P}_u] \\
&=\bar{O}^{(u-1)}_C\otimes I +i\tau [H^C_{\mathrm{eff}},\bar{O}^{(u-1)}_C]\otimes I = \bar{O}^{(u)}_C\otimes I,
\end{aligned}
\end{equation}
where in the first equality we have used \eqref{eq:condition_decoupled_ham}, and in the second equality \eqref{eq:bar_OC_defn}. Substituting this and \eqref{eq:upper_bound_remainder_taylor} into \eqref{eq:taylor_exp_single_step}, we have
\begin{equation}
    \label{eq:Mu_upper_bound_incomplete}
    \|M^{(u-1)}\| = \|\mathbb{E}[(\bar{P}_u e^{iH\tau}\bar{P}_u)(\bar{O}^{(u-1)}_C\otimes I)(\bar{P}_u e^{-iH\tau}\bar{P}_u)] - \bar{O}^{(u)}_C\otimes I\|\leq \frac{(2(\mathfrak{d}+1)\tau)^2}{1-2(\mathfrak{d}+1)\tau }\|\bar{O}^{(u-1)}_C\|.
\end{equation}

It still remains to bound $\|\bar{O}^{(u-1)}_C\|$. To simplify our discussion, we note that for sufficiently small $\mathfrak{d}\tau$ (smaller than a constant), $\frac{(2(\mathfrak{d}+1)\tau)^2}{1-2(\mathfrak{d}+1)\tau }\leq A_1\mathfrak{d}^2\tau^2$ for some constant $A_1$. From \eqref{eq:Mu_upper_bound_incomplete}, we have
\begin{equation}
\|\bar{O}^{(u)}_C\|\leq \|\mathbb{E}[(\bar{P}_u e^{iH\tau}\bar{P}_u)(\bar{O}^{(u-1)}_C\otimes I)(\bar{P}_u e^{-iH\tau}\bar{P}_u)]\| + A_1\mathfrak{d}^2\tau^2\|\bar{O}^{(u-1)}_C\|\leq (1+A_1\mathfrak{d}^2\tau^2)\|\bar{O}^{(u-1)}_C\|.
\end{equation}
Combining this with the assumption that $\|O_C\|\leq 1$, we have
\begin{equation} \label{eq:barOC_norm}
    \|\bar{O}^{(u)}_C\|\leq (1+A_1\mathfrak{d}^2\tau^2)^u\leq (1+A_1\mathfrak{d}^2t^2/r^2)^r\leq 2,
\end{equation}
for sufficiently small $\mathfrak{d}^2t^2/r$. Therefore
\begin{equation}
    \label{eq:Mu_upper_bound}
    \|M^{(u-1)}\| = \|\mathbb{E}[(\bar{P}_u e^{iH\tau}\bar{P}_u)(\bar{O}^{(u-1)}_C\otimes I)(\bar{P}_u e^{-iH\tau}\bar{P}_u)] - \bar{O}^{(u)}_C\otimes I\|\leq A_2\mathfrak{d}^2\tau^2,
\end{equation}
for some constant $A_2$.

With this we can now bound $R^{(u)}$. By \eqref{eq:defn_Ru}, we have
\begin{equation}
\|R^{(u)}\|\leq \|M^{(u-1)}\| + \|R^{(u-1)}\|.
\end{equation}
Therefore
\begin{equation}
    \label{eq:bound_Ru}
    \|R^{(u)}\| \leq \sum_{l=1}^{u-1}\|M^{(l)}\| \leq A_2u\mathfrak{d}^2\tau^2.
\end{equation}
In particular
\begin{equation}
    \label{eq:bound_Rr}
    \|R^{(r)}\| \leq  A_2\frac{\mathfrak{d}^2t^2}{r},
\end{equation}
which proves the lemma.
\end{proof}

\subsection{The error in the local dynamics}
\label{sec:local_dynamics_err}

We now prove \eqref{eq:OCt_and_barOC_diff}, which we restate in the following lemma:
\begin{lem}
\label{lem:OCt_and_barOC_diff}
Under the same assumptions as in Theorem \ref{thm:err_bound_heisenberg_picture}, we have
\begin{equation*}
    \|O_C(t)-\bar{O}_C^{(r)}\|=\Or\left(\frac{t^2}{r}\right),
\end{equation*}
where $O_C(t)$ and $\bar{O}^{(r)}_C$ are defined in \eqref{eq:OC_dynamics_limit} and \eqref{eq:bar_OC_defn} respectively.
\end{lem}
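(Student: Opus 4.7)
The plan is to treat this as a standard forward-Euler error analysis for the Heisenberg-picture ODE $\dot{O}(s) = i[H_{\mathrm{eff}}^C, O(s)]$ with initial condition $O(0) = O_C$. The exact solution at time $t$ is $O_C(t)$, and $\bar{O}_C^{(r)}$ defined in \eqref{eq:bar_OC_defn} is precisely the Euler iterate with step $\tau = t/r$. The key prerequisite is that $C_0 \triangleq \|H_{\mathrm{eff}}^C\|$ is a constant independent of $N$: this holds because $H_{\mathrm{eff}}^C$ is supported on $|C| \leq k = \mathcal{O}(1)$ qubits, and in the setting of the paper its Pauli coefficients are convex combinations of the bounded $\lambda_a$'s, so $C_0 \leq 4^k = \mathcal{O}(1)$.

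First I would compare a single step of the exact flow to the Euler step using Taylor expansion. Writing $\mathcal{A} \triangleq i\,\mathrm{ad}_{H_{\mathrm{eff}}^C}$, the exact one-step map acts as $e^{\tau\mathcal{A}} X = e^{i\tau H_{\mathrm{eff}}^C} X e^{-i\tau H_{\mathrm{eff}}^C}$, and Euler's map is $I + \tau\mathcal{A}$. Since $\|\mathcal{A} X\| \leq 2C_0 \|X\|$, the remainder satisfies
\begin{equation}
\bigl\|\bigl(e^{\tau\mathcal{A}} - I - \tau\mathcal{A}\bigr) X\bigr\| \;\leq\; \sum_{j\geq 2} \frac{\tau^j (2C_0)^j}{j!}\,\|X\| \;\leq\; C_1 \tau^2 \|X\|
\end{equation}
for some constant $C_1$, whenever $\tau$ is small enough (say $\tau \leq 1/(4C_0)$), which holds for all sufficiently large $r$.

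Next I would set up the telescoping recursion for the error $e^{(u)} \triangleq O_C(t_u) - \bar{O}_C^{(u)}$. Adding and subtracting $e^{\tau\mathcal{A}} \bar{O}_C^{(u-1)}$ yields
\begin{equation}
e^{(u)} \;=\; e^{\tau\mathcal{A}} e^{(u-1)} \;+\; \bigl(e^{\tau\mathcal{A}} - I - \tau\mathcal{A}\bigr)\bar{O}_C^{(u-1)} .
\end{equation}
The first term has the same operator norm as $e^{(u-1)}$ since conjugation by a unitary is an isometry in operator norm; this is the crucial observation that prevents exponential amplification of previously-accumulated error. The second term has norm at most $C_1 \tau^2 \|\bar{O}_C^{(u-1)}\|$. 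A one-line side induction using $\|\bar{O}_C^{(u)}\| \leq (1 + 2C_0\tau)\|\bar{O}_C^{(u-1)}\|$ and $\|O_C\| \leq 1$ gives $\|\bar{O}_C^{(u-1)}\| \leq (1 + 2C_0 t/r)^{r} \leq e^{2 C_0 t}$. Summing $\|e^{(u)}\| \leq \|e^{(u-1)}\| + C_1 \tau^2 e^{2C_0 t}$ from $u=1$ to $r$ with $e^{(0)} = 0$ gives $\|e^{(r)}\| \leq C_1 e^{2C_0 t}\, t^2/r = \mathcal{O}(t^2/r)$, matching the claim.

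The main obstacle is verifying that the hidden constant really is $N$-independent. The only place $N$ could enter is through $\|H_{\mathrm{eff}}^C\|$ or through $\|\bar{O}_C^{(u-1)}\|$, and both are controlled above using that $H_{\mathrm{eff}}^C$ and $O_C$ are supported on $C$ with $|C| = \mathcal{O}(1)$. A secondary technical point is that the constant carries a factor $e^{2C_0 t}$; this is harmless in Algorithm \ref{alg:ham_learn} where each qDRIFT segment has length $t = \mathcal{O}(1)$, but if a cleaner $t$-independent constant were desired one could instead use the identity $\bar{O}_C^{(r)} = \sum_k \binom{r}{k}\tau^k \mathcal{A}^k O_C$ and compare termwise against the Taylor series of $e^{t\mathcal{A}}$, noting $|\binom{r}{k}\tau^k - t^k/k!| \leq k^2 t^k/(2 r \, k!)$. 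No higher-order Trotter remainder analysis is needed since $\bar{O}_C^{(r)}$ is by definition the first-order Euler iterate.
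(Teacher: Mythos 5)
Your proof follows essentially the same path as the paper — compare the exact Heisenberg flow $O_C(t)$ to the Euler iterate $\bar{O}_C^{(r)}$, telescope, and control the one-step Taylor remainder — but your intermediate bound $\|\bar{O}_C^{(u)}\|\leq e^{2C_0 t}$ is too loose, and this is a genuine gap rather than a cosmetic one. The lemma is fed into Theorem \ref{thm:num_rand_pauli_needed}, which reads off $r_0 = \mathcal{O}(t^2/\varepsilon)$; in Algorithm \ref{alg:ham_learn} the evolution time $t$ entering this lemma is the \emph{total} length of a robust-phase-estimation experiment, which scales as $\Theta(\epsilon^{-1})$, not the $\mathcal{O}(\tau)$ length of a single qDRIFT segment. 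Your bound $\mathcal{O}(e^{2C_0 t}\,t^2/r)$ would therefore force $r_0 = \mathcal{O}(e^{\Theta(\epsilon^{-1})}\,t^2)$ and wreck the gate-count estimate in Theorem \ref{thm:ham_learn_partial_pauli_twirling}.

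The paper sidesteps this by invoking $\|\bar{O}_C^{(u)}\|\leq 2$ from \eqref{eq:barOC_norm}, which was established in the proof of Lemma \ref{lem:decoupling_err_bound} via a per-step growth factor $(1+\mathcal{O}(\tau^2))$ rather than $(1+\mathcal{O}(\tau))$. The fix is in fact already latent in your own telescoping identity: rearranging $e^{(u)} = e^{\tau\mathcal{A}} e^{(u-1)} + (e^{\tau\mathcal{A}} - I - \tau\mathcal{A})\bar{O}_C^{(u-1)}$ gives
\begin{equation}
\bar{O}_C^{(u)} = e^{\tau\mathcal{A}}\bar{O}_C^{(u-1)} - (e^{\tau\mathcal{A}} - I - \tau\mathcal{A})\bar{O}_C^{(u-1)},
\end{equation}
and since the conjugation $e^{\tau\mathcal{A}}$ is an isometry while the second term has norm $\leq C_1\tau^2\|\bar{O}_C^{(u-1)}\|$ (by the very estimate you already derived), one gets $\|\bar{O}_C^{(u)}\|\leq (1+C_1\tau^2)^u\leq e^{C_1 t^2/r}$, which is $\mathcal{O}(1)$ precisely in the regime $t^2/r = \mathcal{O}(1)$ where the error bound is small and useful. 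Replacing your $(1+2C_0\tau)$ step with this $(1+C_1\tau^2)$ step removes the exponential factor and recovers the paper's $t$-independent constant. Note also that the alternative binomial-series comparison you propose does not help: the sum $\sum_k k^2 (2C_0 t)^k / k!$ is $\Theta\bigl((C_0 t)^2 e^{2C_0 t}\bigr)$, so the same exponential factor reappears.
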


\begin{proof}

Thanks to the Taylor's theorem, one has
\begin{equation}\label{eq:barOC_taylor}
\bar{O}_C^{(u)} = \bar{O}_C^{(u-1)} + i \tau [H^C_\mathrm{eff}, \bar{O}_C^{(u-1)}] = e^{i H^C_\mathrm{eff} \tau} \bar{O}_C^{(u-1)}  e^{-i H^C_\mathrm{eff} \tau} + \bar{R}^{(u-1)},
\end{equation}
where
\begin{equation}\label{eq:barOC_barR}
    \bar{R}^{(u-1)}  = \int_0^{\tau} e^{i H^C_\mathrm{eff} s} [H^C_\mathrm{eff} ,[H^C_\mathrm{eff} ,\bar{O}_C^{(u-1)} ]] e^{-i H^C_\mathrm{eff} s} (\tau-s)\,ds.
\end{equation}
Here by \eqref{eq:barOC_norm}, one has
\begin{equation}\label{eq:norm_barR}
   \| \bar{R}^{(u-1)} \| \leq \| H^C_\mathrm{eff}\|^2 \tau^2 =  \| H^C_\mathrm{eff}\|^2 \frac{t^2}{r^2}.
\end{equation}
Denote $t_u = u \tau$ with $u = 0, 1, \cdots, r$ so that $t_r = t$. The difference between $\bar{O}_C^{(u)}$ and $O_C(t_u)$ can be written as
\begin{equation*}
    \bar{O}_C^{(u)} - O_C(t_u) = e^{i H^C_\mathrm{eff} \tau} \left( \bar{O}_C^{(u-1)} - O_C(t_{u-1})\right) e^{-i H^C_\mathrm{eff} \tau} 
    + \bar{R}^{(u-1)}. 
\end{equation*}
Taking the norm on both sides, we have
\begin{equation*}
    \|\bar{O}_C^{(u)} - O_C(t_u)\| \leq \|\bar{O}_C^{(u-1)} - O_C(t_{u-1})\| + \| \bar{R}^{(u-1)}\|.
\end{equation*}
It then follows from \eqref{eq:norm_barR} and $\bar{O}_C^{(0)} - O_C(0) = 0$ that
\begin{equation*}
    \|\bar{O}_C^{(r)} - O_C(t)\| \leq \sum_{u = 0}^{r-1}\| \bar{R}^{(u)}\| \leq \| H^C_\mathrm{eff}\|^2 \frac{t^2}{r}.
\end{equation*}
It only remains to show that $\|H^C_\mathrm{eff}\|=\Or(1)$.
Recall that $H^C_\mathrm{eff}$ comes from the effective Hamiltonian $\mathbb{E}[\bar{P}H\bar{P}]$ in \eqref{eq:condition_decoupled_ham}. For each term $E_a$ in $H$, $\bar{P}E_a \bar{P}$ preserves its support because $\bar{P}$ and $E_a$ are both Pauli operators. Therefore
\begin{equation}
H^C_\mathrm{eff} = \sum_{a:\mathrm{Supp} E_a\subset C} \lambda_a \mathbb{E}[\bar{P}E_a\bar{P}].
\end{equation}
From this, and $|\lambda_a|\leq 1$, we have 
\begin{equation}
\|H^C_\mathrm{eff}\|\leq |\{a\in[M]:\mathrm{Supp} E_a\subset C\}| \leq 4^{|C|}\leq 4^k=\Or(1).
\end{equation} 
Therefore we have proved the lemma.
\end{proof}

\section{Reshaping Hamiltonians using Trotterization}
\label{sec:trotter}

Here we consider reshaping the unknown $N$-qubit Hamiltonian $H$ using the second-order Trotter formula. The main idea is the following: In the randomization approach we have constructed an effective Hamiltonian that is a sum of exponentially (in $N$) many terms of the form $PHP$, and here we will consider implementing a similar sum using the 2nd-order Trotter formula. Importantly, this time the sum involves a number of terms that is independent of $N$.

\subsection{Decoupling the dynamics using Trotterization}
\label{sec:decouple_trotter}

First we define a new graph known as a qubit interaction graph.

\begin{defn}[Qubit interaction graph]
\label{defn:qubit_interaction_graph}
First denote $\mathcal{A}_c=\bigsqcup_{C\in\mathcal{V}_c}C$, for $c\in[\chi]$ and $\mathcal{V}_c$ defined in Lemma \ref{lem:coloring_cluster_int_graph}.
The qubit interaction graph corresponding to color $c\in[\chi]$ is defined to be $\mathcal{G}_q^c=(\mathcal{V}_q^c,\mathcal{E}_q^c)$, where $\mathcal{V}_q^c=[N]\setminus\mathcal{A}_c$ contains the qubits that are not contained in $\mathcal{A}_c$, and for any $\alpha,\alpha'\in\mathcal{V}_q^c$ $(\alpha,\alpha')\in\mathcal{E}_q^c$ iff there exists $E_a$ such that $\alpha,\alpha'\in\mathrm{Supp}(E_a)$ and $\mathrm{Supp}(E_a)\cap \mathcal{A}_c\neq \emptyset$.
\end{defn}

We also need to color this graph. The number of colors is given by the following lemma.

\begin{lem}
\label{lem:color_qubit_interaction_graph}
$\mathcal{G}_q^c$ admits a coloring with $\chi_q$ colors. Here $\chi_q\leq (\mathfrak{d}+1)(k-2)+1$ ($\mathfrak{d}$ and $k$ are defined in Definition \ref{defn:low_intersection_ham}).
\end{lem}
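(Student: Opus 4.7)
The plan is to bound the maximum degree of the qubit interaction graph $\mathcal{G}_q^c$ and then invoke the standard greedy-coloring fact that any graph with maximum degree $\Delta$ admits a proper coloring using $\Delta+1$ colors. So the whole task reduces to showing $\operatorname{deg}(\mathcal{G}_q^c) \leq (\mathfrak{d}+1)(k-2)$.

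Fix a vertex $\alpha \in \mathcal{V}_q^c$. Any neighbor $\alpha' \in \mathcal{V}_q^c$ of $\alpha$ is certified by some Pauli term $E_a$ whose support contains both $\alpha$ and $\alpha'$ and additionally satisfies $\operatorname{Supp}(E_a)\cap \mathcal{A}_c \neq \emptyset$, per Definition \ref{defn:qubit_interaction_graph}. I would bound the number of neighbors by bounding separately (i) the number of such terms $E_a$ containing $\alpha$, and (ii) how many neighbors each such $E_a$ can contribute.

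For (i), by Definition \ref{defn:low_intersection_ham} any fixed term $E_{a_0}$ overlaps with at most $\mathfrak{d}$ other $E_b$'s. Since every term containing $\alpha$ overlaps with any other term containing $\alpha$, fixing one such $E_{a_0}$ forces the remaining ones to lie among these $\mathfrak{d}$ neighbors, yielding at most $\mathfrak{d}+1$ terms whose support contains $\alpha$. For (ii), $\operatorname{Supp}(E_a)$ has size at most $k$; one of those qubits is $\alpha$ itself (which lies in $\mathcal{V}_q^c$, hence \emph{not} in $\mathcal{A}_c$), and the edge condition forces at least one additional qubit of $\operatorname{Supp}(E_a)$ to lie in $\mathcal{A}_c$. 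Thus at most $k-2$ qubits of $\operatorname{Supp}(E_a)$ remain as potential elements of $\mathcal{V}_q^c\setminus\{\alpha\}$. Multiplying the two bounds gives
\begin{equation*}
\operatorname{deg}_{\mathcal{G}_q^c}(\alpha) \leq (\mathfrak{d}+1)(k-2),
\end{equation*}
uniformly in $\alpha$.

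Applying greedy coloring to $\mathcal{G}_q^c$ then produces a proper coloring using at most $(\mathfrak{d}+1)(k-2)+1$ colors, which is the claimed value of $\chi_q$. There is no real obstacle here; the only point that requires a moment of care is distinguishing the two roles played by the definition of $\mathcal{E}_q^c$, namely that the certifying term must both (a) contain two vertices in $\mathcal{V}_q^c$ and (b) reach into $\mathcal{A}_c$, as it is condition (b) that is responsible for the tighter factor $k-2$ rather than $k-1$.
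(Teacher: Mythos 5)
Your proof is correct and follows essentially the same route as the paper: bound the degree of $\mathcal{G}_q^c$ by $(\mathfrak{d}+1)(k-2)$ via the product of at most $\mathfrak{d}+1$ certifying terms through $\alpha$ and at most $k-2$ eligible qubits per term, then invoke greedy coloring. The only difference is that you spell out why at most $\mathfrak{d}+1$ terms can contain $\alpha$ (any two such terms overlap at $\alpha$), a point the paper asserts without elaboration.
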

\begin{proof}
We will prove that $\mathrm{deg}(\mathcal{G}_q^c)\leq (\mathfrak{d}+1)(k-2)$, and as a result $\chi_q\leq\mathrm{deg}(\mathcal{G}_q^c)+1\leq  (\mathfrak{d}+1)(k-2)+1$.
For each qubit $\alpha$, there are at most $\mathfrak{d}+1$ $E_a$'s such that they act non-trivially on $\alpha$ and on at least one qubit in $\mathcal{A}_c$. Each $E_a$ acts non-trivially on at most $k-1$ other qubits, one of which must be in $\mathcal{A}_c$. Therefore there are at most $(\mathfrak{d}+1)(k-2)$ choices of $\alpha'$ such that $(\alpha,\alpha')\in\mathcal{E}_q^c$.
\end{proof}
We number the colors using $[\chi_q]$. We also denote by $c_q(\alpha)$ the color of qubit $\alpha$.
We now show that, for each color $c$ of the cluster interaction graph $\mathcal{G}$ (Definition \ref{defn:cluster_int_graph}), we can choose Pauli operators $P$ from a set $\mathcal{P}_c$ (to be specified later) of size $4^{\chi_q}$ such that
\begin{equation}
\label{eq:average_dynamics_trotter}
    \frac{1}{|\mathcal{P}_c|}\sum_{P\in\mathcal{P}_c}PHP = \sum_{C\in\mathcal{V}_c}H_C + H_{\mathrm{res}},
\end{equation}
where $H_{\mathrm{res}}$ is supported on $[N]\setminus\mathcal{A}_c$.
We then implement the sum on the right-hand side using the second-order Trotter formula (higher-order formulae will involve evolving backward in time and is therefore unrealistic in our setting). To be more precise, we implement
\begin{equation}
\prod^{\rightarrow}_{1\leq \nu \leq |\mathcal{P}_c|} e^{-iP_{\nu}HP_{\nu} \tau'} \prod^{\leftarrow}_{1\leq \nu \leq |\mathcal{P}_c|} e^{-iP_{\nu}HP_{\nu} \tau'} = \prod^{\rightarrow}_{1\leq \nu \leq |\mathcal{P}_c|} P_{\nu}e^{-iH \tau'}P_{\nu} \prod^{\leftarrow}_{1\leq \nu \leq |\mathcal{P}_c|} P_{\nu}e^{-iH \tau'}P_{\nu},
\end{equation}
where we order the elements in $\mathcal{P}_c$ so that $\mathcal{P}_c=\{P_{\nu}\}$, and $\tau'=\tau/(2|\mathcal{P}_c|)$.
and this will approximate $e^{-i(\sum_{C\in\mathcal{V}_c}H_C+H_{\mathrm{res}})\tau}$ up to second order (with a remainder of order $\tau^3$).

Now we will discuss how to choose $\mathcal{P}_c$. We define $\mathcal{P}_c$ as follows:
\begin{equation}
    \label{eq:defn_Pc}
    \mathcal{P}_c=\Big\{\prod_{\alpha\in[N]\setminus\mathcal{A}_c}\gamma(c_q(\alpha))_{\alpha}:\gamma\in\{I,X,Y,Z\}^{[\chi_q]}\Big\},
\end{equation}
where $\mathcal{A}_c = \bigsqcup_{C\in\mathcal{V}_c} C$ as defined in Definition \ref{defn:qubit_interaction_graph}, and $c_q(\alpha)$ is the color of qubit $\alpha$ in the coloring of the qubit interaction graph.
To see why \eqref{eq:average_dynamics_trotter} is true, let us look at each Pauli terms $E_a$ of $H$. In the first case, if the support of $E_a$ is contained in $\mathcal{A}_c$, then $[P,E_a]=0$ for all $P\in \mathcal{P}_c$. This is because the support of each $P$ does not overlap with $\mathcal{A}_c$ by definition. Consequently $PE_aP=E_a$, and 
\begin{equation}
\label{eq:average_dynamics_preserved_Ea}
    \frac{1}{|\mathcal{P}_c|}\sum_{P\in\mathcal{P}_c}PE_a P = E_a.
\end{equation}
In the second case, if the support of $E_a$ is not contained in $\mathcal{A}_c$, but overlaps with $\mathcal{A}_c$, then we denote $\operatorname{supp}(E_a)\setminus \mathcal{A}_c=\{\alpha_1,\alpha_2,\cdots,\alpha_l\}$ where $l\leq k$. By Definition \eqref{defn:qubit_interaction_graph}, in a coloring of the graph $\alpha_1,\alpha_2,\cdots,\alpha_l$ are all colored differently because they are all linked to each other. Therefore, if we uniformly randomly draw a Pauli operator from $\mathcal{P}_c$, each Pauli operator on $\alpha_1,\alpha_2,\cdots,\alpha_l$ will be chosen independently. From this we can see, just like previously for the randomization method, half of the Pauli operators in $\mathcal{P}_c$ commute with $E_a$ and the other half anti-commute. As a result 
\begin{equation}
\label{eq:average_dynamics_cancelled_Ea}
    \frac{1}{|\mathcal{P}_c|}\sum_{P\in\mathcal{P}_c}PE_a P = 0.
\end{equation}
In the third case, if the support of $E_a$ is disjoint from $\mathcal{A}_c$, then each $PE_a P$ also acts trivially on $\mathcal{A}_c$. We group these terms into the residual term $H_{\mathrm{res}}$.
Combining \eqref{eq:average_dynamics_preserved_Ea} and \eqref{eq:average_dynamics_preserved_Ea} we have \eqref{eq:average_dynamics_trotter}.

\subsection{Isolating the diagonal Hamiltonian using Trotterization}
\label{sec:diagonal_trotter}

For each cluster $C\in\mathcal{V}_c$, we want to learn the diagonal elements of $H_C$ with respect to a Pauli eigenbasis indexed by $\gamma_C\in\{0,x,y,z\}^C$, as defined in Definition \ref{defn:Pauli_eigenbasis}. 

To this end, for a set of Pauli eigenbases indexed by $\{\gamma_C\}_{C\in\mathcal{V}_c}$, we define
\begin{equation}
    \label{eq:defn_Qcgamma}
    \mathcal{Q}_c = \Big\{\prod_{C\in\mathcal{V}_c}\prod_{\alpha\in C}(\gamma_C(\alpha)_{\alpha})^{b_{\zeta_C(\alpha)}}:b_1,b_2,\cdots,b_{k}\in\{0,1\}\Big\},
\end{equation}
where $\zeta_C:C\to[|C|]$ is an arbitrary fixed ordering of $C$.
Then we will have
\begin{equation}
\label{eq:average_dynamics_trotter_diag}
    \frac{1}{|\mathcal{Q}_c||\mathcal{P}_c|}\sum_{Q\in \mathcal{Q}_c}\sum_{P\in\mathcal{P}_c}QPHPQ = \sum_{C\in\mathcal{V}_c}H^C_{\mathrm{diag}}(\gamma_C)+H_{\mathrm{res}},
\end{equation}
where 
\begin{equation}
    H^C_{\mathrm{diag}}(\gamma_C) = \frac{1}{|\mathcal{Q}_c|}\sum_{Q\in \mathcal{Q}_c} QH_C Q.
\end{equation}
is the diagonal of the Hamiltonian $H_C$ with respect to the Pauli eigenbasis indexed by $\gamma_C$. It is the same Hamiltonian as given in \eqref{eq:new_effective_ham_diag}.
Therefore, to extract the diagonal Hamiltonian, we can implement
\begin{equation}
\label{eq:defn_U_tau}
    \begin{aligned}
    \mathcal{U}(\tau) &=\prod^{\rightarrow}_{1\leq \nu\leq |\mathcal{R}_c|} e^{-iP_{\nu}HP_{\nu} \tau'} \prod^{\leftarrow}_{1\leq \nu\leq |\mathcal{R}_c|} e^{-iP_{\nu}HP_{\nu} \tau'} \\
    &= \prod^{\rightarrow}_{1\leq \nu\leq |\mathcal{R}_c|} P_{\nu}e^{-iH \tau'}P_{\nu} \prod^{\leftarrow}_{1\leq \nu\leq |\mathcal{R}_c|} P_{\nu}e^{-iH \tau'}P_{\nu} \\
    &\approx e^{-i(\sum_{C\in\mathcal{V}_c}H^C_{\mathrm{diag}}(\gamma_C)+H_{\mathrm{res}})\tau} = e^{-iH_{\mathrm{res}}\tau}\prod_{C\in\mathcal{V}_c} e^{-iH^C_{\mathrm{diag}}(\gamma_C)\tau},
    \end{aligned}
\end{equation}
where $\mathcal{R}_c=\{PQ:P\in\mathcal{P}_c,Q\in\mathcal{Q}_c\}=\{P_{\nu}\}$, $\tau'=\tau/(2|\mathcal{R}_c|)$, and in $\approx$ we neglected all terms that are of order $\tau^3$ or higher.

The number of Pauli operators needed  to implement a step for a short time $\tau$ scale linearly with $|\mathcal{R}_c|=|\mathcal{P}|_c|\mathcal{Q}_c|$. Because, by Lemma \ref{lem:color_qubit_interaction_graph},
\begin{equation}
    |\mathcal{P}_c|= 4^{\chi_q}\leq 4^{(\mathfrak{d}+1)(k-2)+1},\quad |\mathcal{Q}_c|\leq 2^k,
\end{equation}
we have 
\begin{equation}
    |\mathcal{R}_c|\leq 4^{(\mathfrak{d}+1)(k-2)+k/2+1}.
\end{equation}
It is important to note that $|\mathcal{R}_c|$ is independent of the system size $N$.

Below we estimate how many Trotter steps are needed to make the actual dynamics close to limiting dynamics. The proof of this theorem is given in Section \ref{sec:deviate_limit_dynamics_trotter}.

\begin{thm}[Number of Trotter steps needed]
\label{thm:num_trotter_steps_needed}
Assume that $H$ is a low-intersection Hamiltonian defined in Definition \ref{defn:low_intersection_ham}, and $\mathcal{V}=\bigsqcup_{c\in[\chi]}\mathcal{V}_c$ is a coloring according to Lemma \ref{lem:coloring_cluster_int_graph}, $c\in[\chi]$, and $\gamma_C\in\{X,Y,Z\}^C$ for each $C\in\mathcal{V}_c$. 

Let $\mathcal{U}(\tau)$ be defined in \eqref{eq:defn_U_tau}, and let $\rho(t)=\mathcal{U}(\tau)^r\rho(0)(\mathcal{U}(\tau)^{\dagger})^r$ be the state of the quantum system at time $t$ after being initialized in state $\rho(0)$.
Then there exists $r_0=\Or(t^{3/2}/\varepsilon^{1/2})$ such that for any $r>r_0$, such that for any $C$ and $O_C$ supported on $C$, with $\|O_C\|\leq 1$, we have
\begin{equation}
    \Big|\Tr[(O_C\otimes I)\rho(t)]-\Tr[O_C e^{-iH_{\mathrm{diag}}^C(\gamma_C) t}\rho_C e^{iH_{\mathrm{diag}}^C(\gamma_C) t}]\Big|\leq \varepsilon,
\end{equation}
where $\rho_C=\Tr_{[N]\setminus C}\rho(0)$.
\end{thm}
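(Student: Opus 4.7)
The plan is to mirror the two-step decomposition used in the proof of Theorem~\ref{thm:err_bound_heisenberg_picture}, replacing the randomized reshaping by the deterministic symmetric second-order Trotter step $\mathcal{U}(\tau)$. Working in the Heisenberg picture, set $O_C^{(u)} = (\mathcal{U}(\tau)^{\dagger})^u (O_C \otimes I) \mathcal{U}(\tau)^u$ and $O_C(t) = e^{i H_{\mathrm{diag}}^C(\gamma_C) t} O_C e^{-i H_{\mathrm{diag}}^C(\gamma_C) t}$. Introduce an auxiliary cluster-local sequence $\bar O_C^{(u)}$ that evolves according to a truncated second-order step generated by $H_{\mathrm{diag}}^C(\gamma_C)$, namely
\begin{equation*}
\bar O_C^{(u)} = \bar O_C^{(u-1)} + i\tau \bigl[ H_{\mathrm{diag}}^C(\gamma_C), \bar O_C^{(u-1)} \bigr] - \tfrac{\tau^2}{2} \bigl[ H_{\mathrm{diag}}^C(\gamma_C), [ H_{\mathrm{diag}}^C(\gamma_C), \bar O_C^{(u-1)} ] \bigr],
\end{equation*}
with $\bar O_C^{(0)} = O_C$. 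Splitting by the triangle inequality, the task reduces to bounding the decoupling error $\| O_C^{(r)} - \bar O_C^{(r)} \otimes I\|$ and the local dynamics error $\| \bar O_C^{(r)} - O_C(t) \|$.

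First I would dispose of the local dynamics error. Since $|C|=\Or(1)$, the local effective Hamiltonian satisfies $\|H_{\mathrm{diag}}^C(\gamma_C)\| \leq 4^{|C|} = \Or(1)$ by the argument at the end of the proof of Lemma~\ref{lem:OCt_and_barOC_diff}. The per-step Taylor remainder of the exact single-cluster evolution against the second-order truncation is $\Or(\tau^3 \|H_{\mathrm{diag}}^C(\gamma_C)\|^3) = \Or(\tau^3)$, with constants independent of $N$. Summing over $r$ steps yields $\| \bar O_C^{(r)} - O_C(t) \| = \Or(r\tau^3) = \Or(t^3/r^2)$, exactly as in Lemma~\ref{lem:OCt_and_barOC_diff} but now with a cubic rather than quadratic step error.

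Next I would handle the decoupling error by unfolding a recursion that mirrors Lemma~\ref{lem:decoupling_err_bound}. Define $M^{(u-1)} = \mathcal{U}(\tau)^{\dagger}(\bar O_C^{(u-1)} \otimes I)\mathcal{U}(\tau) - \bar O_C^{(u)} \otimes I$ and $R^{(u)} = M^{(u-1)} + \mathcal{U}(\tau)^{\dagger} R^{(u-1)} \mathcal{U}(\tau)$ with $R^{(0)}=0$, so that $O_C^{(r)} = \bar O_C^{(r)} \otimes I + R^{(r)}$ and $\| R^{(r)} \| \leq \sum_u \| M^{(u-1)} \|$. To bound $\| M^{(u-1)} \|$, expand the single-step conjugation by $\mathcal{U}(\tau)$ in powers of $\tau'=\tau/(2|\mathcal{R}_c|)$; because the symmetric product is a second-order Trotter formula for $\{P_\nu H P_\nu\}$, its zeroth, first, and second-order pieces exactly match the nested commutators of $\tfrac{1}{|\mathcal{R}_c|}\sum_{R\in\mathcal{R}_c} RHR = \sum_{C'\in\mathcal{V}_c} H^{C'}_{\mathrm{diag}}(\gamma_{C'}) + H_{\mathrm{res}}$ acting on $\bar O_C^{(u-1)} \otimes I$. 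Terms from $H^{C'}_{\mathrm{diag}}(\gamma_{C'})$ for $C'\neq C$ and from $H_{\mathrm{res}}$ drop out because their supports are disjoint from $C$ and commute with $\bar O_C^{(u-1)} \otimes I$, while the $H^C_{\mathrm{diag}}(\gamma_C)$ contributions reproduce the update to $\bar O_C^{(u)}$. Only the $\Or(\tau^3)$ Taylor remainder remains, and the combinatorial bound $|\mathcal{A}_j| \leq j!(\mathfrak{d}+1)^j$ from the proof of Lemma~\ref{lem:decoupling_err_bound} (applied now to nested commutators of the $P_\nu H P_\nu$'s with a cluster-local operator) shows that the remainder is $\Or(\tau^3)$ with a constant independent of $N$. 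Summing yields $\|R^{(r)}\| = \Or(r\tau^3) = \Or(t^3/r^2)$, so by the triangle inequality $\| O_C^{(r)} - O_C(t) \otimes I \| = \Or(t^3/r^2)$, and the analogue of Corollary~\ref{cor:err_bound_expectation_val} gives the same bound on expectation values. Demanding this to be at most $\varepsilon$ forces $r \geq r_0 = \Or(t^{3/2}/\varepsilon^{1/2})$.

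The main obstacle will be executing the second-order cancellation inside $M^{(u-1)}$ cleanly. A naive first-order Trotter step would leave uncancelled $\Or(\tau^2)$ commutators involving cross-cluster terms, and a global bound on these would scale with $N$; it is the symmetric structure of the chosen formula, together with the identity in \eqref{eq:average_dynamics_trotter_diag}, that ensures the $\Or(\tau^2)$ piece is purely diagonal-within-$C$. Care is also needed in the combinatorial counting of non-vanishing nested commutators of $j$ factors of $\{P_\nu H P_\nu\}$ acting on $\bar O_C^{(u-1)}\otimes I$: each factor enlarges the relevant support by at most $k$ qubits, and only those with support touching the running support contribute, giving the $j!(\mathfrak{d}+1)^j$-type bound and hence a convergent $\tau^3$-tail with $N$-independent constant.
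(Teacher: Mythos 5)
Your proposal is correct, but it follows a structurally different route from the paper's. The paper's proof of Theorem~\ref{thm:num_trotter_steps_needed} goes via Theorem~\ref{thm:err_bound_heisenberg_picture_trotter}: it rewrites $\mathcal{U}(\tau)$ as a time-ordered evolution under a piecewise-constant Hamiltonian, reduces the $r$-step operator error to a single-step local truncation error by the telescoping/unitarity argument, and then compares the Dyson expansion of $\mathcal{U}(\tau)^\dagger (O_C\otimes I)\mathcal{U}(\tau)$ term-by-term with the Taylor expansion of the exact effective conjugation $e^{iH^C_{\mathrm{eff}}\tau}O_C e^{-iH^C_{\mathrm{eff}}\tau}\otimes I$, showing the first three orders coincide and bounding the $\Or(\tau^3)$ tail via the same $j!(\mathfrak{d}+1)^j$ counting. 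You instead transplant the two-lemma architecture of the randomization proof (Lemmas~\ref{lem:decoupling_err_bound} and~\ref{lem:OCt_and_barOC_diff}): you introduce a deterministic cluster-local intermediate sequence $\bar O_C^{(u)}$ — now a \emph{second-order} Taylor truncation of $\mathrm{ad}_{H^C_{\mathrm{diag}}}$ rather than the paper's first-order Euler step — and decompose the error as a decoupling piece $\|O_C^{(r)} - \bar O_C^{(r)}\otimes I\|$ (controlled by the $M^{(u)}/R^{(u)}$ recursion and a Dyson expansion of the single Trotter step on a cluster-local input) plus a local dynamics piece $\|\bar O_C^{(r)} - O_C(t)\|$. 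Both routes yield $\Or(t^3/r^2)$ with an $N$-independent constant; the paper's is shorter and closer to standard Trotter error analysis, whereas yours makes the ``decoupling'' mechanism explicit via the residual $R^{(u)}$ and has the pedagogical benefit of treating the randomization and Trotter analyses in a single uniform framework. Two small points you would need to flesh out when writing this up: (i) a uniform bound $\|\bar O_C^{(u)}\| = \Or(1)$ for the second-order truncated recursion (the analogue of \eqref{eq:barOC_norm}, which the paper proves only for the first-order truncation), obtainable since the truncation error per step is $\Or(\tau^3)$ so $\|\bar O_C^{(u)}\|\le(1+\Or(\tau^3))^u$ is bounded when $t^3/r^2 = \Or(1)$; and (ii) the second-order cancellation inside $M^{(u-1)}$ must be carried out in the Dyson expansion of the time-ordered unitary $\mathcal{U}(\tau)$, which requires the symmetric (palindromic) ordering of the $P_\nu$'s to reproduce $-\tfrac{\tau^2}{2}\mathrm{ad}^2_{H^C_{\mathrm{diag}}}$ exactly — you flag this but should verify it by the same reindexing calculation the paper performs for the $\Or(\tau^2)$ term.
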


In our Hamiltonian learning algorithm, we only need to ensure that the actual dynamics deviate from the limiting dynamics by a small constant. Therefore it suffices to choose $r=\Or(\epsilon^{-3/2})$ in the above theorem ($\epsilon$ is the precision for Hamiltonian parameters, and $\epsilon^{-1}$ is the evolution time needed for robust phase estimation), as opposed to $r=\Or(\epsilon^{-2})$ needed in the randomization approach. We summarize the costs of the Trotter-based approach in the following theorem
\begin{thm}[Learning many-body Hamiltonian by reshaping with Trotter formula]
    \label{thm:ham_learn_Trotter}
    Assume that $H$ is a low-intersection Hamiltonian defined in Definition \ref{defn:low_intersection_ham}. Then we can generate estimates $\{\hat{\lambda}_a\}$ for parameters $\{\lambda_a\}$ in \eqref{eq:Ham_gen_low_intersect}, such that
    \begin{equation}
    \Pr[|\hat{\lambda}_a-\lambda_a|>\epsilon]<\delta
    \end{equation}
    for all $a\in[M]$ with the following cost:
    \begin{enumerate}
        \item $\Or(\epsilon^{-1}\log(\delta^{-1}))$ total evolution time;
        \item $\Or(\polylog(\epsilon^{-1})\log(\delta^{-1}))$ number of experiments;
        \item $\Or(N\epsilon^{-3/2}\polylog(\epsilon^{-1})\log(\delta^{-1}))$ single-qubit Clifford gates.
    \end{enumerate}
     Moreover, this algorithm is robust against SPAM error.
\end{thm}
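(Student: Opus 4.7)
The plan is to reuse the overall algorithmic template of Algorithm~\ref{alg:ham_learn}, but replace every randomized reshaping step with the deterministic second-order Trotter operator $\mathcal{U}(\tau)$ defined in \eqref{eq:defn_U_tau}. Because the diagonal dynamics generated by $\sum_{C\in\mathcal{V}_c}H^C_{\mathrm{diag}}(\gamma_C)$ is exactly the same object that drove the robust phase estimation analysis in Section~\ref{sec:the_diagonal} (the residual term $H_{\mathrm{res}}$ is supported on $[N]\setminus\mathcal{A}_c$ and commutes with every $O_C$ we measure, so it contributes only a global phase on the sub-system we probe), all the bookkeeping from Sections~\ref{sec:the_diagonal} and~\ref{sec:est_all_bases_clusters} carries over verbatim. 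Thus the task reduces to verifying the hypothesis \eqref{eq:err_tolerance_in_thm} of Theorem~\ref{thm:total_evolution_time_assuming_const_err} with the Trotter implementation, after which that theorem immediately yields the claimed $\Or(\epsilon^{-1}\log(\delta^{-1}))$ total evolution time and $\Or(\polylog(\epsilon^{-1})\log(\delta^{-1}))$ experiment count.

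To discharge the hypothesis, I would invoke Theorem~\ref{thm:num_trotter_steps_needed} with the constant tolerance $\varepsilon = 1/\sqrt{8}$. This gives an $r_0 = \Or(t^{3/2})$ threshold for the number of Trotter steps such that, for every cluster $C\in\mathcal{V}_c$ and every observable $O_C$ with $\|O_C\|\leq 1$, the actual expectation value differs from the target diagonal-dynamics expectation value by at most $1/\sqrt{8}$. This is precisely the SPAM-compatible tolerance threshold exploited in the proof of Theorem~\ref{thm:ham_learn_partial_pauli_twirling}: robust phase estimation \cite{KimmelLowYoder2015robust} remains accurate as long as the two probabilities in \eqref{eq:probabilities_cos_sin} are perturbed by less than $1/\sqrt{8}$. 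Consequently the evolution-time and experiment-count bounds of Theorem~\ref{thm:total_evolution_time_assuming_const_err} apply unchanged, and the claimed first two bullet points follow.

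For the single-qubit Clifford gate count I would argue as follows. Within each experiment, the evolution time is $t=\Or(\epsilon^{-1})$, so the number of Trotter steps needed is $r = \Or(t^{3/2}) = \Or(\epsilon^{-3/2})$. Each step applies the product $\prod P_\nu e^{-iH\tau'}P_\nu \prod P_\nu e^{-iH\tau'}P_\nu$ from \eqref{eq:defn_U_tau}, which consists of $2|\mathcal{R}_c|$ Pauli conjugations; by Lemma~\ref{lem:color_qubit_interaction_graph} and the bound $|\mathcal{R}_c|\leq 4^{(\mathfrak{d}+1)(k-2)+k/2+1} = \Or(1)$, each conjugation is a tensor product of $\Or(N)$ single-qubit Clifford gates. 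Together with the $\Or(N)$ single-qubit Cliffords for preparing $U_{\xi\xi'}\ket{0^{|C|}}$, $V_{\xi\xi'}\ket{0^{|C|}}$ and the measurement basis rotation, each experiment thus uses $\Or(N\epsilon^{-3/2})$ single-qubit Cliffords. Multiplying by the $\Or(\polylog(\epsilon^{-1})\log(\delta^{-1}))$ total number of experiments gives the stated $\Or(N\epsilon^{-3/2}\polylog(\epsilon^{-1})\log(\delta^{-1}))$ bound.

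Finally, SPAM robustness is inherited from the same argument given at the end of the proof of Theorem~\ref{thm:ham_learn_partial_pauli_twirling}: the only way SPAM noise enters the analysis is through the probabilities in \eqref{eq:cos_prob}--\eqref{eq:sin_prob} that feed into robust phase estimation, and these remain within the $1/\sqrt{8}$ tolerance window whenever the SPAM-induced deviation and the Trotter-induced deviation are each taken to be a small enough constant. The only genuine new work in the whole argument is the replacement of the randomization error analysis by Theorem~\ref{thm:num_trotter_steps_needed}, which is the main obstacle and is precisely what Section~\ref{sec:deviate_limit_dynamics_trotter} addresses; everything else is routine bookkeeping built on top of the already-established general framework.
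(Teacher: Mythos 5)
Your proposal is correct and follows essentially the same route as the paper: invoke Theorem~\ref{thm:num_trotter_steps_needed} with a constant tolerance to get $r=\Or(t^{3/2})=\Or(\epsilon^{-3/2})$ Trotter steps, feed this through the general framework of Theorem~\ref{thm:total_evolution_time_assuming_const_err} to obtain the first two cost bounds, multiply $\Or(N\epsilon^{-3/2})$ Cliffords per experiment by the experiment count for the third, and reuse the SPAM-robustness argument from Theorem~\ref{thm:ham_learn_partial_pauli_twirling}. The paper leaves most of this implicit (stating only the choice $r=\Or(\epsilon^{-3/2})$ and deferring SPAM robustness to the earlier proof), so your write-up actually supplies more detail, but the underlying argument is identical.
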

The SPAM-robustness follows in a similar way as in the proof of Theorem \eqref{thm:ham_learn_partial_pauli_twirling}.

\section{Deviation from the limiting dynamics in Trotterization}
\label{sec:deviate_limit_dynamics_trotter}

In this section we will prove Theorem \ref{thm:num_trotter_steps_needed}. Following the discussion in Section \ref{sec:deviate_limit_dynamics}, we only need to prove the following theorem, providing an error bound for the evolution of an arbitrary local operator in the Heisenberg picture.
\begin{thm}
    \label{thm:err_bound_heisenberg_picture_trotter}
    We assume that $H$ is a low-intersection Hamiltonian as defined in Definition \ref{defn:low_intersection_ham}, let $\mathcal{U}(\tau)$ be as defined in \eqref{eq:defn_U_tau}, where the set of Pauli operators $\mathcal{R}_c$ satisfies
    \begin{equation}
    \label{eq:condition_decoupled_ham_trotter}
        \frac{1}{|\mathcal{R}_c|}\sum_{P\in\mathcal{R}_c}PHP = H_{\mathrm{eff}}^C + H_{\mathrm{env}},
    \end{equation}
    where $H_{\mathrm{eff}}^C$ is supported on a subsystem $C$ ($|C|=\Or(1)$) and $H_{\mathrm{env}}$ is supported on the rest of the system.
    Then
    \begin{equation}
    \label{eq:heisenberg_picture_err_bound_final_time_trotter}
        \Big\|\left(\mathcal{U}(\tau)^\dagger \right)^r \left(  O_C \otimes I \right) \mathcal{U}(\tau)^r - e^{iH^C_{\mathrm{eff}}t}O_C e^{-iH^C_{\mathrm{eff}}t}\otimes I \Big\| = \Or(t^3/r^2),
    \end{equation}
    for any $O_C$ supported on $C$ satisfying $\|O_C\|\leq 1$. In particular, the constant in $\Or(t^3/r^2)$ does not depend on the system size $N$ or the number of terms $M$.
\end{thm}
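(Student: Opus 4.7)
The plan is to mirror the two-part decomposition used to prove Theorem \ref{thm:err_bound_heisenberg_picture} in the randomization setting, but upgraded from first-order to second-order accuracy per step. Let $O_C^{(u)} = (\mathcal{U}(\tau)^\dagger)^u (O_C \otimes I) \mathcal{U}(\tau)^u$ be the actual Heisenberg-picture evolution after $u$ Trotter steps, let $O_C(t_u) = e^{iH_{\mathrm{eff}}^C t_u} O_C e^{-iH_{\mathrm{eff}}^C t_u}$ with $t_u = u\tau$ be the limiting local dynamics, and introduce the intermediate second-order Taylor iterate
\begin{equation*}
\bar{O}_C^{(u)} = \Bigl(I + i\tau\, \ad_{H_{\mathrm{eff}}^C} + \tfrac{(i\tau)^2}{2}\ad_{H_{\mathrm{eff}}^C}^{2}\Bigr)\bar{O}_C^{(u-1)}, \qquad \bar{O}_C^{(0)} = O_C.
\end{equation*}
A triangle inequality splits the total error into a decoupling error $\|O_C^{(r)} - \bar{O}_C^{(r)}\otimes I\|$ and a local-dynamics error $\|\bar{O}_C^{(r)} - O_C(t)\|$, and I will bound each by $\Or(t^3/r^2)$.

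For the decoupling error, I follow the template of Lemma \ref{lem:decoupling_err_bound}: define $M^{(u-1)} = \mathcal{U}(\tau)^\dagger (\bar{O}_C^{(u-1)}\otimes I)\mathcal{U}(\tau) - \bar{O}_C^{(u)} \otimes I$ and inductively write $O_C^{(u)} = \bar{O}_C^{(u)}\otimes I + R^{(u)}$, so that $\|R^{(r)}\| \leq \sum_{u=1}^{r}\|M^{(u-1)}\|$. To estimate $\|M^{(u-1)}\|$, Taylor expand the conjugation $\mathcal{U}(\tau)^\dagger(\cdot)\mathcal{U}(\tau)$ in $\tau$. The Strang splitting is second-order accurate, so the $\tau^0$, $\tau^1$, and $\tau^2$ terms of this expansion agree with those of conjugation by $e^{-i\tau(H_{\mathrm{eff}}^C + H_{\mathrm{env}})}$ — a direct consequence of the averaging identity \eqref{eq:condition_decoupled_ham_trotter} applied to the generator $\tfrac{1}{|\mathcal{R}_c|}\sum_\nu P_\nu H P_\nu = H_{\mathrm{eff}}^C + H_{\mathrm{env}}$. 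Since $[H_{\mathrm{env}}, \bar{O}_C^{(u-1)} \otimes I] = 0$, these three terms collapse exactly to $\bar{O}_C^{(u)}\otimes I$. The remaining $\tau^{\geq 3}$ contributions are bounded by the light-cone counting argument used for $|\mathcal{A}_j|$ in Lemma \ref{lem:decoupling_err_bound}: a depth-$j$ nested commutator anchored at $C$, drawn from the $2|\mathcal{R}_c|$ exponentials of the Trotter string, has at most $j!\,C_1^j$ nonzero contributions for a constant $C_1$ depending only on $\mathfrak{d},k,|\mathcal{R}_c|$ (all $\Or(1)$), producing a convergent geometric tail and $\|M^{(u-1)}\| = \Or(\tau^3)$ uniformly in $N$. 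Summing over $r$ steps gives $\|R^{(r)}\| = \Or(r\tau^3) = \Or(t^3/r^2)$.

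The local-dynamics error $\|\bar{O}_C^{(r)} - O_C(t)\|$ is a direct analogue of Lemma \ref{lem:OCt_and_barOC_diff}. Taylor's theorem applied through second order to $e^{iH_{\mathrm{eff}}^C\tau}(\cdot)e^{-iH_{\mathrm{eff}}^C\tau}$ gives $\bar{O}_C^{(u)} = e^{iH_{\mathrm{eff}}^C\tau}\bar{O}_C^{(u-1)}e^{-iH_{\mathrm{eff}}^C\tau} + \bar{R}^{(u-1)}$ with $\|\bar{R}^{(u-1)}\| = \Or(\|H_{\mathrm{eff}}^C\|^3\tau^3)$, and $\|H_{\mathrm{eff}}^C\| = \Or(1)$ via the Pauli-support argument at the end of Lemma \ref{lem:OCt_and_barOC_diff}. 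Telescoping then yields $\|\bar{O}_C^{(r)} - O_C(t)\| \leq \sum_{u=0}^{r-1}\|\bar{R}^{(u)}\| = \Or(t^3/r^2)$, and combining with the decoupling bound proves \eqref{eq:heisenberg_picture_err_bound_final_time_trotter}. The main obstacle is the $\tau^2$ bookkeeping in Step 1: in the qDRIFT case only the $\tau^0$ and $\tau^1$ orders had to be matched, and the effective Hamiltonian appeared automatically from the linear-in-$\tau$ term, whereas here the $\tau^2$ coefficient must be shown to equal $\tfrac{1}{2}\ad_{H_{\mathrm{eff}}^C + H_{\mathrm{env}}}^2$ rather than some residual combination of cross-commutators $[P_\nu H P_\nu, P_\mu H P_\mu]$ — this is precisely the statement that the Strang splitting has no second-order error, and it must be combined with the $N$-independent local counting rather than a generic norm Trotter bound that would reintroduce the system size.
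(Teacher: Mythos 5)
Your proposal is essentially correct, and it would give the claimed bound, but it takes a different organizational route from the paper. The paper telescopes directly between two \emph{unitary} conjugation maps: $\Phi(O) = \mathcal{U}(\tau)^\dagger O\, \mathcal{U}(\tau)$ and $\Psi(O) = (e^{iH_{\mathrm{eff}}^C\tau} \otimes I)\, O\, (e^{-iH_{\mathrm{eff}}^C\tau} \otimes I)$. Since both are norm-preserving, and $\Psi^{r-1-j}(O_C\otimes I)$ is always of the form $O'_C\otimes I$ with $\|O'_C\|\leq 1$, the global error is just $r$ times the one-step error $\sup_{\|O'_C\|\leq 1}\|\Phi(O'_C\otimes I) - \Psi(O'_C\otimes I)\|$. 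The paper then bounds this one-step error in a single Dyson-series computation, matching the $\tau^0,\tau^1,\tau^2$ coefficients (using the Strang symmetry $H_{j}$ reversal) and controlling the $\tau^{\geq 3}$ tail by the local light-cone counting. You instead introduce the non-unitary second-order polynomial iterate $\bar{O}_C^{(u)}$ and split the error into a decoupling piece and a local-dynamics piece, mirroring the qDRIFT Lemmas \ref{lem:decoupling_err_bound} and \ref{lem:OCt_and_barOC_diff}. This works — the $\tau^2$ collapse using $[H_{\mathrm{env}},\bar{O}_C^{(u-1)}\otimes I]=0$ is sound, since the inner commutator $[H_{\mathrm{eff}}^C,\bar{O}_C^{(u-1)}]$ is still supported on $C$ and so still commutes with $H_{\mathrm{env}}$ — and you correctly flag that the $\tau^{\geq 3}$ tail must use the $N$-independent counting rather than a generic $\|H\|$-norm Trotter bound. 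The extra bookkeeping your version requires, which the paper's avoids, is a uniform bound on $\|\bar{O}_C^{(u)}\|$ (the iterate is not norm-preserving); this is handled exactly as in \eqref{eq:barOC_norm} by writing $\bar{O}_C^{(u)}\otimes I = \mathcal{U}(\tau)^\dagger(\bar{O}_C^{(u-1)}\otimes I)\mathcal{U}(\tau) - M^{(u-1)}$ and using $\|M^{(u-1)}\|=\Or(\tau^3)\|\bar{O}_C^{(u-1)}\|$, but it is a step you should state explicitly. Net: same ingredients, same bottom line, slightly heavier bookkeeping than the paper's unitary-telescope shortcut.
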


\begin{proof}
We rewrite the $\mathcal{U}(\tau)$ defined in \eqref{eq:defn_U_tau} as $\mathcal{U}(\tau,0)$, and also write it as the unitary time evolution operator due to a time-dependent Hamiltonian:
\begin{equation}
     \mathcal{U}(\tau,0) = \mathcal{T} e^{-i \int_0^\tau \widetilde{H}(s) \, ds},
\end{equation}
where $\widetilde{H}(s)$ is a piecewise time-dependent Hamiltonian and is defined as follows: we divide the interval $[0, \tau]$ into $2|\mathcal{R}_c|$ sub-intervals, and one each sub-interval $\widetilde{H}$ equals $P H P$ with $P\in\mathcal{R}_c$ and then reverse the order.

Notice that, in order to prove \eqref{eq:heisenberg_picture_err_bound_final_time_trotter}, because the long time error grows linearly with respect to $r$ thanks to the triangle inequality and the unitarity of both underlying dynamics, it is sufficient to bound the one-step error (i.e., the local truncation error using the terminology of numerical analysis \cite{Leveque2007finite})
\begin{equation} \label{trotter_short_time_ob_err}
\norm{ \left( e^{iH_{\mathrm{eff}}^C \tau}  O_C e^{-iH_{\mathrm{eff}}^C \tau} \right) \otimes I -
\mathcal{U}(\tau,0)^\dagger \left(  O_C \otimes I \right) \mathcal{U}(\tau,0)  } = \norm{T_{\rm eff}(\tau)  - T_{\rm tr} (\tau) },
\end{equation}
where
\begin{equation}
T_{\rm eff}(\tau) = \left( e^{iH_{\mathrm{eff}}^C \tau}  O_C e^{-iH_{\mathrm{eff}}^C \tau} \right) \otimes I,\quad T_{\rm tr} (\tau) = \mathcal{U}(\tau,0)^\dagger \left(  O_C \otimes I \right) \mathcal{U}(\tau,0).
\end{equation}

We start by performing series expansion of both terms on the left-hand side in \eqref{trotter_short_time_ob_err}. By the Taylor's theorem, we have
\begin{align*}
    & e^{iH_{\mathrm{eff}}^Ct} O_C e^{-iH_{\mathrm{eff}}^Ct}
    \\
    = & O_C + i \tau [H_{\mathrm{eff}}^C, O_C] 
    - \frac{\tau^2}{2} [H_{\mathrm{eff}}^C, [H_{\mathrm{eff}}^C, O_C] 
    - i \int_0^\tau \frac{s^2}{2} e^{iH_{\mathrm{eff}}^Cs}
    [H_{\mathrm{eff}}^C, [H_{\mathrm{eff}}^C, [H_{\mathrm{eff}}^C, O_C]]  e^{-iH_{\mathrm{eff}}^Ct}\, ds.
\end{align*}
For the second term, a key observation is that 
\begin{equation}\label{eq:T_tr_time_ordering}
    T_{\rm tr}(\tau) = \mathcal{T} e^{i \int_0^\tau \ad_{H(t-s)} \, ds} \left( O_C \otimes I \right).
\end{equation} 
To see this, denote $O_C \otimes I$ as $O$ and
$F(t,s) : = \mathcal{U}(s,t) O \mathcal{U}(t,s) $, it follows from taking the derivative of $F(t,s)$ with respect to $s$ that
\begin{equation*}
    \partial_s F(s,t) = - i [H(s), F(s,t)] = - i \ad_{H(s)} F(s,t), \quad F(s = t, t) = O,
\end{equation*}
so that 
\begin{equation*}
    \partial_s F(t-s, t) =  i [H(t-s), F(t-s, t)] =  i \ad_{H(t-s)} F(t-s, t), \quad F(t-s, t)|_{s = 0} = O.
\end{equation*}
We now perform the Dyson series expansion to \eqref{eq:T_tr_time_ordering} and arrive at
\begin{align*}
    T_{\rm tr}(\tau) 
    & = \sum_{N = 0}^\infty
    \sum i^N \int_0^\tau d t_1 \int_0^{t_1} d t_2 \cdots \int_0^{t_{N-1}} d t_n
    \ad_{H(t-t_1)} \circ  \ad_{H(t-t_2)} \circ \cdots \circ \ad_{H(t-t_n)} (O)
    \\
    & = \sum_{N = 0}^\infty i^N \int_0^{\tau} d s_1 \int_{s_1}^t d s_2 \cdots \int_{s_{N-1}}^t ds_n
    [H(s_1), [H(s_2), \cdots, [H(s_n), O] \cdots ] ].
\end{align*}
Gathering terms of $\Or(\tau^0)$, one has $O_C \otimes I$. The terms of $\Or(\tau^1)$ read
\begin{equation*}
    i \int_0^\tau d s_1 [H(s_1), O_C \otimes I] = i  \left[\frac{\tau}{|\mathcal{R}_c|} \sum_{P \in \mathcal{R}_c} P HP , O_C \otimes I \right] 
    = i\tau [H_{\mathrm{eff}}^C, O_C] \otimes I.
\end{equation*}
The terms of $\Or(\tau^2)$ are
\begin{align*}
     &-\int_0^t d s_1 \int_{s_1}^t d s_2 [H(s_1), [H(s_2), O_C \otimes I ]
    = - \frac{\tau^2}{4 |\mathcal{R}_c|^2} \left[ \sum_{j = 1}^{2|\mathcal{R}_c|} H_j, \left[\sum_{l = j}^{|\mathcal{R}_c|} H_l, O_C \otimes I  \right] \right]
    \\
    =& - \frac{\tau^2}{4 |\mathcal{R}_c|^2} \left( \left[ \sum_{j = 1}^{|\mathcal{R}_c|} H_j, \left[2\sum_{l = 1}^{|\mathcal{R}_c|} H_l - \sum_{l = 1}^j H_l, O_C \otimes I  \right] \right]
    + \left[ \sum_{j = |\mathcal{R}_c|}^{2|\mathcal{R}_c|} H_j, \left[\sum_{l = 1}^j H_l, O_C \otimes I  \right] \right] 
    \right)
    \\
    =& - \frac{\tau^2}{2} [H_{\mathrm{eff}}^C, [H_{\mathrm{eff}}^C, O_C]] \otimes I
    ,
\end{align*}
where we used the fact that $H(s)$ is piece-wise constant, and we label its value on each piece as $H_j$ so that $H_{j+|\mathcal{R}_c|} =H_{|\mathcal{R}_c| - j}$ for $1 \leq j \leq |\mathcal{R}_c|$.
It can be seen that the first three terms of $T_{\rm tr}(\tau)$ match those of $T_{\rm eff}(\tau)$.
For the terms with $j\geq 3$, we note that for each $s$, $\widetilde{H}(s)=PHP$ for some Pauli operator $P$, and $\widetilde{H}(s)$ therefore consists of Pauli operators that have exactly the same supports as those in $H$. Consequently, using the same argument as \eqref{eq:j_nested_comm}-\eqref{eq:upper_bound_j_nested_comm}, each term can be bounded through
\begin{equation}
\|[H(s_1), [H(s_2), \cdots, [H(s_n), O] \cdots ] ]\|\leq j!(2(\mathfrak{d}+1))^j.
\end{equation}
As a result the sum of these terms is bounded by $\Or(\tau^3)$.
Also note that the last term of \eqref{eq:T_tr_time_ordering} can be bounded by $4\norm{H_{\mathrm{eff}}^C}^3\tau^3/3$, where $H_{\mathrm{eff}}^C=\Or(1)$ as argued in the proof of Lemma \ref{lem:effective_ham_isolate_diagonal}. Therefore, we can conclude that 
\begin{equation*}
    \norm{T_{\rm eff}(\tau)  - T_{\rm tr} (\tau) } \leq A_5 \tau^3,
\end{equation*}
for some constant $A_5$ independent of the system size,
and hence
\begin{equation} \label{trotter_ob_err_long_time_result}
\norm{ \left( e^{iH_{\mathrm{eff}}^Ct} O_C e^{-iH_{\mathrm{eff}}^Ct} \right) \otimes I -
\left(\mathcal{U}(\tau,0)^\dagger \right)^r  \left(  O_C \otimes I \right) \left(\mathcal{U}(\tau,0) \right)^r } \leq  A_5 r\tau^3 = A_5 \frac{t^3}{r^2},
\end{equation}
which establishes the claim of this theorem.
\end{proof}

\section{Lower bound for learning Hamiltonian from dynamics}
\label{sec:lower_bound}

In this section, we present a fundamental lower bound on the total evolution time for any learning algorithm that tries to learn an unknown Hamiltonian from dynamics.

\subsection{Model of quantum experiments}

We consider a unitary $U(t)$ parameterized by time $t$ that implements $e^{-i H t}$ for an unknown $N$-qubit Hamiltonian $H$.
A learning agent can access $U(t)$ by quantum experiments.
We define a single ideal quantum experiment as follows.
The definition resembles the formalism given in \cite{huang2022foundations}.

\begin{defn}[A single ideal experiment]
Given an unknown $N$-qubit unitary $U(t) = e^{-i H t}$ parameterized by time $t$.
A single ideal experiment $E^{(0)}$ is specified by:
\begin{enumerate}
    \item an arbitrary $N'$-qubit initial state $\ket{\psi_0} \in \mathbb{C}^{2^{N'}}$ with an integer $N' \geq N$,
    \item an arbitrary POVM $\mathcal{F} = \{ M_i \}_i$ on $N'$-qubit system,
    \item an $N'$-qubit unitary of the following form,
    \begin{equation}
    U_{K+1} (U(t_K) \otimes I) U_K \ldots U_3 (U(t_2) \otimes I) U_2 (U(t_1) \otimes I) U_1,
    \end{equation}
    for some arbitrary integer $K$, arbitrary evolution times $t_1, \ldots, t_K \in \mathbb{R}$, and arbitrary $N'$-qubit unitaries $U_1, \ldots, U_K, U_{K+1}$.
    Here $I$ is the identity unitary on $N'-N$ qubits.
\end{enumerate}
A single run of $E^{(0)}$ returns an outcome from performing the POVM $\mathcal{F}$ on the state
\begin{equation}
    U_{K+1} (U(t_K) \otimes I) U_K \ldots U_3 (U(t_2) \otimes I) U_2 (U(t_1) \otimes I) U_1 \ket{\psi_0}.
\end{equation}
The evolution time of the experiment is defined as $t(E^{(0)}) \triangleq \sum_k |t_k|$.
\end{defn}

The learning algorithm can adaptively choose each quantum experiment based on past measurement outcomes.
We consider the quantum experiments to have a small unknown state preparation and measurement (SPAM) error.
Given an initial state $\ket{\psi_0}$, the actual initial state being prepared on the quantum system is $\rho_{0}$, which is equal to $\ketbra{\psi_0}{\psi_0}$ up to a small constant error $\eta$ in the trace norm.
Similarly, given a POVM $\mathcal{F} = \{ M_i \}_i$, the actual POVM being measured is $\tilde{\mathcal{F}} = \{ \tilde{M}_i \}_i$, where $\tilde{M}_i$ is equal to $M_i$ up to a small constant error $\eta$ in the trace norm.
A small SPAM error is always present in any quantum experiment.
Measurement noises are particularly assured as measurements require the quantum system to interact with the macroscopic classical world, which often result in decoherence.
A practically useful algorithm for characterizing and benchmarking quantum systems \cite{knill2008randomized, magesan2011scalable, KimmelLowYoder2015robust, erhard2019characterizing, harper2020efficient, elben2022randomized, huang2022foundations} has to be robust against a small amount of SPAM error.

If a learning algorithm works for any small unknown SPAM error, then the learning algorithm must apply to experiments where only the measurement is subject to a small depolarizing noise $\eta$.
We give a single experiment with measurement noise $\eta$ in the following definition.
The lower bound will be proved assuming access to the experiments with a small measurement noise $\eta = \Theta(1)$.

\begin{defn}[A single experiment with measurement noise]
A single experiment $E^{(\eta)}$ with measurement noise $\eta = \Theta(1)$ is specified by the same parameters as a single ideal experiment $E^0$.
Given the ideal POVM $\mathcal{F} = \{M_i\}_i$, the measurement outcome of $E^{(\eta)}$ is obtained by performing the noisy POVM $\mathcal{F}^{(\eta)} = \{ (1 - \eta) M_i + \eta \Tr(M_i) (I / 2^{N'}) \}$ on the state
\begin{equation}
    U_{K+1} (U(t_K) \otimes I) U_K \ldots U_3 (U(t_2) \otimes I) U_2 (U(t_1) \otimes I) U_1 \ket{\psi_0}.
\end{equation}
The evolution time of the experiment is defined as $t(E^{(\eta)}) \triangleq \sum_k |t_k|$.
\end{defn}

We formally define a learning algorithm with total evolution time $T$ as follows.

\begin{defn}[Learning algorithm with bounded total evolution time]
Given $T > 0, 0.5 > \eta > 0$.
A learning algorithm with total evolution time $T$ and measurement noise $\eta$ can obtain measurement outcomes from an arbitrary number of experiments $E^{(\eta)}_1, E^{(\eta)}_2, \ldots$ as long as
\begin{equation}
    \sum_{i} t(E^{(\eta)}_i) \leq T.
\end{equation}
The parameters specifying each experiment $E^{(\eta)}_i$ can depend on the measurement outcomes from previous experiments $E^{(\eta)}_1, \ldots, E^{(\eta)}_{i-1}$.
\end{defn}

\subsection{Learning task and lower bound}

After defining the learning algorithm and the possible sets of experiments, we are ready to state the lower bound on the total evolution time required to learn an $N$-qubit Hamiltonian from dynamics.
The theorem is stated as follows.
This scaling matches that of our proposed learning algorithm.

\begin{thm} \label{thm-lowerbound}
Given two integers $N, M$, two real values $\epsilon, \delta \in (0, 1)$, and a set $\{E_1, \ldots, E_M\} \subseteq \{I, X, Y, Z\}^{\otimes N} \setminus \{I^{\otimes N}\}$ of $M$ Pauli operators.
Consider any learning algorithm with a total evolution time $T$ and a constant measurement noise $\eta \in (0, 0.5)$, such that for any $N$-qubit Hamiltonian $H = \sum_{a=1}^M \lambda_a E_a$ with unknown parameters $|\lambda_a| \leq 1$, after multiple rounds of experiments, the algorithm can estimate $\lambda_a$ to $\epsilon$-error with probability at least $1 - \delta$ for any $a \in \{1, \ldots, M\}$.
Then
\begin{equation}
    T \geq \frac{\log(1 / 2 \delta)}{2 \epsilon \log(1 / \eta)} = \Omega\left( \frac{\log(1 / \delta)}{\epsilon} \right).
\end{equation}
\end{thm}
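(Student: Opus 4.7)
The plan is to reduce the estimation problem to binary hypothesis testing, and then use a tree analysis of the KL divergence (with per-experiment bounds) to extract the $\log(1/\delta)$ factor. I would take $H_\pm = \pm \epsilon E_1$ with all other $\lambda_a = 0$ (and one may assume $E_1 = Z_1$ up to a local Clifford); since the two candidate values of $\lambda_1$ differ by $2\epsilon$, any $\epsilon$-accurate estimator with confidence at least $1-\delta$ post-processes (via the sign of $\hat\lambda_1$) into a test distinguishing $H_+$ from $H_-$ with error at most $\delta$. It therefore suffices to lower bound the total evolution time of any such distinguishing protocol.

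Next, I would represent the adaptive learner as a rooted tree in the style of \cite{huang2022quantum, chen2022exponential}: internal nodes are experiments and edges are noisy measurement outcomes. Let $p^\pm_v$ denote the probability of reaching node $v$ under $H_\pm$ and $Q^\pm_v$ the conditional outcome distribution at $v$. The single-experiment analysis is a hybrid argument over the $K$ Hamiltonian slots inside one experiment: using $\|e^{-i\epsilon t_k E_1} - e^{i\epsilon t_k E_1}\| \leq 2\epsilon t_k$ and the fact that the intervening controlled unitaries are common to both hypotheses, the pre-measurement states satisfy $\|\rho^+_v - \rho^-_v\|_1 \leq 2\epsilon t_v$, where $t_v = \sum_k |t_k|$ is the evolution time at $v$. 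Passing through the noisy POVM $\tilde M_j = (1-\eta)M_j + \eta\,\Tr(M_j)\,I/2^{N'}$ gives $\mathrm{TV}(Q^+_v, Q^-_v) \leq (1-\eta)\epsilon t_v$, while the same noise model provides the pointwise lower bound $Q^\pm_{v,j} \geq \eta\,\Tr(M_j)/2^{N'}$; together these yield a uniform per-edge log-likelihood ratio bound $|\log(Q^+_{v,j}/Q^-_{v,j})| = O(\log(1/\eta))$.

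The inductive step combines the edge estimates via the KL chain rule under adaptive sampling,
\[
\mathrm{KL}(p^+_{\mathrm{leaf}} \,\|\, p^-_{\mathrm{leaf}}) \;=\; \sum_{v \text{ internal}} p^+_v \, \mathrm{KL}(Q^+_v \,\|\, Q^-_v).
\]
Applying the elementary inequality $\mathrm{KL}(P\|Q) \leq \mathrm{TV}(P,Q) \cdot \max_j |\log(P_j/Q_j)|$ at each node gives per-experiment KL at most $O(\epsilon t_v \log(1/\eta))$, and summing (using $\sum_v p^+_v t_v \leq T$ by the total evolution time constraint) produces $\mathrm{KL}(p^+_{\mathrm{leaf}} \| p^-_{\mathrm{leaf}}) = O(\epsilon T \log(1/\eta))$. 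On the other hand, the data-processing inequality applied to the binary decision rule with error at most $\delta$ yields $\mathrm{KL}(p^+_{\mathrm{leaf}} \| p^-_{\mathrm{leaf}}) \geq \mathrm{KL}(\mathrm{Bern}(1-\delta) \| \mathrm{Bern}(\delta)) \geq \log(1/(2\delta))$. Chaining these two inequalities yields $T \geq \log(1/(2\delta))/(2\epsilon \log(1/\eta))$, which is the claim.

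The main obstacle is pinning down the per-edge log-likelihood bound $O(\log(1/\eta))$ with a constant independent of the ancilla count $N'$: the naive estimate from $Q^\pm_{v,j} \geq \eta/2^{N'}$ is $\log(2^{N'}/\eta)$, which carries an unwanted dimension factor. Removing it requires exploiting that within any fixed experiment, the component of $\rho^+_v - \rho^-_v$ that drives the measurement-statistics difference is constrained by its trace-norm budget $2\epsilon t_v$, so one can coarse-grain irrelevant outcomes and pass to an effective small-dimensional POVM whose noise-induced lower bound is $\Omega(\eta)$ rather than $\Omega(\eta/2^{N'})$. Formalizing this coarse-graining while preserving adaptivity (so that the KL chain rule still applies verbatim), and tracking the constants carefully, is the main technical work I anticipate in the proof.
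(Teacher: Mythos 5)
Your high-level plan shares the paper's skeleton — reduce to distinguishing $H_\pm = \pm\epsilon E_1$, represent the adaptive learner as a rooted tree, bound a per-experiment information quantity, and close with a two-point argument — but the inductive engine is different: you run a KL chain rule, whereas the paper runs the induction directly on total variation. The paper's key step is the elementary inequality $1 - (1-\eta)x \geq \eta^x$ for $\eta \in (0,0.5)$, $x \in [0,1]$, which, combined with the single-experiment bound $\mathrm{TV}(p_+^{(v)},p_-^{(v)}) \leq (1-\eta)\min(2\epsilon t^{(v)},1)$, yields the dimension-free multiplicative recursion $1 - \mathrm{TV}(p_+^{(\mathcal{T}_v)},p_-^{(\mathcal{T}_v)}) \geq \eta^{2\epsilon t(\mathcal{T}_v)}$ and then $2\delta \geq \eta^{2\epsilon T}$. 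No pointwise likelihood-ratio bound is ever needed.

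The gap in your route is precisely the one you flagged, but it is worse than you suggest and the coarse-graining idea does not repair it. The trouble is not the number of POVM outcomes; it is that individual POVM elements may have tiny trace, so that $\Tr(M_j)/2^{N'}$ can be as small as $2^{-N'}$ even for a two-outcome POVM with a rank-one $M_j$. Merging outcomes cannot change the trace of a given $M_j$ and therefore cannot lift the noise floor from $\eta\,\Tr(M_j)/2^{N'}$ to $\Omega(\eta)$. Concretely, take $\ket{\psi_+} = \sqrt{1-\tau^2}\,\ket{1} + \tau\ket{2}$, $\ket{\psi_-}=\ket{1}$, and the two-outcome POVM $\{I - \ketbra{2}{2},\ \ketbra{2}{2}\}$ on a $D=2^{N'}$-dimensional space. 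Then $q^+_2 = \tau^2$, $q^-_2 = 0$, $u_2 = 1/D$, so $Q^+_2 = (1-\eta)\tau^2 + \eta/D$, $Q^-_2 = \eta/D$, and
\begin{equation}
\mathrm{KL}(Q^+ \,\|\, Q^-) \ \geq\ Q^+_2 \log\!\frac{Q^+_2}{Q^-_2}\ \approx\ (1-\eta)\tau^2\,\log\!\frac{(1-\eta)\tau^2 D}{\eta},
\end{equation}
which grows like $\epsilon t_v \log D$ with the ancilla dimension, not $O(\epsilon t_v \log(1/\eta))$. So the per-node KL bound you need is false, and since the final bound would then degrade with $N'$ — which the learner controls — the argument does not close. (There are also minor constant issues: $\mathrm{KL}(P\|Q)\leq 2\,\mathrm{TV}(P,Q)\max_j|\log(P_j/Q_j)|$, not $\mathrm{TV}$ times the max; and $\mathrm{KL}(\mathrm{Bern}(1-\delta)\,\|\,\mathrm{Bern}(\delta)) = (1-2\delta)\log\frac{1-\delta}{\delta}$ is not $\geq \log(1/2\delta)$ near $\delta=1/4$.) To fix the proof, abandon the KL chain rule and run the tree induction on $1 - \mathrm{TV}$ as the paper does; that avoids any pointwise likelihood-ratio control and gives exactly the stated constant.
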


\noindent Even when the measurement noise $\eta = 10^{-10}$, we still have $T \geq \log(1 / 2\delta) / (40 \epsilon)$.

\subsection{Proof of Theorem~\ref{thm-lowerbound}}

The proof of the lower bound is separated into four parts.
The first part in Section~\ref{sec:reduction} reduces the learning problem to a binary distinguishing task.
The second part in Section~\ref{sec:TV-one} provides an upper bound for the total variation (TV) distance between the distribution over measurement outcomes under a single experiment.
The third part in Section~\ref{sec:TV-multi} uses a learning tree representation described in \cite{huang2022quantum, chen2022exponential} and provides an upper bound for the total variation distance between distribution over the leaf nodes of the tree.
The fourth part in Section~\ref{sec:finalize-lowerbd} utilizes LeCam's two-point method to turn the TV upper bounds into a lower bound for the total evolution time.

\subsubsection{Reduction}
\label{sec:reduction}

If a learning algorithm can achieve the original learning task considered in  Theorem~\ref{thm-lowerbound}, then it could solve a simpler learning task, where the unknown Hamiltonian $H$ can only be one of the following two choices.
The unknown $N$-qubit Hamiltonian $H$ is either $\epsilon E_1$ or $-\epsilon E_1$ with equal probability, where $E_1 \in \{I, X, Y, Z\}^{\otimes N} \setminus \{I^{\otimes N}\}$ is an $N$-qubit Pauli operator that is not an identity operator.
We denote $U_{\pm}(t)$ to be the unitary corresponding to evolution under the two Hamiltonians.
If there is a learning algorithm with a total evolution time at most $T$ that succeeds in the learning task stated in Theorem~\ref{thm-lowerbound}, then we can use the learning algorithm to successfully distinguish between $\pm \epsilon E_1$ with probability at least $1 - \delta$.
Hence, a lower bound on $T$ for this simpler learning task immediately implies a lower bound on $T$ for the original learning task.

We can characterize the diamond distance between the two unitaries $U_{\pm}(t)$.
For a unitary $U$, we consider $\mathcal{U}(\rho) = U \rho U^\dagger$ to be the corresponding quantum channel (CPTP map).

\begin{lem}[Diamond distance between $U_{\pm}(t)$] \label{lem:diamond-Upm}
$\norm{\mathcal{U}_{+}(t) - \mathcal{U}_-(t)}_\diamond \leq 4 \epsilon |t|.$
\end{lem}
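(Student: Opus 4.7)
\medskip
\noindent\textbf{Proof plan.} The plan is to reduce the diamond-norm bound to an operator-norm bound on $U_+(t) - U_-(t)$, and then use the fact that $E_1$ squares to the identity to compute the difference exactly.

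\medskip
\noindent First I would recall the standard inequality relating the diamond distance of two unitary channels to the operator norm of their Stinespring unitaries: for any unitaries $U, V$ acting on the same space,
\begin{equation}
    \norm{\mathcal{U} - \mathcal{V}}_\diamond \leq 2 \norm{U - V},
\end{equation}
where $\norm{\cdot}$ is the operator (spectral) norm. This follows from expanding $\mathcal{U}(\rho) - \mathcal{V}(\rho) = (U-V)\rho U^\dagger + V\rho(U-V)^\dagger$ and applying Hölder-type inequalities on any purification, together with $\norm{U}=\norm{V}=1$; this is classical and appears in standard quantum-information references.

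\medskip
\noindent Next I would compute $U_+(t) - U_-(t)$ explicitly. Since $E_1$ is a nonidentity $N$-qubit Pauli operator, we have $E_1^2 = I$, and therefore
\begin{equation}
    e^{-i \epsilon E_1 t} = \cos(\epsilon t)\, I - i \sin(\epsilon t)\, E_1.
\end{equation}
Applied with $H_\pm = \pm \epsilon E_1$, this gives $U_\pm(t) = \cos(\epsilon t) I \mp i \sin(\epsilon t) E_1$, so
\begin{equation}
    U_+(t) - U_-(t) = -2i \sin(\epsilon t)\, E_1.
\end{equation}
Because $E_1$ is a Pauli operator, $\norm{E_1} = 1$, and hence $\norm{U_+(t) - U_-(t)} = 2|\sin(\epsilon t)| \leq 2 \epsilon |t|$, where the last step uses $|\sin x| \leq |x|$.

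\medskip
\noindent Combining the two displayed bounds,
\begin{equation}
    \norm{\mathcal{U}_+(t) - \mathcal{U}_-(t)}_\diamond \leq 2 \norm{U_+(t) - U_-(t)} \leq 4 \epsilon |t|,
\end{equation}
which is the claim. There is no real obstacle here: the only ingredient one needs beyond routine facts is the closed-form exponential of a Pauli operator, which is available precisely because we have chosen $E_1$ to square to the identity. I would note in passing that the bound is tight up to the constant for small $\epsilon t$, which is what makes it the natural quantity to feed into the single-experiment total-variation analysis in Section~\ref{sec:TV-one}.
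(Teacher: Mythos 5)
Your proof is correct, and it takes a genuinely different route from the paper. The paper computes the spectrum of $U_+(t)^\dagger U_-(t)$ (which is $\{e^{2i\epsilon t}, e^{-2i\epsilon t}\}$) and then invokes the exact closed-form expression for the diamond distance between two unitary channels in terms of that spectrum (citing Nechita et al.\ and Watrous), yielding $\norm{\mathcal{U}_+(t)-\mathcal{U}_-(t)}_\diamond = \min(2\sin(2\epsilon|t|),2)$, which it then loosens to $4\epsilon|t|$. You instead use the elementary operator-norm bound $\norm{\mathcal{U}-\mathcal{V}}_\diamond \le 2\norm{U-V}$ and compute $U_+(t)-U_-(t) = -2i\sin(\epsilon t)E_1$ directly from the Pauli exponential, giving $4|\sin(\epsilon t)| \le 4\epsilon|t|$. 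Both arrive at exactly the same final bound. Your version is more self-contained since it avoids citing the exact unitary-channel diamond-distance formula, at the cost of a slightly looser intermediate quantity ($4|\sin(\epsilon t)|$ versus the paper's exact $2\sin(2\epsilon|t|)$, which is smaller by a factor of $\cos(\epsilon t)$); since Lemma~\ref{lem:TV-one-exp} only uses the final linear-in-$t$ bound, the distinction is immaterial for the lower-bound argument.
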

\begin{proof}
The spectrum of $U_{+}(t)^\dagger U_-(t)$ is given by $e^{i 2 \epsilon t}, e^{-i 2 \epsilon t}$.
From \cite{nechita2018almost, watrous2018theory}, we have
\begin{equation}
 \norm{\mathcal{U}_{+}(t) - \mathcal{U}_-(t)}_\diamond = 2 \sin (2 \epsilon |t|)   
\end{equation}
if $2 \epsilon |t| < \pi / 2$, otherwise we have
\begin{equation}
    \norm{\mathcal{U}_{+}(t) - \mathcal{U}_-(t)}_\diamond = 2.
\end{equation}
In both cases, we have $\norm{\mathcal{U}_{+}(t) - \mathcal{U}_-(t)}_\diamond \leq 4 \epsilon |t|.$
\end{proof}

\subsubsection{TV upper bound for a single experiment}
\label{sec:TV-one}

We begin by proving the upper bound on total variation distance for a single quantum experiment.

\begin{lem}[TV for one experiment] \label{lem:TV-one-exp}
Given an unknown unitary $U(t)$ equal to either $U_+(t)$ or $U_-(t)$, and a single experiment $E^{(\eta)}$ with measurement noise $\eta$ specified by the following parameters,
\begin{enumerate}
    \item an arbitrary $N'$-qubit initial state $\ket{\psi_0} \in \mathbb{C}^{2^{N'}}$ with an integer $N' \geq N$,
    \item an arbitrary POVM $\mathcal{F} = \{ M_i \}_i$ on $N'$-qubit system,
    \item an $N'$-qubit unitary of the following form,
    \begin{equation}
    U_{K+1} (U(t_K) \otimes I) U_K \ldots U_3 (U(t_2) \otimes I) U_2 (U(t_1) \otimes I) U_1,
    \end{equation}
    for some arbitrary integer $K$, arbitrary evolution times $t_1, \ldots, t_K$, and arbitrary $N'$-qubit unitaries $U_1, \ldots, U_K, U_{K+1}$.
    Here $I$ is the identity unitary on $N'-N$ qubits.
\end{enumerate}
Let $p_\pm(i)$ be the probability of obtaining the measurement outcome $i$ by performing $\mathcal{F}^{(\eta)} = \{ (1 - \eta) M_i + \eta \Tr(M_i) (I / 2^{N'}) \}$ on the output state when $U(t) = U_\pm(t)$.
Then
\begin{equation}
    \mathrm{TV}(p_+, p_-) \leq (1 - \eta) \min(2 \epsilon t(E^{(\eta)}), 1),
\end{equation}
where $t(E^{(\eta)}) = \sum_{k=1}^K |t_k|$ is the total evolution time in this single experiment $E^{(\eta)}$.
\end{lem}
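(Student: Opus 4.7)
The plan is to reduce the TV bound to a diamond-distance computation, with the noise factor $(1-\eta)$ extracted first from the structure of the noisy POVM.

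First, I would unpack the effect of the measurement noise. For any state $\rho$, the probability of outcome $i$ under $\mathcal{F}^{(\eta)}$ is
\begin{equation}
    \tr\!\left[\left((1-\eta) M_i + \eta \tr(M_i)\tfrac{I}{2^{N'}}\right) \rho\right] = (1-\eta)\tr(M_i \rho) + \eta \tfrac{\tr(M_i)}{2^{N'}}.
\end{equation}
The second summand is independent of $\rho$, so if $\rho_\pm$ denote the pre-measurement states when $U(t) = U_\pm(t)$, then $p_+(i) - p_-(i) = (1-\eta)\,\tr(M_i(\rho_+ - \rho_-))$. Summing absolute values and using that $\{M_i\}$ is a POVM gives
\begin{equation}
    \mathrm{TV}(p_+, p_-) = \tfrac{1}{2}\sum_i |p_+(i)-p_-(i)| \leq \tfrac{1-\eta}{2}\,\norm{\rho_+-\rho_-}_1.
\end{equation}

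Next, I would bound $\norm{\rho_+-\rho_-}_1$ by the diamond distance of the two full circuit channels. Let $\mathcal{E}_\pm$ be the CPTP maps from $\ketbra{\psi_0}{\psi_0}$ to $\rho_\pm$, obtained by composing the fixed unitaries $\mathcal{U}_1,\ldots,\mathcal{U}_{K+1}$ with the unknown channels $\mathcal{U}_\pm(t_k)\otimes \mathcal{I}$. Since composing on either side with a fixed channel cannot increase diamond distance, and using subadditivity under composition,
\begin{equation}
    \norm{\mathcal{E}_+ - \mathcal{E}_-}_\diamond \;\leq\; \sum_{k=1}^K \norm{(\mathcal{U}_+(t_k) - \mathcal{U}_-(t_k))\otimes \mathcal{I}}_\diamond \;=\; \sum_{k=1}^K \norm{\mathcal{U}_+(t_k)-\mathcal{U}_-(t_k)}_\diamond,
\end{equation}
where the last equality uses that the diamond norm is stable under tensoring with identity. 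Applying Lemma~\ref{lem:diamond-Upm} term by term yields $\norm{\mathcal{E}_+-\mathcal{E}_-}_\diamond \leq 4\epsilon \sum_k |t_k| = 4\epsilon\, t(E^{(\eta)})$. Because $\norm{\rho_+-\rho_-}_1 \leq \norm{\mathcal{E}_+-\mathcal{E}_-}_\diamond$ (the diamond norm maximizes over inputs including product extensions), we conclude $\tfrac{1}{2}\norm{\rho_+-\rho_-}_1 \leq 2\epsilon\, t(E^{(\eta)})$, hence $\mathrm{TV}(p_+,p_-) \leq (1-\eta)\cdot 2\epsilon\, t(E^{(\eta)})$.

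Finally, the $\min$ with $1$ in the statement is the trivial total-variation upper bound $\mathrm{TV}(p_+,p_-)\leq 1$, which combined with the factor $(1-\eta)$ extracted above gives $\mathrm{TV}(p_+,p_-)\leq (1-\eta)$ unconditionally; taking the minimum of the two estimates finishes the argument. There is no real obstacle here — the only subtlety is remembering that the noise contribution to each outcome is state-independent, which is what lets us pull out $(1-\eta)$ cleanly rather than just bounding $\mathrm{TV}$ by the noiseless trace distance. The subadditivity step is otherwise a standard unrolling of the interleaved circuit through the triangle inequality for $\norm{\cdot}_\diamond$.
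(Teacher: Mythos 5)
Your proof is correct and follows essentially the same approach as the paper: extract the $(1-\eta)$ factor by observing the noise contribution is state-independent, telescope the interleaved circuit via the triangle inequality to accumulate a sum of diamond distances, and invoke Lemma~\ref{lem:diamond-Upm} on each term. The only cosmetic difference is that you pull out the $(1-\eta)$ factor first and phrase the telescoping at the level of channel diamond norms, whereas the paper telescopes on the pure output states and applies the noise factor last; these are the same argument.
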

\begin{proof}
We define $\ket{\psi_{\pm}} = U_K U_{\pm}(t_K) \ldots U_3 U_{\pm}(t_2) U_2 U_{\pm}(t_1) U_1 \ket{\psi_0}$.
By triangle inequality and telescoping sum, we have the following upper bound on the trace distance,
\begin{equation}
    \norm{\ketbra{\psi_+}{\psi_+} - \ketbra{\psi_-}{\psi_-}}_1 \leq \sum_{k=1}^K \norm{U_{+}(t_k) - U_-(t_k)}_\diamond \leq 4 \epsilon \sum_{k=1}^K \left|t_k\right| = 4 \epsilon t(E^{(\eta)}).
\end{equation}
The second inequality follows from Lemma~\ref{lem:diamond-Upm}.
We can now upper bound the total variation distance for the classical probability distribution when we measure the final state using the ideal POVM measurement $\mathcal{F} = \{ M_i \}_i$,
\begin{equation}
    \frac{1}{2} \sum_{i} \left| \bra{\psi_+} M_i \ket{\psi_+} - \bra{\psi_-} M_i \ket{\psi_-} \right| \leq \frac{1}{2} \norm{\ketbra{\psi_+}{\psi_+} - \ketbra{\psi_-}{\psi_-}}_1 \leq 2 \epsilon t(E^{(\eta)}).
\end{equation}
Because the total variation distance is upper bounded by $1$, we have
\begin{equation}
    \frac{1}{2} \sum_{i} \left| \bra{\psi_+} M_i \ket{\psi_+} - \bra{\psi_-} M_i \ket{\psi_-} \right| \leq \min(2 \epsilon t, 1).
\end{equation}
When we measure using the noisy POVM $\mathcal{F}^{(\eta)} = \{ \tilde{M}_i =  (1 - \eta) M_i + \eta \Tr(M_i) (I / 2^{N'}) \}$ instead of $\mathcal{F}$,
the total variation distance between the measurement outcome distribution is
\begin{align} \label{eq:single-exp-lb}
    &\frac{1}{2} \sum_{i} \left| \bra{\psi_+} \tilde{M}_i \ket{\psi_+} - \bra{\psi_-} \tilde{M}_i \ket{\psi_-} \right|\\
    &= \frac{1}{2} (1 - \eta) \sum_{i} \left| \bra{\psi_+} M_i \ket{\psi_+} - \bra{\psi_-} M_i \ket{\psi_-} \right| \leq (1 - \eta) \min(2 \epsilon t(E^{(\eta)}), 1).
\end{align}
By definition, we have $p_{\pm}(i) = \bra{\psi_\pm} \tilde{M}_i \ket{\psi_\pm}$.
Hence, $\mathrm{TV}(p_+, p_-) \leq (1 - \eta) \min(2 \epsilon t(E^{(\eta)}), 1)$, which is the total variation distance between the measurement outcome distribution over the two Hamiltonians under a single experiment.
\end{proof}

\subsubsection{TV upper bound for many experiments}
\label{sec:TV-multi}

To handle adaptivity in the choice of experiments,
we consider the rooted tree representation $\mathcal{T}$ described in \cite{huang2022quantum, chen2022exponential}.
Each node in the tree corresponds to the sequence of measurement outcomes the algorithm has seen so far.
We can also think of the node as the memory state of the algorithm.
At each node $v$, the algorithm runs a single experiment $E^{(\eta)}_v$ with measurement noise $\eta$ specified by
\begin{enumerate}
    \item an arbitrary $N'_v$-qubit initial state $\ket{\psi_{v, 0}} \in \mathbb{C}^{2^{N'_v}}$ with an integer $N'_v \geq N$,
    \item an arbitrary POVM $\mathcal{F}_v = \{ M_{v, i} \}_{i=1}^{L_v}$ with $L_v$ outcomes on $N'_v$-qubit system,
    \item an $N'_v$-qubit unitary of the following form,
    \begin{equation}
    U_{v, K_v+1} (U(t_{v, K_v}) \otimes I) U_{v, K_v} \ldots U_{3, v} (U(t_{v, 2}) \otimes I) U_{v, 2} (U_{\pm}(t_{v, 1}) \otimes I) U_{v, 1},
    \end{equation}
    for some arbitrary integer $K_v$, arbitrary evolution times $t_{v, 1}, \ldots, t_{v, K_v} \in \mathbb{R}$, and arbitrary $N'_v$-qubit unitaries $U_{v, 1}, \ldots, U_{v, K+1}$.
    Here $I$ is the identity unitary on $N'_v-N$ qubits.
\end{enumerate}
Each experiment $E^{(\eta)}_v$ produces a measurement outcome $i \in \{1, \ldots, L_v\}$, which moves the algorithm from the node $v$ to one of its child node.
At a leaf node $\ell$, the algorithm stops.
By considering the rooted tree representation and allowing the experiment to depend on each node in the tree, we cover all possible learning algorithm that can adaptively choose the experiment that it runs based on previous measurement outcomes.

For each node $v$ on tree $\mathcal{T}$, we denote $p^{(\mathcal{T})}_\pm(v)$ to be the probability of arriving at the node $v$ in the experiments when the unknown unitary $U(t) = U_{\pm}(t)$ and the algorithm begins from the root of $\mathcal{T}$.
We can establish the following total variation upper bound.

\begin{lem}[TV for multiple experiments] \label{lem:tv-mult}
Consider a rooted tree representation $\mathcal{T}$ for a learning algorithm with total evolution time $T$ and measurement noise $\eta \in (0, 0.5)$.
We have
\begin{equation}
    \mathrm{TV}(p_+^{(\mathcal{T})}, p_-^{(\mathcal{T})}) \leq 1 - \eta^{2 \epsilon T},
\end{equation}
which is an upper bound for the total variation of the outcomes under multiple experiments.
\end{lem}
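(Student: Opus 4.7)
The plan is to bound the total variation distance via a coupling of the two tree-induced distributions $p_+^{(\mathcal{T})}$ and $p_-^{(\mathcal{T})}$, where at each node I apply a maximal coupling of the single-step outcome distributions and convert the additive per-experiment TV bound from Lemma~\ref{lem:TV-one-exp} into a \emph{multiplicative} lower bound on the coupling success probability.

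The first step is a pointwise inequality: for every $t \geq 0$ and $\eta \in (0,1)$,
\begin{equation}
1 - (1-\eta)\min(2\epsilon t,\,1) \;\geq\; \eta^{\,2\epsilon t}.
\end{equation}
When $2\epsilon t \leq 1$ this follows from convexity of $x \mapsto \eta^x$: the chord joining $(0,1)$ to $(1,\eta)$ lies above the curve, yielding $\eta^x \leq 1 - (1-\eta)x$ on $[0,1]$. When $2\epsilon t > 1$ the left side is $\eta$, which still dominates $\eta^{2\epsilon t}$ since $\eta < 1$. Combined with Lemma~\ref{lem:TV-one-exp}, this yields the per-node bound $1 - \mathrm{TV}(p_{v,+}, p_{v,-}) \geq \eta^{\,2\epsilon t_v}$ at every node $v$, where $t_v = t(E_v^{(\eta)})$.

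Next I would construct a joint distribution over pairs of root-to-leaf paths by greedy maximal coupling. Starting at the root with both paths at the same node, whenever the paths currently sit at a common node $v$ I draw a joint outcome $(i^+,i^-)$ from a maximal coupling $\pi_v$ of $(p_{v,+},p_{v,-})$, so that $\sum_i \pi_v(i,i) = 1 - \mathrm{TV}(p_{v,+}, p_{v,-}) \geq \eta^{\,2\epsilon t_v}$; once the outcomes disagree, the two paths continue independently. Let $R_v$ denote the probability, conditional on being coupled upon reaching $v$, that both paths ultimately reach the same leaf, and let $T_v$ denote the maximum total evolution time along any root-to-leaf continuation from $v$. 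I would prove by induction on depth (from the leaves up) that $R_v \geq \eta^{\,2\epsilon T_v}$, with base case $R_\ell = 1 = \eta^0$ and inductive step
\begin{equation}
R_v \;=\; \sum_i \pi_v(i,i)\, R_{v_i}
\;\geq\; \sum_i \pi_v(i,i)\, \eta^{\,2\epsilon T_{v_i}}
\;\geq\; \eta^{\,2\epsilon(T_v - t_v)} \sum_i \pi_v(i,i)
\;\geq\; \eta^{\,2\epsilon T_v},
\end{equation}
where I use $T_{v_i} \leq T_v - t_v$ together with the fact that $\eta^x$ is decreasing in $x$.

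Applied at the root this gives $\Pr[\text{both paths end at the same leaf}] \geq \eta^{\,2\epsilon T}$, and hence $\mathrm{TV}(p_+^{(\mathcal{T})}, p_-^{(\mathcal{T})}) \leq 1 - \eta^{\,2\epsilon T}$ by the standard coupling inequality. The step requiring the most care is the convexity estimate: the linear bound $(1-\eta)\min(2\epsilon t,1)$ supplied by Lemma~\ref{lem:TV-one-exp} is not directly compatible with a multiplicative recursion across the tree and becomes vacuous once $T \gtrsim 1/(\epsilon(1-\eta))$, whereas passing to the exponential lower bound $\eta^{\,2\epsilon t_v}$ is precisely what enables the telescoping $\prod_v \eta^{\,2\epsilon t_v} = \eta^{\,2\epsilon \sum t_v}$ along any root-to-leaf path and therefore makes the final bound scale with the total evolution time $T$ rather than with the number of experiments.
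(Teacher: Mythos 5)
Your proof is correct and essentially the same as the paper's: both are inductions over subtrees of the learning tree, using the convexity inequality $1 - (1-\eta)\min(2\epsilon t, 1) \geq \eta^{2\epsilon t}$ to convert the per-experiment additive TV bound from Lemma~\ref{lem:TV-one-exp} into a multiplicative recursion that telescopes along root-to-leaf paths. The paper phrases the inductive step via the identity $1 - \mathrm{TV}(p,q) = \sum_x \min(p(x), q(x))$ together with the factorization $\min(ab,cd) \geq \min(a,c)\min(b,d)$, while you use the equivalent explicit maximal-coupling construction; the two presentations coincide because, in a rooted tree, disagreement at any node forces the two paths into disjoint subtrees and hence distinct leaves.
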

\begin{proof}
For each node $v$, we give the following definitions,
\begin{itemize}
    \item $\mathcal{T}_v$ is the subtree with root $v$.
    \item $p_\pm^{(v)}$ is the distribution over the child nodes of $v$ by considering the probability of moving from $v$ to that child node under the unknown unitary $U_{\pm}(t)$.
    \item $p_\pm^{(\mathcal{T}_v)}$ is the distribution over the leaf nodes for subtree $\mathcal{T}_v$ by considering the probability of ending at that leaf node starting from node $v$ under the unknown unitary $U_{\pm}(t)$.
    \item $t^{(v)} \triangleq t(E^{(\eta)}_v) \geq 0$ is the evolution time for the single experiment $E^{(\eta)}_v$.
    \item $t(\mathcal{T}_v)$ is the maximum of the sum of the evolution time over all paths from root $v$ of the subtree $\mathcal{T}_v$ to a leaf node of $\mathcal{T}_v$,
    \begin{equation}
        t(\mathcal{T}_v) = \max_{P: \mathrm{path}\,\mathrm{on}\,\mathcal{T}_v} \sum_{w \in P} t(E^{(\eta)}_w).
    \end{equation}
    Because the total evolution time of the learning algorithm is upper bounded by $T$, the total evolution time of the full tree $\mathcal{T}$ satisfies $t(\mathcal{T}) \leq T$.
\end{itemize}
We will prove this lemma by an induction over the subtree of $\mathcal{T}$.
The inductive hypothesis is given as follows. For any subtree $\mathcal{T}_v$ with root $v$,
\begin{equation}
    1 - \mathrm{TV}(p_+^{(\mathcal{T}_v)}, p_-^{(\mathcal{T}_v)}) \geq \eta^{2 \epsilon t(\mathcal{T}_v)}.
\end{equation}
The base case is when $v$ is a leaf node.
At the leaf node $\ell$, we have $\mathrm{TV}(p_+^{(\mathcal{T}_\ell)}, p_-^{(\mathcal{T}_\ell)}) = 0$ and $t(\mathcal{T}_\ell) = 0$. Hence, the induction hypothesis holds.

To prove the inductive step, we define $\mathrm{child}(v)$ the be the set of child node of $v$ and recall the following identity on two probability distributions $p_{\pm}$ over a set $\mathcal{X}$,
\begin{equation}
    1 - \mathrm{TV}(p_+, p_-) = \sum_{x \in \mathcal{X}} \min\big( p_+(x), p_-(x) \big).
\end{equation}
We can obtain a lower bound on the failure probability for the node $v$ as follows,
\begin{align}
    &1 - \mathrm{TV}(p_+^{(\mathcal{T}_v)}, p_-^{(\mathcal{T}_v)}) \label{eqn:tv-mult-1} \\
    &= \sum_{\substack{\ell \in \mathrm{leaf}( \mathcal{T}_v ) }} \min\left( p_+^{(\mathcal{T}_v)}(\ell), p_-^{(\mathcal{T}_v)}(\ell) \right) \\
    &= \sum_{w \in \mathrm{child}(v)} \sum_{\substack{\ell \in \mathrm{leaf}( \mathcal{T}_w ) }} \min\left( p^{(v)}_+(w) p_+^{(\mathcal{T}_w)}(\ell), p^{(v)}_-(w) p_-^{(\mathcal{T}_w)}(\ell) \right) \\
    &\geq \sum_{w \in \mathrm{child}(v)} \min\left( p^{(v)}_+(w), p^{(v)}_-(w) \right) \sum_{\substack{\ell \in \mathrm{leaf}( \mathcal{T}_w ) }} \min\left(p_+^{(\mathcal{T}_w)}(\ell), p_-^{(\mathcal{T}_w)}(\ell) \right) \\
    &= \sum_{w \in \mathrm{child}(v)} \min\left( p^{(v)}_+(w), p^{(v)}_-(w) \right) \left(1 - \mathrm{TV}\left(p_+^{(\mathcal{T}_w}, p_-^{(\mathcal{T}_w)}\right)\right) \\
    &\geq \left( 1 - \mathrm{TV}(p_+^{(v)}, p_-^{(v)}) \right) \min_{w \in \mathrm{child}(v)} \left(1 - \mathrm{TV}\left(p_+^{(\mathcal{T}_w}, p_-^{(\mathcal{T}_w)}\right)\right). \label{eqn:tv-mult-end}
\end{align}
We can apply the induction hypothesis on $\mathcal{T}_w$ for $w \in \mathrm{child}(v)$.
This gives us
\begin{equation}
    1 - \mathrm{TV}(p_+^{(\mathcal{T}_v)}, p_-^{(\mathcal{T}_v)}) \geq \left( 1 - \mathrm{TV}(p_+^{(v)}, p_-^{(v)}) \right) \eta^{2 \epsilon t(\mathcal{T}_w)}.
\end{equation}
By definition, we have $t(\mathcal{T}_v) \geq t^{(v)} + t(\mathcal{T}_w)$ for any child node $w$ of $v$, hence
\begin{equation}
    1 - \mathrm{TV}(p_+^{(\mathcal{T}_v)}, p_-^{(\mathcal{T}_v)}) \geq \eta^{2 \epsilon \left( t(\mathcal{T}_v) - t^{(v)} \right)} \left( 1 - \mathrm{TV}(p_+^{(v)}, p_-^{(v)}) \right).
\end{equation}
From Lemma~\ref{lem:TV-one-exp} that bounds the total variation distance for a single experiment, we have
\begin{equation}\label{eqn:single-exp}
    \mathrm{TV}(p^{(v)}_+, p^{(v)}_-) \leq (1 - \eta) \min(2 \epsilon t^{(v)}, 1).
\end{equation}
Hence, we can obtain
\begin{align}
    1 - \mathrm{TV}(p_+^{(\mathcal{T}_v)}, p_-^{(\mathcal{T}_v)}) &\geq \eta^{2 \epsilon \left( t(\mathcal{T}_v) - t^{(v)} \right)} \left( 1 - (1 - \eta) \min(2 \epsilon t^{(v)}, 1) \right) \\
    &\geq \eta^{2 \epsilon \left( t(\mathcal{T}_v) - t^{(v)} \right)} \eta^{\min(2 \epsilon t^{(v)}, 1)} \\
    &\geq \eta^{2 \epsilon \left( t(\mathcal{T}_v) - t^{(v)} \right)} \eta^{2 \epsilon t^{(v)}} = \eta^{2 \epsilon t(\mathcal{T}_v)}.
\end{align}
The second inequality uses $1 - (1 - \eta) x \geq \eta^x$ for any $\eta \in (0, 0.5)$ and $x \in [0, 1]$, which follows from the convexity of $f(x) = \eta^x - 1 + (1 - \eta) x$ and the fact that $f(0) = f(1) = 0$. We have proved the inductive step.

Using induction and the fact that $t(\mathcal{T}) \leq T$, we have
\begin{equation}
    \mathrm{TV}(p_+^{(\mathcal{T})}, p_-^{(\mathcal{T})}) \leq 1 - \eta^{2 \epsilon T },
\end{equation}
which is the claimed result.
\end{proof}

\subsubsection{Lower bound from TV upper bound}
\label{sec:finalize-lowerbd}

From the reduction step in Section~\ref{sec:reduction}, for any learning algorithm with a total evolution time at most $T$ that succeeds in the learning task stated in Theorem~\ref{thm-lowerbound}, we can use the learning algorithm to successfully distinguish between $U_{\pm}(t)$ with probability at least $1 - \delta$.
By the construction of the rooted tree representation $\mathcal{T}$, after the multiple experiments, the only information the learning algorithm can access corresponds to a leaf node of the tree $\mathcal{T}$.
Hence, if the learning algorithm can distinguish between $U_{\pm}(t)$, then it can distinguish between the two probability distributions $p_+^{(\mathcal{T})}, p_-^{(\mathcal{T})}$ with probability at least $1 - \delta$.

Using LeCam's two point method, if there is an algorithm that can distinguish the two probability distributions $p_+^{(\mathcal{T})}, p_-^{(\mathcal{T})}$ with probability at least $1 - \delta$, then $1 - 2 \delta \leq \mathrm{TV}(p_+^{(\mathcal{T})}, p_-^{(\mathcal{T})})$.
Thus,
\begin{equation}
    2 \delta \geq \eta^{2 \epsilon T} \Longleftrightarrow T \geq \frac{\log(1 / 2\delta)}{2 \epsilon \log(1 / \eta)}.
\end{equation}
Recalling that $\eta \in (0, 0.5)$ is a constant close to $0$, we have
\begin{equation}
T = \Omega\left(\frac{\log(1 / \delta)}{\epsilon}\right).
\end{equation}
We have thus established Theorem~\ref{thm-lowerbound}.

\end{document}